\spnewtheorem{notation}[theorem]{Notation}{\bfseries}{\itshape}
\newcommand{\cPlus}{\ensuremath{\bm{+}}}         
\newcommand{\cTimes}{\ensuremath{\bullet}}       
\newcommand{\cModTimes}{\ensuremath{\circ}}    
\newcommand{\cmp}[1]{\ensuremath{\overline{#1}}} 
\newcommand{\Min}{\ensuremath{\mbox{\rm Min}}}
\newcommand{\Max}{\ensuremath{\mbox{\rm Max}}}
\newcommand{\Card}{\ensuremath{\mbox{\rm Card}}}
\newcommand{\Shove}{\ensuremath{\mbox{\rm Shove}}}
\newcommand{\Sum}{\ensuremath{\mbox{\rm Sum}}}
\newcommand{\Product}{\ensuremath{\mbox{\rm Prod}}}
\newcommand{\Fin}{\ensuremath{\mbox{\rm Fin}}}
\newcommand{\lev}{\ensuremath{\mbox{\rm lev}}}
\newcommand{\Primes}{\ensuremath{\mbox{\rm Primes}}}
\newcommand{\Evens}{\ensuremath{\mbox{\rm Evens}}}
\newcommand{\Pow}{\ensuremath{\mbox{\rm Pow}}}
\newcommand{\Res}{\ensuremath{\mbox{\rm Res}}}
\newcommand{\set}[1]{\ensuremath{\{#1\}}}
\newcommand{\N}{\ensuremath{\mathbb{N}}}         
\newcommand{\bbP}{\ensuremath{\mathbb{P}}}       
\newcommand{\cO}{\ensuremath{\mathcal{O}}}       
\newcommand{\cE}{\ensuremath{\mathcal{E}}}       
\newcommand{\cI}{\ensuremath{\mathcal{I}}}       
\newcommand{\cP}{\ensuremath{\mathcal{P}}}       
\newcommand{\cU}{\ensuremath{\mathcal{U}}}       
\newcommand{\PSPACE}{\mbox{\sc PSpace}}
\newcommand{\EXPTIME}{\mbox{\sc ExpTime}}
\newcommand{\BH}{\mbox{\sc BH}}
\begin{document}
\title{Functions Definable by Numerical Set-Expressions}
\titlerunning{Arithmetic Circuits}
\author{Ian Pratt-Hartmann\inst{1} \and Ivo D\"{u}ntsch\inst{2}}
\institute{School of Computer Science,
University of Manchester,
Manchester M13 9PL, U.K.
\email{ipratt@cs.man.ac.uk}
\and
Department of Computer Science,
Brock University,
St. Catharines, ON, L2S 3A1, Canada.
\email{duentsch@brocku.ca}}
\date{}
\maketitle
\begin{abstract}
\noindent
A {\em numerical set-expression} is a term specifying a cascade of
arithmetic and logical operations to be performed on sets of
non-negative integers. If these operations are confined to the usual
Boolean operations together with the result of lifting addition to the
level of sets, we speak of {\em additive circuits}.  If they are
confined to the usual Boolean operations together with the result of
lifting addition and multiplication to the level of sets, we speak of
{\em arithmetic circuits}. In this paper, we investigate the
definability of sets and functions by means of additive and arithmetic
circuits, occasionally augmented with additional operations.

\bigskip

\noindent {\bf Keywords. } Arithmetic circuit, integer expression, 
definability, expressive power.
\end{abstract}
\section{Introduction}
\label{sec:intro}
Let $\N$ denote the set of natural numbers $\set{0, 1,\ldots}$, and
$\bbP$ its power set. Fix a countably infinite set of variables $V =
\set{x, y, z \ldots}$ to range over elements of $\bbP$, and let $\cO$
be any collection of functions $\bbP^k \rightarrow \bbP$ (for various
$k \geq 0$). A {\em numerical set-expression over} $\cO$ (for short:
an $\cO$-{\em circuit}) is an expression formed, in the expected way,
using the variables $V$, the singleton constants $\set{n}$ for $n \in
\N$, the functions in $\cO$, and the usual Boolean operators 
$\emptyset$, $\N$, $\cup$, $\cap$ and ${}^-$ (complement in $\N$). If $\tau$
is an $\cO$-circuit featuring only the variables $x_1, \ldots,
x_k$, then $\tau(x_1, \ldots, x_k)$, with variables in the indicated
order, defines a function $\bbP^k \rightarrow \bbP$ in the obvious
sense; in particular, if $\tau$ is an $\cO$-circuit with no variables,
then $\tau(\ )$ defines a set of natural numbers.  We ask: which functions and
sets are thus definable by $\cO$-circuits, for various salient
collections $\cO$?

Two operations in particular naturally suggest themselves as
candidates for inclusion in $\cO$.  Denote by $\cPlus$ and $\cTimes$
the result of lifting addition and multiplication to the algebra of
sets, thus:
\begin{equation}
s \cPlus t = \set{m+n | m \in s \mbox{ and } n \in t}; \hspace{1cm}
s \cTimes t  = \set{m \cdot n | m \in s \mbox{ and } n \in t}.
\label{eq:plusTimes}
\end{equation}
for $s, t \in \bbP$.  We call $\set{\cPlus}$-circuits {\em additive
  circuits} and $\set{\cPlus, \cTimes}$-circuits {\em arithmetic
  circuits}. In the sequel, we shall focus on arithmetic circuits and
their extensions with a range of additional operations.

Consider, for example, the (variable-free) arithmetic circuits
\begin{equation}
\Evens = \set{2} \cTimes \N \hspace{2cm}
\Primes = \cmp{\set{1}} \cap  \cmp{\ \cmp{\set{1}} \cTimes \cmp{\set{1}}\ }.
\label{eq:evensPrimes}
\end{equation}
From the above definitions, $\Evens$ defines $\set{2n \mid n \in \N}$,
the set of even numbers, while $\Primes$ defines the set of natural numbers
equal neither to $1$ nor to the product of any two numbers themselves
not equal to 1---that is, the set of primes. It follows that the
circuit $\cmp{\Primes \cPlus \Primes} \cap \Evens \cap \cmp{\set{0}
\cup \set{2}}$ defines the set of counterexamples to Goldbach's
celebrated conjecture that every even number greater than 3 is the sum
of two primes.  The functions defined by $\cO$-circuits featuring
variables are determined similarly, with the values of the variables
being given by the arguments to the functions. There is no requirement
that these arguments themselves be definable by $\cO$-circuits.

The moniker $\cO$-{\em circuits} for numerical set-expressions over
$\cO$ alludes to the `circuitry' found in computing technology, and is
suggested by the depiction of these expressions as labelled, directed
graphs, specifying a cascade of arithmetic and logical operations to
be performed on sets of numbers.  Thus, for example, the arithmetic
circuits of~\eqref{eq:evensPrimes} may be depicted as in
Fig.~\ref{fig:evensPrimes}. Each node in these graphs evaluates to a
set of numbers, representing a stage of the computation performed by
the circuit. Nodes without predecessors in these graphs are labelled
by constants (or variables) indicating the sets of numbers to which
they evaluate. Nodes with predecessors in the graph are labelled with
functions (of the appropriate arity) to be performed on the values of
their immediate predecessors; the results of these operations are then
taken to be the values of the nodes in question.  Finally, one of the
nodes---here identified by a double circle---is designated as the {\em
  circuit output}, and represents the final value computed.  The
interpretation of the graphs of Fig.~\ref{fig:evensPrimes} and their
correspondence to the arithmetic circuits of~\eqref{eq:evensPrimes}
should be obvious. The only essential difference between circuits
their graphical representations is that the latter, but not the
former, allow operations to {\em share} arguments (as illustrated in
Fig.~\ref{fig:evensPrimes}b), thus permitting a more compact
representation of the set or function being defined.  However, from
the point of view of expressive power, the two representations are
entirely equivalent.

\vspace{-0.75cm}

\begin{figure}[htb]
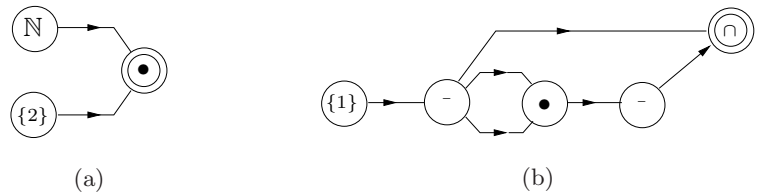

\begin{center}

\vspace{0.25cm}

\begin{minipage}{3.5cm}
\begin{center}
\input{evens.pstex_t}

\vspace{0.25cm}

(a)
\end{center}
\end{minipage}
\hspace{1cm}
\begin{minipage}{6cm}
\begin{center}
\input{primes.pstex_t}

\vspace{0.25cm}

(b)
\end{center}
\end{minipage}

\vspace{-0.5cm}

\end{center}
\label{fig:evensPrimes}
\caption{Graphical depictions of two arithmetic circuits: (a) the
  circuit $\set{2} \cTimes \N$, defining the set of even numbers; (b)
  the circuit $\cmp{\set{1}} \cap \cmp{\ \cmp{\set{1}} \cTimes
    \cmp{\set{1}}\ }$, defining the set of primes.}
\end{figure}

In principle, numerical set expressions may be considered over any
collection of functions $\cO$ with arguments and values in $\bbP$.  In
particular, any of the familiar arithmetic operations definable on
$\N$---squaring, exponentiation, (truncated) subtraction etc.---can be
lifted to the level of sets, analogously to addition and
multiplication. Other salient operations on sets of natural numbers are not
the result of lifting any arithmetic operations, however. Natural
examples are the functions
\begin{align*}
\Max(x) &=
\begin{cases}
\emptyset & \text{if $x$ empty}\\
\N            & \text{if $x$ infinite}\\
\set{\max(x)} & \text{otherwise}
\end{cases} 
&
\Card(x) &= 
\begin{cases}
\set{|x|} & \text{if $x$ finite}\\
\N & \text{otherwise}
\end{cases}\\
\ \\
\varepsilon(x) &= 
\begin{cases}
\set{0} & \text{if $x$ empty}\\
\emptyset & \text{otherwise},
\end{cases}
& \Fin(x) &=
\begin{cases}
\set{0} & \text{if $x$ finite}\\
\emptyset & \text{otherwise},
\end{cases}
\label{eq:eFin}
\end{align*}
and the function
\begin{equation*}
\Downarrow(x) = \set{m \in \N \mid \text{$\exists n \in x$ s.t.~$m \leq n$}}
\end{equation*}
(read: {\em downarrow}). The functions $\Max$ and $\Card$ return a
singleton containing, respectively, the maximum value and cardinality
of their arguments, where defined. We may think of the function
$\varepsilon(x)$ as a {\em test} for the property of emptiness, by
treating the sets $\set{0}$ and $\emptyset$ as the truth-values {\em
  true} and {\em false}, respectively; similarly for $\Fin(x)$. The
function $\Downarrow$ is a variant of $\Max$---primarily of technical
interest---which `fills in' all smaller elements. In the sequel, we
shall investigate the extra expressive power provided, in the context
of both additive and arithmetic circuits, by these functions.

Variable-free additive circuits seem first to have been studied by
Stockmeyer and Meyer~\cite{cf:s+m73}, under the name {\em integer
  expressions}.  In terms of Formal Language Theory, integer
expressions are the same as star-free regular expressions over a
1-element alphabet, where the integer $n$ stands for the string of
length $n$. Variable-free arithmetic circuits were first identified by
McKenzie and Wagner~\cite{cf:McKenzieW03,cf:McKenzieW07}). For any
collection of functions $\cO$, the {\em membership} problem for
variable-free $\cO$-circuits is as follows: given a number and a
variable-free $\cO$-circuit, determine whether that number is in the
set defined by that $\cO$-circuit.  The {\em non-emptiness} problem
for variable-free $\cO$-circuits is as follows: given an
$\cO$-circuit, determine whether the set of numbers it defines is
non-empty.  Stockmeyer and Meyer showed that, for additive circuits
(i.e. $\cO = \set{\cPlus}$), both problems are $\PSPACE$-complete. The
decidability of the membership and non-emptiness problems for
variable-free arithmetic circuits (i.e. $\cO = \set{\cPlus, \cTimes}$)
is still open. However, these problems become decidable when various
restrictions are imposed on the operators that may appear in the
circuits in question (including the Boolean operators). For a
complexity-theoretic analysis of these problems, see Meyer and
Stockmeyer, {\em op.~cit.}, McKenzie and Wagner, {\em op.~cit.},
Yang~\cite{cf:yang00} and Gla{\ss}er {\em et
  al.}~\cite{cf:HRTW07,cf:GRTW07}.

Additive circuits with variables are investigated in Je\.{z} and
Okhotin~\cite{cf:j+o08b,cf:j+o08a}. Let us call an additive circuit
{\em positive} if it does not feature any complementation operators.
Je\.{z} and Okhotin consider systems of equations
$\set{\sigma_i(\vec{x}) = \tau_i(\vec{x}) \mid 1 \leq i \leq n}$,
where the $\sigma_i$ and $\tau_i$ are positive additive circuits. They
show~\cite[Theorem~5]{cf:j+o08a} that, if $(s_1, s_2 \ldots, s_n)$ is
the unique (least, greatest) solution of some system of positive
additive circuit equations, then $s_1$ is recursive (r.e., co-r.e.);
conversely, for every recursive (r.e., co-r.e.)  set $s \in \bbP$,
there exists a system of positive additive circuit equations with a
unique (least, greatest) solution $(s, s_2 \ldots, s_n)$. In the
terminology of that paper, $s$ is {\em represented} by the system of
equations in question.  From this, Je\.{z} and Okhotin deduce that the
satisfiability problem for systems of positive additive circuit
equations is co-r.e.-complete. Similar results hold with so-called
{\em resolved} systems of equations---namely those in which $\vec{x} =
(x_1, \ldots, x_n)$ and $\sigma_i(\vec{x}) = x_i$ for all $i$ ($ 1
\leq i \leq n$).  Je\.{z} and Okhotin also show~\cite[Theorem
  3.1]{cf:j+o08b} that the family of sets representable as least
solutions of resolved systems of equations is included in $\EXPTIME$,
and moreover contains some $\EXPTIME$-hard sets.

The plan of the paper is as follows. Section~\ref{sec:preliminaries}
contains the principal definitions and technical background used in
the sequel.  Section~\ref{sec:sets} gives some examples of sets
definable by arithmetic circuits, and establishes the low
recognition-complexity of all such sets. Section~\ref{sec:defUndef}
shows that various set-functions, including $\Max$, $\Card$ and
$\Downarrow$, are not definable by arithmetic circuits, under a wide
range of extensions, and presents more restricted undefinability
results concerning the functions $\varepsilon$ and
$\Fin$. Section~\ref{sec:defNumerical} employs the results of the two
preceding sections to investigate the definability of functions $\N^k
\rightarrow \N$ by additive and arithmetic circuits.
Section~\ref{sec:additional} considers the effect of adding the
functions $\Max$, $\Card$ and $\Downarrow$ to additive and arithmetic
circuits. Section~\ref{sec:conclusion} concludes.
\section{Preliminaries}
\label{sec:preliminaries}
Recall from Section~\ref{sec:intro} that $\N$ denotes the set of
natural numbers, $\bbP$ its power set, and $V$ a countably infinite
set of variables. Henceforth, we shall refer to natural numbers simply
as {\em numbers}. We call a function (of any arity $\geq 0$) with
arguments and values in $\bbP$ a {\em set-function}.  Let $\cO$ be any
collection of set-functions. Formally, an $\cO$-{\em circuit} is
defined inductively as follows: (i) any variable in $V$ is an
$\cO$-circuit; (ii) if $\tau_1, \ldots, \tau_n$ are $\cO$-circuits ($n
\geq 0$), and $o \in \cO$ is of arity $n$, then $o(\tau_1, \ldots,
\tau_n)$ is an $\cO$-circuit; (iii) if $\tau_1$ and $\tau_2$ are
$\cO$-circuits, then so are $\tau_1 \cap \tau_2$, $\tau_1 \cup
\tau_2$, $\cmp{\tau}_1$, $\emptyset$, $\N$ and $\set{n}$, for all $n
\in \N$.  Let $o$ be any of the functions in $\cO$, or one of the
Boolean operators or singleton constants.  In keeping with the
terminology of circuitry, we speak, informally, of a {\em gate} which
{\em computes} $o$, or, more simply, of an $o$-{\em gate} to denote a
position in an $\cO$-circuit at which $o$ occurs.  We typically use
the letters $\rho$, $\sigma$, $\tau$ to range over circuits.  If
$\vec{x} = (x_1, \ldots, x_n)$ is a non-empty tuple of variables in $V$,
and $\tau$ a circuit featuring only the variables $\vec{x}$, we
optionally write $\tau$ as $\tau(\vec{x})$ to specify the order of
variables.

An {\em interpretation} is a function $\iota: V \rightarrow \bbP$
mapping variables to sets of numbers. Interpretations are extended
homomorphically to $\cO$-circuits by setting $\iota(o(\tau_1, \ldots,
\tau_k)) = o(\iota(\tau_1), \ldots, \iota(\tau_k))$ for any operator
$o$, inluding the Boolean and constant operators.  For the sake of
readability, if $\tau(\vec{x})$ is an $\cO$-circuit and $\iota$ an
interpretation mapping the tuple of variables $\vec{x}$ to the tuple
of sets of numbers $\vec{s}$, then we denote $\iota(\tau)$ by
$\tau(\vec{s})$. In particular, if $\tau$ is variable-free, the set
$\iota(\tau)$ (which is independent of $\iota$) is denoted by
$\tau(\ )$.  If $\vec{x}$ is an $n$-tuple of variables ($n>0$), the
function {\em defined by} an $\cO$-circuit $\tau(\vec{x})$ is the
function $\vec{s} \mapsto \tau(\vec{s})$.  Any function $F: \bbP^n
\rightarrow \bbP$ which can be written in this way is said to be
$\cO$-{\em definable}.  Likewise, if $\tau$ is a variable-free
$\cO$-circuit, the set {\em defined by} $\tau$ is the set of numbers
$\tau(\ )$. Any set $s \in \bbP$ which can be written in this way
is said to be $\cO$-{\em definable}.  

The following additional notation and terminology will be used.  We
write $\sigma \setminus \tau$ to abbreviate $\sigma \cap \cmp{\tau}$,
where this improves readability. Further, if, $n_1, \ldots, n_k$ are
numbers, we write $\set{n_1, \ldots, n_k}$ to denote the circuit
$\set{n_1} \cup \cdots \cup \set{n_k}$.  We omit parentheses where
possible, taking $\cTimes$ to have precendence over $\cPlus$, and
making use of the associativity of $\cap$, $\cup$, $\cPlus$ and
$\cTimes$.  If $\tau(x_, \ldots, x_n)$ is an $\cO$-circuit defining
the function $F: \bbP^n \rightarrow \bbP$, and $\sigma_1, \ldots,
\sigma_n$ are also $\cO$-circuits, we write $\tau(\sigma_1, \ldots,
\sigma_n)$ to denote the $\cO$-circuit obtained by substituting each
$\sigma_i$ for $x_i$ in $\tau$. Alternatively, where there is no
danger of confusion, we allow ourselves to write $F(\sigma_1, \ldots,
\sigma_n)$ to denote this $\cO$-circuit. (This is not strictly
correct, but obviates a lot of duplicate notation.) If $\cO_1$,
$\cO_2$ are collections of set-functions, we speak of $(\cO_1,\cO_2)$-circuits rather
than the more correct $(\cO_1 \cup \cO_2)$-circuits; likewise, if $o$
is a set-function, we speak of $(\cO_1, o)$-circuits rather than $(\cO_1 \cup
\set{o})$-circuits; and so on.

The letters $k$, $\ell$, $m$, $n$ will generally range over numbers,
and the letters $s$, $t$ over sets of numbers. Likewise, $\vec{m}$,
$\vec{n}$ will rangle over tuples of numbers, and $\vec{s}$, $\vec{t}$
over tuples of sets of numbers. We occasionally treat tuples of
numbers as sets where no confusion arises; thus, for example if
$\vec{n} = (n_1, \ldots, n_k)$, we write $\min(\vec{n})$ for
$\min(\set{n_1, \ldots, n_k})$, and so on.  For any integers $a$, $b$,
we take $[a,b]$ to denote the set $\set{a, a+1, \ldots, b}$ (empty if
$b < a$), and $[a,\infty)$, the infinite set $\set{a, a+1, \ldots }$.
  We denote the cardinality of a set of numbers $s$ by $|s|$, and
  write $s_{|m}$ for the set $s \cap [0, m]$.  If $\vec{s} = (s_1,
  \ldots, s_n)$ is a tuple of sets of numbers, we write $\vec{s}_{|m}$
  for the tuple $((s_1)_{|m}, \ldots, (s_n)_{|m})$.  If the arity is clear from
  context, $\vec{\emptyset}$ denotes the tuple $(\emptyset, \ldots,
  \emptyset)$.

We endow $\bbP$ with the topology whose basis is the collection of sets
\begin{equation*}
\set{\set{s  \cup t \mid t \subseteq [m, \infty)} \mid m \in \N, \mbox{ and }
     s \subseteq [0,m-1]},
\end{equation*}
This topology, which is compact and Hausdorff, is induced by a variety
of natural metrics, for example
\begin{equation*}
d(s, t) = 
\begin{cases}
0 & \text{if $s = t$}\\
1/(\min((s \setminus t) \cup (t \setminus s))+1) & \text{otherwise}.
\end{cases}
\end{equation*}
Hence, the metric on the product space $\bbP^n$ given by $d((s_1,
\ldots, s_n), (t_1, \ldots, t_n))$ $= \max_{1 \leq i \leq n}
d(s_i,t_i)$ induces the product topology. It is often helpful to
picture the topological space $\bbP$ in its alternative guise as {\em
  Cantor space}---the space of infinite sequences $\set{0,1}^\omega$
with basis of open sets \set{\set{\varsigma \cdot \varrho \mid \varrho \in
    \set{0,1}^\omega} \mid \varsigma \in \set{0,1}^*}---via the
bijection $\vartheta \mapsto \set{n \in \N \mid \vartheta[n] = 1}$,
where $\vartheta \in \set{0,1}^\omega$. 

The notions of continuity and uniform continuity are understood in the
usual way with respect to the above metric $d$. Specifically, $F:
\bbP^n \rightarrow \bbP$ is {\em continuous at} $\vec{s}$ if, for all
$\epsilon > 0$, there exists $\delta > 0$ such that, for all $\vec{t}
\in \bbP^n$, $d(\vec{s},\vec{t}) \leq \delta$ implies $d(F(\vec{s}),
F(\vec{t})) \leq \epsilon$; and $F$ is {\em uniformly continuous on}
$D \subseteq \bbP^n$ if, for all $\epsilon > 0$, there exists $\delta
> 0$ such that, for all $\vec{s}, \vec{t} \in D$, $d(\vec{s},\vec{t})
\leq \delta$ implies $d(F(\vec{s}), F(\vec{t})) \leq
\epsilon$. Equivalently, $F$ is continuous at $\vec{s}$ if, for all $m
\geq 0$, there exists $n \geq 0$ such that, for all $\vec{t} \in
\bbP^n$, $\vec{s}_{|n} = \vec{t}_{|n}$ implies $F(\vec{s})_{|m} =
F(\vec{t})_{|m}$; similarly for uniform continuity. The following
refinement of uniform continuity will be useful in the sequel. Suppose
$h: \N \rightarrow \N$ is a function. We say $F: \bbP^n \rightarrow
\bbP$ is $h$-{\em continuous on} $D$ if, for all $m \in \N$, and all
$\vec{s}, \vec{t} \in D$, $\vec{s}_{|h(m)} = \vec{t}_{|h(m)}$ implies
$F(\vec{s})_{|m} = F(\vec{t})_{|m}$. Thus, $F$ is uniformly continuous
on $D$ if and only if $F$ is $h$-continuous for some $h$. We are
generally interested only in the case where $h$ is {\em
  inflationary}---i.e., $h(m) \geq m$ for all $m$. If $h$ is the
identity function, $h: m \mapsto m$, we say that $F$ is {\em
  identically continuous} on $D$.  If $D$ is compact---for example, if
$D= \bbP^n$---continuity at every point of $D$ implies uniform
continuity on $D$. The converse of this statement is false; however,
if $F$ is uniformly continuous on any domain $D$, then, trivially, the
{\em restriction} of $F$ to $D$, denoted $F_{|D}$, is everywhere
continuous in $D$.

Intuitively, a continuous function is one for which the initial
segment of its value---of any desired length---can be fixed by
determining sufficiently long initial segments of its arguments.  Of
the functions encountered in Section~\ref{sec:intro}, it is routine to
check that the Boolean operations and $\cPlus$ are identically
continuous on the whole space. By contrast, $x \cTimes y$ is not
continuous at any point $(x,y) = (s, \emptyset)$ or $(x,y) = (\emptyset, s)$, where $0
\in s$, but is continuous elsewhere.  Similarly, $\varepsilon(x)$ is
discontinuous at the point $x = \emptyset$ (continuous elsewhere);
$\Downarrow(x)$ is discontinuous at $x=s$ for all finite $s$ (continuous
elsewhere); and $\Max$, $\Card$ and $\Fin$ are everywhere
discontinuous.

Some of the results obtained below concern {\em classes} of
set-functions; we introduce two important classes now. A set-function
whose values are confined to $\set{0}$ and $\emptyset$ will be
referred to as {\em predicate}; we denote the set of all predicates,
of any arity, by $\cP$. As remarked above, we are to think of
$\set{0}$ and $\emptyset$ as the truth-values {\em true} and {\em
  false}, respectively. Thus, $\varepsilon$ and $\Fin$, defined in
Section~\ref{sec:intro}, are in $\cP$; however, all the Boolean
operators and the functions $\cPlus$, $\cTimes$, $\Downarrow$, $\Max$
and $\Card$ are not. The second class of set-functions we shall be
interested in are those that are everywhere continuous---and hence, by
compactness, uniformly continuous on the whole space. We denote the
set of everywhere-continuous set-functions, of any arity, by $\cU$.
Thus, all the Boolean operators and the function $\cPlus$ are in $\cU$; however,
$\cTimes$, $\varepsilon$, $\Fin$, $\Downarrow$, $\Max$ and $\Card$ are
not.

\section{Sets definable by arithmetic circuits}
\label{sec:sets}
We begin our analysis with a brief discussion of the definability of
{\em sets} of numbers by variable-free circuits featuring the
operators introduced in Section~\ref{sec:intro}. The case of purely
additive circuits is uninteresting: a routine structural induction
shows that, if $\tau$ is a variable-free additive circuit, then
$\tau(\ )$ is finite or co-finite; conversely, every finite or
co-finite set is trivially definable by an additive circuit.  Further,
since the gates $\Downarrow$, $\Max$ and $\Card$, as well as any
predicate-gates, yield finite or co-finite outputs, these gates
obviously cannot increase the collection of definable sets.  Hence,
when discussing set-definability, we may as well restrict attention to
$(\cPlus,\cTimes)$-circuits---or, as we agreed to call them,
arithmetic circuits.

We gave two examples of such sets in Section~\ref{sec:intro}: the set
of even numbers and the set of primes, defined
in~\eqref{eq:evensPrimes} by the arithmetic circuits $\Evens$ and
$\Primes$, respectively. Other natural candidates are easy to
find. For example, if $p$ is any fixed prime $p$, the circuit
\begin{equation*}
\Pow_p = 
   \cmp{(\Primes \setminus \set{p}) \cTimes \N}
\end{equation*}
defines the set $\{p^k \mid k \in n\}$ of all powers of $p$
(see~\cite{cf:travers06}), since that is simply the set of numbers not
divisible by any prime other than $p$.  Equally evident is the fact
that, for fixed $m > k \geq 0$, the circuit
\begin{equation*}
\Res_{m,k} = \set{m} \cTimes \N + \set{k}
\end{equation*}
defines the residue class of $k$ modulo $m$. Certain other sets can be
shown to be $(\cPlus, \cTimes)$-definable, albeit less
straightforwardly. We recall the following facts of elementary number
theory (see, e.g.~Rosen~\cite[pp.~278, ff.]{mcf:rosen93}).  If $m$ and
$n$ are relatively prime integers, the congruence $m^x \equiv 1 \mod
n$ has a non-zero solution; we call the least non-zero solution $e$
the {\em order} of $m$ modulo $n$. It is a standard (and easy) result
that any other solution is divisible by $e$.
\begin{theorem} The following sets are definable by arithmetic circuits:
\begin{enumerate}[label=\textup{(}\emph{\roman*}\textup{)}]
\item the set of $k$th powers of $p$, $\{p^{nk} \mid n \in \N\}$, 
for $p$ a fixed prime and $k$ a fixed number;
\item the set of Fermat numbers, $\{2^{2^n} +1 \mid n \in \N\}$.
\end{enumerate}
\label{theo:fermat}
\end{theorem}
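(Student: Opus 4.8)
The plan is to treat the two parts separately, in each case reducing the target set to a Boolean combination of sets already known to be arithmetic, namely $\Pow_p$, $\Res_{m,k}$ and $\Primes$, together with one fact of elementary number theory.

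For part (\emph{i}), the key observation is that $\{p^{nk} \mid n \in \N\}$ consists of exactly those powers of $p$ whose exponent is divisible by $k$. Since $\Pow_p = \{p^j \mid j \in \N\}$ is already available, it suffices to sieve out of $\Pow_p$ those $p^j$ with $k \nmid j$, and I would do this with a single congruence. Taking the modulus to be $p^k - 1$, I note that $p^k \equiv 1 \pmod{p^k - 1}$ while $1 < p^d \le p^{k-1} < p^k - 1$ for $0 < d < k$ (here $p^{k-1}(p-1) > 1$), so the order of $p$ modulo $p^k - 1$ is exactly $k$. By the quoted fact on orders, $p^j \equiv 1 \pmod{p^k-1}$ iff $k \mid j$, whence $\{p^{nk} \mid n \in \N\} = \Pow_p \cap \Res_{p^k - 1, 1}$ for $k \ge 2$; the case $k = 1$ is just $\Pow_p$. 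This step is routine once the modulus has been chosen.

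Part (\emph{ii}) is the substantive one, and the main obstacle is that the exponents $\{2^n\}$ of the Fermat numbers do not form an arithmetic progression, so no congruence sieve on $\Pow_2$ can isolate them: a genuinely different idea is needed. My plan is to characterize the Fermat numbers multiplicatively, as the divisibility-minimal elements of the larger set $G = \{2^a + 1 \mid a \ge 1\}$. Here $G = (\Pow_2 \setminus \{1\}) \cPlus \{1\}$ is clearly arithmetic, and I claim that $g \in G$ is a Fermat number iff no strictly smaller element of $G$ divides $g$. The number-theoretic content is the standard fact that, for $a \ge 1$, the number $2^a + 1$ divides $2^j + 1$ precisely when $j$ is an odd multiple of $a$. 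If $j = 2^n$ is a power of $2$, this forces $a = 2^n = j$, so a Fermat number has no proper divisor lying in $G$; whereas if $j$ has an odd factor $b > 1$, then $2^{j/b} + 1 \in G$ is a proper divisor of $2^j + 1$, and the corresponding element of $G$ is removed.

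It then remains only to render ``$g$ is divisible by a strictly smaller element of $G$'' as a circuit. Since any such divisor $g'$ satisfies $g = g' m$ with $m \ge 2$, the numbers to be discarded are exactly those in $G \cTimes \cmp{\{0,1\}}$, where $\cmp{\{0,1\}} = [2,\infty)$. Hence I would set $\mathcal F = G \setminus (G \cTimes \cmp{\{0,1\}})$, that is, $G \cap \cmp{G \cTimes \cmp{\{0,1\}}}$, and verify against the characterization above that this defines precisely $\{2^{2^n} + 1 \mid n \in \N\}$. The one point demanding care is checking that no Fermat number slips into $G \cTimes \cmp{\{0,1\}}$ through a divisor that happens to lie in $G$; this is exactly the ``only if'' direction of the divisibility fact, which shows that the sole element of $G$ dividing a Fermat number is the number itself.
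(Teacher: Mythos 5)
Your proposal is correct, and in both parts it arrives at essentially the same circuits as the paper: part (\emph{i}) is identical (the order of $p$ modulo $p^k-1$ is exactly $k$, so $\Pow_p \cap \Res_{p^k-1,1}$ works; your separate handling of $k=1$ is a harmless refinement), and in part (\emph{ii}) you use the same set $G = (\Pow_2 \setminus \set{1}) \cPlus \set{1}$, the same characterization of the Fermat numbers as the divisibility-minimal elements of $G$, and the same sieve circuit---your $G \setminus (G \cTimes \cmp{\set{0,1}})$ differs from the paper's $G \setminus (G \cTimes \cmp{\set{1}})$ only in discarding the product with $0$, which never lands in $G$ anyway. The one genuine difference is the number-theoretic lemma underpinning part (\emph{ii}). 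The paper argues in two steps: the factorization $2^m+1 = ((2^b)^{a-1} - (2^b)^{a-2} + \cdots + 1)(2^b+1)$ for $m = ab$ with $a \geq 3$ odd handles the ``$m$ not a power of $2$'' direction, while the converse is handled indirectly---a proper divisor of a Fermat number of the form $2^\ell+1$ would force a \emph{distinct} Fermat number to divide it, contradicting Goldbach's theorem that distinct Fermat numbers are pairwise coprime. You instead invoke the sharper standard criterion that $2^a+1 \mid 2^j+1$ precisely when $j$ is an odd multiple of $a$, which yields both directions of the characterization in one stroke and dispenses with Goldbach's theorem entirely; since you cite it without proof, note that it follows from the same factorization identity together with division with remainder on the exponent (writing $j = qa + r$ with $0 \leq r < a$ gives $2^j + 1 \equiv (-1)^q 2^r + 1 \pmod{2^a+1}$, forcing $r = 0$ and $q$ odd). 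The trade-off is that the paper leans on two quotable classical facts, whereas your route is more self-contained and makes transparent exactly why only the trivial odd factor $b=1$ survives when the exponent is a power of $2$.
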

\begin{proof}
For the first statement, we claim that, if $k >0$ and $\ell >1$, then
$\ell^m \equiv 1 \mod \ell^k -1$ if and only if $k | m$. To see this,
observe that $\ell$ and $\ell^k -1$ are relatively prime, and that $x = k$ is
the smallest non-zero solution of the congruence $\ell^x \equiv 1 \mod
\ell^k -1$.  For $p$ a prime, the circuit 
\begin{equation*}
\Pow_p \cap \Res_{p^k-1,1}
\end{equation*}
defines the set of all numbers of the form $\set{p^m \mid p^m \equiv 1
  \mod (p^k - 1)}$. By the above claim, this is the set $\{p^{nk} \mid
n \in \N\}$.

\bigskip

For the second statement, we claim that a number of the form $2^m+1$
($m \geq 1$) is properly divisible by another number of that form if
and only if $m$ is not a power of 2. To see this, suppose first that
$m$ is not a power of 2. Write $m = a.b$ where $a \geq 3$ is the
largest odd divisor of $m$ (hence $b = 2^n$ for some $n \geq 0$). Then
\begin{equation}
2^m+1 = ((2^b)^{a-1} - (2^b)^{a-2} + \cdots +1)(2^b + 1),
\label{eq:fermat}
\end{equation}
whence $2^m+1$ is properly divisible by $2^b + 1 = 2^{2^n}+1$.
Conversely, suppose $m$ is a power of 2.  If $2^m+1$ is properly
divisible by, say, $2^\ell+1$ for some $\ell$, then
Equation~\eqref{eq:fermat} shows (substituting $\ell$ for $m$) that
$2^\ell+1$ is divisible by some Fermat number, whence $2^m+1$ is
properly divisible by some Fermat number.  But Goldbach's theorem
(see, e.g.~Rosen~\cite[p.~108]{mcf:rosen93}) states that any two
distinct Fermat numbers are in fact relatively prime. This establishes
the claim. Now, the circuit
\begin{equation*}
((\Pow_2 \setminus \set{1}) \cPlus \set{1})
   \setminus \left(((\Pow_2 \setminus 
                   \set{1})  \cPlus \set{1}) \cTimes \cmp{\set{1}}\right)
\end{equation*}
defines the set of all numbers of the form $2^m +1$ ($m \geq 1$) not
properly divisible by any other such number.
\begin{flushright}
$\qed$
\end{flushright}
\end{proof}

By numbering arithmetic circuits in some standard way
(G\"{o}del-numbering), we obtain the set $G$ of numbers $n$ such that
the circuit numbered by $n$ defines a set which does not contain
$n$. It is then routine to show that $G$ is itself not definable by
any arithmetic circuit.  However, no mathematically natural sets of
numbers have (to the authors' knowledge) been shown not to be so
definable.  Indeed, examples such as those of
Theorem~\ref{theo:fermat} give some indication of the difficulty: we
have to be sure that any candidate set cannot be defined using an
arithmetic circuit in a non-obvious way by means of some
number-theoretic fact. Nevertheless, some general facts about the
class of sets definable by arithmetic circuits can be derived: in
particular, they all have relatively low recognition-complexity.

To see why, recall our observation in Section~\ref{sec:preliminaries}
that the $\varepsilon$- and $\cTimes$-gates are discontinuous. These
facts are related. Define the function $\cModTimes: \bbP^2 \rightarrow
\bbP$ by
\begin{equation*}
s \cModTimes t = \set{ m \cdot n \mid m \in s \setminus \set{0},
                                 n \in t \setminus \set{0}}.
\end{equation*}
We see that $\cModTimes$ is identically continuous, because the
question of whether $m \in s \cModTimes t$ obviously depends only on
the initial segments $s_{|m}$ and $t_{|m}$. Furthermore:
\begin{equation}
s \cModTimes t = (s \setminus \set{0}) \cTimes (t \setminus \set{0});
\hspace{0.5cm}
s \cTimes t = 
  (s \cModTimes t) \cup 
   (\set{0} \cap ((s \setminus \varepsilon(t)) \cup (t \setminus \varepsilon(s)))).
\label{eq:cTimescModTimes}
\end{equation}
Hence, $(\cPlus, \cTimes)$-circuits and $(\cPlus, \cModTimes,
\varepsilon)$-circuits define the same sets. So therefore, do
$(\cPlus, \cTimes)$-circuits and $(\cPlus, \cModTimes)$-circuits.
By {\em bounded arithmetic}, we understand the first-order language
over the signature $(+,\cdot,1,0)$, but with all quantification
restricted to the forms $(\forall x \leq t)\varphi$ and $(\exists x
\leq t)\varphi$, where $t$ is a term.  The collection of sets in
$\N^k$ defined by formulas of bounded arithmetic is known as the {\em
  bounded hierarchy}, $\BH$ (Harrow~\cite{cf:Harrow78}).
\begin{theorem}
Every set definable by an arithmetic circuit is in $\BH$.
\label{theo:BH}
\end{theorem}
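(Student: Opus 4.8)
The plan is to argue by structural induction on circuits, but first to trade $\cTimes$ for the identically-continuous operation $\cModTimes$. By the remark preceding the theorem---in particular the identities of~\eqref{eq:cTimescModTimes}, together with the fact that $\varepsilon$ applied to a variable-free circuit evaluates to the constant $\set{0}$ or $\emptyset$ and so is redundant for \emph{sets}---every set defined by a variable-free $(\cPlus,\cTimes)$-circuit is defined by some variable-free $(\cPlus,\cModTimes)$-circuit. It therefore suffices to prove that every variable-free $(\cPlus,\cModTimes)$-circuit defines a set in $\BH$. The point of this detour is exactly the discontinuity flagged in Section~\ref{sec:preliminaries}: the failure of $\cTimes$ to be continuous at arguments containing $0$ is precisely what would otherwise force unbounded witnesses in the corresponding arithmetic formula.

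I would set up the induction by assigning to each variable-free subcircuit $\sigma$ a bounded-arithmetic formula $\phi_\sigma(x)$ in the single free variable $x$, over the signature $(+,\cdot,1,0)$, such that $\phi_\sigma$ defines the set $\sigma(\ )$. The base cases are immediate: $\emptyset$, $\N$ and $\set{n}$ are defined by $x \neq x$, $x = x$ and $x = t_n$ respectively, where $t_n$ is the closed term denoting $n$. The Boolean cases follow from the closure of the class of bounded-arithmetic formulas under the propositional connectives: for $\sigma = \cmp{\sigma_1}$, $\sigma = \sigma_1 \cup \sigma_2$ and $\sigma = \sigma_1 \cap \sigma_2$, take $\phi_\sigma$ to be $\neg\phi_{\sigma_1}$, $\phi_{\sigma_1}\vee\phi_{\sigma_2}$ and $\phi_{\sigma_1}\wedge\phi_{\sigma_2}$, each of which remains bounded. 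Hence membership in $\BH$ is preserved through all the Boolean gates.

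The two operation cases are the heart of the argument, and both turn on the witnesses being bounded by $x$. For $\sigma = \sigma_1 \cPlus \sigma_2$ I would take
\[
\phi_\sigma(x) \ \equiv\ (\exists y \leq x)(\exists z \leq x)\bigl(y + z = x \wedge \phi_{\sigma_1}(y) \wedge \phi_{\sigma_2}(z)\bigr),
\]
which is bounded because $y + z = x$ forces $y, z \leq x$. For $\sigma = \sigma_1 \cModTimes \sigma_2$ I would take
\[
\phi_\sigma(x) \ \equiv\ (\exists y \leq x)(\exists z \leq x)\bigl(y \cdot z = x \wedge y \neq 0 \wedge z \neq 0 \wedge \phi_{\sigma_1}(y) \wedge \phi_{\sigma_2}(z)\bigr).
\]
The crucial point is that a factorisation $x = y\cdot z$ with $y,z \geq 1$ again forces $y,z \leq x$, so the quantifiers may legitimately be bounded by $x$; and when $x = 0$ the matrix is unsatisfiable, correctly reflecting that $0 \notin \sigma_1 \cModTimes \sigma_2$. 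In each case $\phi_\sigma$ is a bounded formula built from the $\phi_{\sigma_i}$, completing the induction and yielding a bounded-arithmetic formula defining $\tau(\ )$ for the given circuit $\tau$.

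The one place where genuine care is required is the multiplication step, and it is exactly the obstacle that motivated passing to $\cModTimes$. Had one tried to write a bounded formula for $\cTimes$ directly, the clause $y\cdot z = x$ under the natural bounds $y,z \leq x$ would fail at $x = 0$: since $0 \in \sigma_1 \cTimes \sigma_2$ whenever both factors are non-empty (any pair with $y=0$ or $z=0$ gives product $0$), yet the bound $y,z \leq 0$ admits only $y=z=0$, the formula could not detect non-empty factors that omit $0$. This is precisely the arithmetic shadow of the discontinuity of $\cTimes$. Restricting to $\cModTimes$ removes $0$ from both factors and restores the bound, while~\eqref{eq:cTimescModTimes} guarantees no loss of generality; once this is in place the remainder of the induction is routine.
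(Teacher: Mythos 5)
Your proposal is correct and takes essentially the same route as the paper: the paper's own proof likewise first observes that $(\cPlus,\cTimes)$- and $(\cPlus,\cModTimes)$-circuits define the same sets (via~\eqref{eq:cTimescModTimes}, with $\varepsilon$ eliminable on variable-free circuits since it evaluates to a constant) and then invokes a ``routine induction'' into bounded arithmetic, which you have simply written out in full with the correct $x$-bounded witnesses for $\cPlus$ and $\cModTimes$. One small blemish in your closing motivational remark---$0 \in \sigma_1 \cTimes \sigma_2$ holds when one factor contains $0$ and the other is non-empty, not whenever both factors are non-empty---lies outside the induction and does not affect the proof.
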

\begin{proof}
We observed above that a set is $(\cPlus, \cTimes)$-definable if and
only if it is $(\cPlus, \cModTimes)$-definable.  A routine induction
shows that every $(\cPlus, \cModTimes)$-definable set is defined by a
formula of bounded arithmetic.
\begin{flushright}
$\qed$
\end{flushright}
\end{proof}

The bounded hierarchy is known to be contained within the zeroth
Grzegorczyk class, $\cE^0_*$, and hence certainly within the class of
sets of numbers decidable in {\em deterministic} linear space---which
is equal to the second Grzegorczyk class, $\cE^2_*$
(Ritchie~\cite{cf:Ritchie63}; for a general overview, see
Rose~\cite[Ch. 5]{mcf:rose84}).  Thus, while it is not known whether
the membership problem for arithmetic circuits is decidable, the
problem of determining membership in the set $\tau(\ )$, for any fixed
arithmetic circuit $\tau$, is decidable, and indeed has relatively low
complexity.

It is interesting to relate the foregoing remarks to
language-theoretic characterizations of subsets of $\N$.  By
identifying each positive number $m$ with its binary representation as
a string in the language $1 \cdot \set{0,1}^*$, we can think of any
subset of $\N$ as a language in the usual sense of Formal Language
Theory.  (We take $0$ to be represented by the empty string.)  Under
this correspondence, we see immediately that all $(\cPlus,
\cTimes)$-definable sets are context-sensitive languages, since these
are the languages that can be recognized in {\em non-deterministic}
linear space. On the other hand, recalling
Theorem~\ref{theo:fermat}~(ii), a simple application of the pumping
lemma for context-free languages shows that the language corresponding
to the Fermat numbers---namely, $\set{10 \ldots 01 \mid \mbox{with
    $2^n -1$ zeros for some $n \geq 0$}}$---is not
context-free. Indeed, the pumping lemma of Palis and
Shende~\cite[Theorem~1]{mcf:ps95} shows that the Fermat numbers lie
outside the much larger {\em control-language hierarchy}.  We note in
passing that Theorem~\ref{theo:fermat}~(ii) is not actually necessary to
show that sets definable by arithmetic circuits are not all
context-free: the set of primes was shown not to be context-free by
Hartmanis and Shank~\cite{cf:hs68}, though this example involves a
more difficult application of the context-free pumping lemma.

\section{Definability and non-definability of set-functions}
\label{sec:defUndef}
We now turn to our principal topic: the definability of functions
$\bbP^k \rightarrow \bbP$ by arithmetic circuits and their
extensions. In this section, we pay particular attention to
limitations on definability arising from {\em finiteness} and {\em
  continuity}. 
\subsection{Functions definable by additive and arithmetic circuits}
\label{sec:defFunctions}
Many natural functions involving sets of numbers turn out to be
definable by arithmetic---or indeed additive---circuits. For example,
the function
\begin{equation*}
\downarrow(x) = \set{n \in \N \mid \forall m \in x, n \leq m}
\end{equation*}
is defined by the circuit $\tau_\downarrow(x) = \cmp{x + \N +
  \set{1}}$ (cf.~Corollary~\ref{cor:dmc}).  Likewise, the function
\begin{equation}
\Min(x) = 
\begin{cases}
\set{\min(x)} & \text{if $x \neq \emptyset$}\\
\emptyset & \text{otherwise}
\end{cases}
\label{eq:min}
\end{equation}
is defined by the circuit $\tau_\downarrow(x) \cap x$
(cf.~Corollary~\ref{cor:dmc}). 

The characteristic functions of many natural properties of sets of
numbers also turn out to be $(\cPlus, \cTimes)$-definable.  For example, if
$k$ is a number, consider the property of having cardinality greater
than $k$.  Since we have agreed to use $\set{0}$ and $\emptyset$ as
truth-values, we may take the characteristic function of this property to
be
\begin{equation*}
\Card_{> k}(x) \mapsto 
\begin{cases}
\set{0} & \text{if $|x| > k$}\\
\emptyset & \text{otherwise}.
\end{cases}
\end{equation*}
Now recursively construct the arithmetic circuits $\tau_{> k}$ as follows:
\begin{eqnarray*}
\tau_{> 0}(x) & = & x \cTimes \set{0}\\
\tau_{> k+1}(x)  & = & \tau_{> k} (x \setminus \Min(x)). 
\end{eqnarray*}
It is easy to see that $\tau_{>k}$ defines $\Card_{> k}(x)$ for all
$k$. Hence, the characteristic functions of the properties of having
cardinality at most/exactly $k$ are $(\cPlus, \cTimes)$-definable too
(cf.~Corollary~\ref{cor:finiteness}).

Definition by cases is also possible in the presence of certain collections of 
gates. We take the {\em discriminator function} to be given by
\begin{equation*}
\triangledown(x) \mapsto 
\begin{cases}
\emptyset & \text{if $x = \emptyset$}\\
\N & \text{otherwise}.
\end{cases}
\end{equation*}

\begin{lemma} 
Let $\cO$ contain $\cPlus$ and any of $\cTimes$, $\varepsilon$, $\Fin$,
  $\Max$ $\Downarrow$ or $\Card$. Then the discriminator function is
  $\cO$-definable.
\label{lma:discriminator}
\end{lemma}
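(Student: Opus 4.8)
The plan is to reduce every case to two elementary facts about $\cPlus$ and to treat the six gates by a single unifying idea: \emph{normalization followed by amplification}. The amplification fact is that $\set{0} \cPlus \N = \N$ while $\emptyset \cPlus \N = \emptyset$; consequently, if I can exhibit, for a given gate, a circuit $p(x)$ that evaluates to $\set{0}$ when $x \neq \emptyset$ and to $\emptyset$ when $x = \emptyset$, then $p(x) \cPlus \N$ already defines $\triangledown$. The normalization fact is that $x \cPlus \N$ equals $\emptyset$ when $x = \emptyset$ and the cofinite set $[\min(x),\infty)$ otherwise; thus $x \cPlus \N$ is empty exactly when $x$ is, and is \emph{infinite} whenever $x$ is non-empty. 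Applying a discontinuous gate to $x \cPlus \N$ rather than to $x$ lets me exploit the clean dichotomy ``empty versus infinite'' in place of the messier ``empty versus arbitrary non-empty''.

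With these two facts the individual cases are short. For $\cTimes$ I take $p(x) = x \cTimes \set{0}$, which is $\set{0}$ for non-empty $x$ and $\emptyset$ for $x = \emptyset$, so $\triangledown(x) = (x \cTimes \set{0})\cPlus\N$. For $\varepsilon$, note that $\varepsilon(x) = \set{0}$ precisely when $x = \emptyset$, which is the \emph{opposite} polarity, so I complement after amplifying: $\triangledown(x) = \cmp{\varepsilon(x)\cPlus\N}$. The $\Fin$ case reduces to the $\varepsilon$ case by normalization, since $\Fin(x\cPlus\N) = \set{0}$ iff $x\cPlus\N$ is finite iff $x = \emptyset$, whence $\Fin(x \cPlus \N)$ computes $\varepsilon(x)$ and $\triangledown(x) = \cmp{\Fin(x\cPlus\N)\cPlus\N}$. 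For $\Max$ and $\Downarrow$ no predicate is needed at all: applied to the normalized argument $x \cPlus \N$, each returns $\N$ on the (infinite) non-empty case and $\emptyset$ on the empty case, so $\triangledown(x) = \Max(x\cPlus\N)$ and likewise $\triangledown(x) = \Downarrow(x\cPlus\N)$.

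The one case needing a little more care --- and the step I expect to be the main obstacle --- is $\Card$, because $\Card$ never returns $\emptyset$: it sends the empty set to $\set{0}$ rather than to $\emptyset$, so emptiness cannot simply be read off its output, and the naive attempt of amplifying $\Card(x\cPlus\N)$ (which is $\set{0}$ for empty $x$ and $\N$ otherwise) sends \emph{both} cases to $\N$, since each output contains $0$. The fix is to arrange that the distinguishing feature between the two cases is precisely the presence of $0$. Unioning with $\set{0}$ before counting does this: $\set{0} \cup (x\cPlus\N)$ is $\set{0}$ when $x=\emptyset$ and an infinite set when $x\ne\emptyset$, so $\Card(\set{0}\cup(x\cPlus\N))$ equals $\set{1}$ in the empty case and $\N$ in the non-empty case. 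Now the outputs differ exactly at $0$, so intersecting with $\set{0}$ yields the predicate $\emptyset$ (empty) / $\set{0}$ (non-empty), giving $\triangledown(x) = (\Card(\set{0}\cup(x\cPlus\N))\cap\set{0})\cPlus\N$.

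Finally I would record that in each case correctness is a routine check of the two cases $x = \emptyset$ and $x \neq \emptyset$ using only the definitions of the gates together with the two facts above, and that the constructed expressions are genuine $\cO$-circuits, since they employ only $\cPlus$, the Boolean operators, the constants $\set{0}$ and $\N$, and the single extra gate assumed present. I might also remark that a discontinuous gate is unavoidable here: $\cPlus$ and the Boolean operators are identically continuous, whereas $\triangledown$ is discontinuous at $\emptyset$, which is exactly why each of the six listed gates --- all of them discontinuous --- suffices.
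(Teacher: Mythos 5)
Your proposal is correct and takes essentially the same approach as the paper: exhibit an explicit circuit for each of the six gates, using $x \cPlus \N$ to normalize ``empty versus non-empty'' into ``empty versus infinite'' and $\cPlus\, \N$ to amplify a $\set{0}/\emptyset$ predicate into $\N/\emptyset$. Four of your circuits coincide exactly with the paper's, and the remaining two differ only cosmetically ($\Downarrow(x \cPlus \N)$ in place of $\Downarrow(x) \cPlus \N$, and handling $\Card((x \cPlus \N) \cup \set{0})$ by intersecting with $\set{0}$ rather than by double complementation).
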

\begin{proof}
The following circuits all evidently define $\triangledown$.
\begin{align*}
& x \cTimes \set{0} \cPlus \N
& & \cmp{\cmp{\Card((x \cPlus \N) \cup \set{0})} \cPlus \N} \\
& \Max(x \cPlus \N)  
& & \Downarrow(x) \cPlus \N\\
& \cmp{\varepsilon(x) \cPlus \N}
& & \cmp{\Fin(x \cPlus \N) \cPlus \N}.
\end{align*}
\begin{flushright}
$\qed$
\end{flushright}
\end{proof}
\begin{lemma}
If the functions $F, G, H: \bbP^n \rightarrow \bbP$ and
$\triangledown: \bbP \rightarrow \bbP$ are $\cO$-definable, then so is
the function
\begin{equation}
\vec{x} \mapsto
\begin{cases}
G(\vec{x}) & \text{if $F(\vec{x}) \neq \emptyset$}\\
H(\vec{x}) & \text{otherwise}.
\end{cases}
\label{eq:cases}
\end{equation}
\label{lma:cases}
\end{lemma}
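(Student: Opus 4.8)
The plan is to use the discriminator applied to $F$ as a Boolean \emph{mask} that selects between the branches $G$ and $H$. First I would observe that, for every $\vec{x}$, the set $\triangledown(F(\vec{x}))$ equals $\N$ when $F(\vec{x}) \neq \emptyset$ and equals $\emptyset$ otherwise, while its complement $\cmp{\triangledown(F(\vec{x}))}$ behaves in the opposite way. Since intersecting with $\N$ leaves a set unchanged and intersecting with $\emptyset$ annihilates it, these two complementary masks act as a switch: exactly one of them is $\N$ (passing its branch through) while the other is $\emptyset$ (suppressing its branch).

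Concretely, I would exhibit the circuit
\[
\tau(\vec{x}) \;=\; \bigl(\triangledown(F(\vec{x})) \cap G(\vec{x})\bigr)
   \;\cup\; \bigl(\cmp{\triangledown(F(\vec{x}))} \cap H(\vec{x})\bigr).
\]
Because $F$, $G$, $H$ and $\triangledown$ are all $\cO$-definable by hypothesis, each can be replaced by a defining $\cO$-circuit; substituting the circuit for $F$ into that for $\triangledown$ (as licensed by the substitution convention of Section~\ref{sec:preliminaries}) gives an $\cO$-circuit computing $\triangledown(F(\vec{x}))$, and combining these with the Boolean gates $\cap$, $\cup$ and ${}^-$, which are available in every circuit, shows that the function defined by $\tau$ is itself $\cO$-definable. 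The verification that $\tau$ has the right behaviour is then a two-case check: when $F(\vec{x}) \neq \emptyset$ the first mask is $\N$ and the second is $\emptyset$, so $\tau(\vec{x})$ reduces to $(\N \cap G(\vec{x})) \cup (\emptyset \cap H(\vec{x})) = G(\vec{x})$, and when $F(\vec{x}) = \emptyset$ the masks are reversed and $\tau(\vec{x}) = H(\vec{x})$, matching~\eqref{eq:cases} in both cases.

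There is no substantial obstacle here: once the discriminator is in hand---which is exactly what the hypothesis supplies (and what Lemma~\ref{lma:discriminator} guarantees under mild assumptions on $\cO$)---the construction is immediate. The only point deserving a word of care is that the displayed expression for $\tau$ is a \emph{bona fide} $\cO$-circuit and not merely a set-theoretic identity; but this follows directly from the closure of $\cO$-circuits under substitution and under the Boolean operators, so no further argument is required.
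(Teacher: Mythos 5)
Your proposal is correct and coincides with the paper's own proof: both exhibit the circuit $(\triangledown(F(\vec{x})) \cap G(\vec{x})) \cup (\cmp{\triangledown(F(\vec{x}))} \cap H(\vec{x}))$ and verify the two cases. The additional remarks on substitution and Boolean closure are fine but merely spell out what the paper leaves implicit.
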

\begin{proof}
Let $F$, $G$, $H$, $\triangledown$ be defined by $\rho(\vec{x})$,
$\sigma(\vec{x})$, $\tau(\vec{x})$, $\delta(x)$, respectively.  Then
the function~\eqref{eq:cases} is defined by the $\cO$-circuit:
\begin{equation*}
(\delta(\rho(\vec{x})) \cap \sigma(\vec{x})) \cup 
(\cmp{\delta(\rho(\vec{x}))} \cap \tau(\vec{x})).
\end{equation*}
\begin{flushright}
$\qed$
\end{flushright}
\end{proof}

It is also interesting to consider the $(\cPlus,
\cTimes)$-definability of functions with numbers (rather than sets of
numbers) as arguments.  We refer to such functions as {\em numerical}
functions.  We call a numerical function $f: \N^n \rightarrow \N$
$\cO$-{\em definable} if there exists an $\cO$-definable set-function $F:
\bbP^n \rightarrow \bbP$ such that, for all $m_1, \ldots, m_n$,
$F(\set{m_1}, \ldots, \set{m_n}) = \set{f(m_1, \ldots, m_n)}$.  Thus,
when discussing the definability of a numerical function, we do not
care what values any (putative) defining circuit takes on
non-singleton inputs.

Clearly, all linear functions with positive integer coefficients are
$(\cPlus)$-definable, and all polynomials with positive integer coefficients
are $(\cPlus, \cTimes)$-definable. Some other numerical functions are
definable too.  For example, the function 
\begin{equation*}
n \mapsto 
\begin{cases}
2n-1
 & \text{if $n >0$}\\
0    & \text{otherwise}
\end{cases}
\end{equation*}
is defined by the additive circuit
\begin{equation}
\Min\left(\cmp{\cmp{x \cPlus \N} \cPlus \cmp{x \cPlus \N}}\right).
\label{eq:2nMinus1}
\end{equation}
Or again, given a fixed number $\ell >1$, the function $n \mapsto (n \mod
\ell)$ is defined by the arithmetic circuit
\begin{eqnarray*}
\mod_{\ell}(x) & = & 
 \bigcup_{0 \leq k < \ell} \big(((x \cap \Res_{\ell,k}) \cTimes \set{0}) \cPlus \set{k}\big).
\end{eqnarray*}

In a similar vein, if $R \subseteq \N^n$ is an n-ary relation on
numbers, call its {\em characteristic function} the function $\chi_R$
mapping any tuple of singletons $(\set{m_1}, \ldots, \set{m_n})$ to
$\set{0}$ if $(m_1, \ldots, m_n) \in R$, and to $\emptyset$ otherwise.
Clearly, if the set $R \subseteq \N$ is defined by an arithmetic
circuit $\tau$, then $\chi_R$ is defined by the circuit $(\tau \cap x)
\cTimes \set{0}$.  Some other characteristic functions are definable
by arithmetic circuits too.  Consider, for example, the relation of
{\em relative primeness}. From the Euclidean algorithm for the
greatest common divisor, $m$ and $n$ are relatively prime if and only
if there exist integers $a$, $b$ such that $am + bn = 1$. If $m$ and
$n$ are both greater than 1, exactly one of $a$ and $b$ must be
positive and the other negative. Suppose $b$ is positive: then
$(\set{m} \cTimes \N \cPlus \set{1}) \cap (\set{n} \cTimes \N)$ is
non-empty.  Symmetrically, if $a$ is positive, then $(\set{n} \cTimes
\N \cPlus \set{1}) \cap (\set{m} \cTimes \N)$ is non-empty.  Now let
$\tau(x,y)$ be the circuit
\begin{equation}
[((x \cTimes \N \cPlus \set{1}) \cap (y \cTimes \N)) \cup 
 ((y \cTimes \N \cPlus \set{1}) \cap (x \cTimes \N))] \cTimes \set {0}.
\label{eq:coprime}
\end{equation}
It follows that, for $m >1$ and $n >1$, $\tau(\set{m},\set{n}) =
\set{0}$ if $m$, $n$ are relatively prime, and $\tau(\set{m},\set{n})
= \emptyset$ otherwise.  Taking 1 to be relatively prime to
every number, and 0 relatively prime to no number other than 1, we
observe that~\eqref{eq:coprime} yields the correct results for these
cases too.

\subsection{Definability, continuity and uniform continuity}
\label{sec:defuc}
We now proceed to establish some simple results on functions which are
not definable even by circuits with access to {\em all} predicate
gates $\cP$ and {\em all} continuous gates $\cU$.
\begin{lemma}
\label{lma:L2}
Let $h: \N \rightarrow \N$ be an inflationary function, and $\cO$ a
collection of $h$-continuous set-functions. For any $\cO$-circuit
$\sigma(\vec{x})$, there exists a $k \geq 0$ such that the function
computed by $\sigma(\vec{x})$ is $h^{(k)}$-continuous, where $h^{(k)}$
denotes the $k$-fold iteration of $h$---i.e.~$h^{(k)}(m) =
h(\cdots(h(m)) \cdots)$, and, in particular, $h^{(0)}(m) = m$.
\label{lma:uc}
\end{lemma}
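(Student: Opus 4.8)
The plan is to prove this by structural induction on the $\cO$-circuit $\sigma(\vec{x})$, tracking at each stage a value of $k$ that bounds the number of nested applications of $h$ needed to achieve continuity. The key observation is that $h$-continuity composes: if an outer operation needs initial segments up to length $h(m)$ of its inputs to determine its output up to length $m$, and each input circuit in turn needs initial segments up to $h^{(k)}(m)$ of the leaves to determine its output up to length $m$, then the whole composite needs segments up to $h^{(k+1)}(m)$. Making this quantitative bookkeeping precise is the real content of the argument.

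First I would set up the induction on the structure of $\sigma$. The base cases are the leaves: a variable $x_i$ is $h^{(0)}$-continuous (i.e.\ identically continuous), since its output up to $m$ depends only on its own input up to $m$; and the constants $\emptyset$, $\N$, $\set{n}$ are trivially $h^{(0)}$-continuous, as their output does not depend on the arguments at all. For the inductive step, suppose $\sigma = o(\tau_1, \ldots, \tau_n)$ where $o \in \cO$ (or $o$ is a Boolean operator) is $h$-continuous, and each $\tau_j$ is $h^{(k_j)}$-continuous by the induction hypothesis. Set $k' = \max_j k_j$. Given $m$, since $o$ is $h$-continuous, $o(\cdot)$ evaluated up to length $m$ depends only on the values of the $\tau_j$ up to length $h(m)$; and since each $\tau_j$ is $h^{(k_j)}$-continuous, hence (because $h$ is inflationary, so $h^{(k_j)} \leq h^{(k')}$ pointwise) also $h^{(k')}$-continuous, each $\tau_j$ up to length $h(m)$ depends only on the leaves up to length $h^{(k')}(h(m)) = h^{(k'+1)}(m)$. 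Taking $k = k' + 1$ then gives $h^{(k)}$-continuity of $\sigma$. The Boolean operators are handled identically, since they are themselves $h^{(0)}$-continuous (indeed identically continuous) and $h$-continuous because $h$ is inflationary.

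Two small points need care to keep the argument honest. First, I must justify that an $h^{(j)}$-continuous function is automatically $h^{(k')}$-continuous whenever $j \leq k'$: this uses that $h$ is inflationary, so that $h^{(j)}(m) \leq h^{(k')}(m)$, and that agreement of two tuples on a longer initial segment implies agreement on any shorter one, i.e.\ $\vec{s}_{|h^{(k')}(m)} = \vec{t}_{|h^{(k')}(m)}$ entails $\vec{s}_{|h^{(j)}(m)} = \vec{t}_{|h^{(j)}(m)}$. Second, the leaves to which I trace the dependency are the variables $\vec{x}$; the constant gates contribute nothing and so impose no constraint, which is consistent with assigning them $k = 0$.

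The main obstacle, and the step I would treat most carefully, is the composition bound $h^{(k')}(h(m)) = h^{(k'+1)}(m)$. This equality is exactly the associativity of iteration, $h^{(k')} \circ h = h^{(k'+1)}$, but the subtlety is that the induction hypothesis gives me control of each $\tau_j$ relative to its \emph{own} input length, and I am feeding these gates an input length of $h(m)$ rather than $m$. Provided I phrase $h^{(k_j)}$-continuity as a statement quantified over all target lengths (which is how it is defined in the excerpt: ``for all $m \in \N$''), I may instantiate it at $h(m)$ rather than $m$, and the iteration identity then closes the step cleanly. Everything else is routine.
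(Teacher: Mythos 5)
Your proof is correct and follows essentially the same route as the paper's: structural induction with $k=0$ at variables and constants, and at each gate taking $k = \max_j k_j + 1$, instantiating the inner circuits' continuity at target length $h(m)$ and using inflationarity of $h$ to upgrade each $h^{(k_j)}$-continuity to $h^{(\max_j k_j)}$-continuity. The two points you flag for care (monotone upgrading of the iteration index, and the identity $h^{(k')}\circ h = h^{(k'+1)}$) are exactly the quantitative steps the paper's induction relies on, so nothing is missing.
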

\begin{proof}
Induction on the structure of $\sigma$.  If $\sigma$ is a constant
gate or variable, we may put $k = 0$. For the inductive case, suppose
$\sigma(\vec{x}) = o(\sigma_1(\vec{x}), \ldots,
\sigma_\ell(\vec{x}))$, where the gate $o$ is $h$-continuous.  Suppose
that, for each $i$ ($1 \leq i \leq \ell$), we have $k_i$ such
\linebreak that $\vec{x}_{|h^{(k_i)}(m)} = \vec{y}_{|h^{(k_i)}(m)}$
implies $\tau_i(\vec{x})_{|m} = \tau_i(\vec{y})_{|m}$ for all
$\vec{x}, \vec{y}, m$. Setting \linebreak $k= \max(\set{k_1, \ldots,
  k_\ell})+1$, we see that $\vec{x}_{|h^{(k)}(m)} =
\vec{y}_{|h^{(k)}(m)}$ implies $\tau_i(\vec{x})_{|h(m)} =
\tau_i(\vec{y})_{|h(m)}$ for all $\vec{x}, \vec{y}, m, i$, which
implies $\sigma(\vec{x})_{|m} = \sigma(\vec{y})_{|m}$ for all
$\vec{x}, \vec{y}, m$. This completes the induction.
\begin{flushright}
$\qed$
\end{flushright}
\end{proof}
It follows immediately from Lemma~\ref{lma:uc} that the discontinuous
functions $\Downarrow$, $\Max$, $\Card$, $\varepsilon$ and $\Fin$ are
not $\cU$-definable. In the first three cases, we have a slightly
stronger non-definability result. We employ the following terminology
in the sequel. If $\tau$ is an $\cO$-circuit and $\sigma =
o(\vec{\rho})$ a sub-circuit of $\tau$, where $o \in \cP$, we call
$\sigma$ a {\em predicate sub-circuit} of $\tau$. If, in addition,
$\sigma$ is not a sub-circuit of some other predicate sub-circuit of
$\tau$, we call $\sigma$ a {\em maximal predicate sub-circuit} of
$\tau$.
\begin{theorem}
The functions $\Downarrow$, $\Max$ and $\Card$ are not
$(\cU,\cP)$-definable.
\label{theo:dmc}
\end{theorem}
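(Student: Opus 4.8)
The plan is to convert the theorem into a statement about finite families of uniformly continuous functions, and then to defeat any such family by a single carefully engineered finite input. Let $G$ be one of $\Downarrow$, $\Max$, $\Card$, and suppose $\tau(x)$ is a $(\cU,\cP)$-circuit defining $G$. First I would list the maximal predicate subcircuits $\sigma_1,\dots,\sigma_r$ of $\tau$, and for each $b\in\set{0,1}^r$ let $\tau_b$ be $\tau$ with $\sigma_j$ replaced by the constant $\set{0}$ (if $b_j=1$) or $\emptyset$ (if $b_j=0$). Since every predicate gate of $\tau$ lies inside some maximal predicate subcircuit, each $\tau_b$ is predicate-free, hence a $\cU$-circuit; applying Lemma~\ref{lma:uc} to a common modulus for the finitely many gates occurring in $\tau$, each $\tau_b$ computes a uniformly continuous function $F_b$, and there is a single inflationary $H$ with every $F_b$ being $H$-continuous. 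Writing $P(x)\in\set{0,1}^r$ for the tuple of truth-values taken by $\sigma_1,\dots,\sigma_r$ on $x$, homomorphic evaluation gives $\tau(x)=F_{P(x)}(x)$. Relabelling, I obtain finitely many $H$-continuous functions $F_1,\dots,F_N$ ($N=2^r$) with the \emph{covering property}: for every $x$ there is some $i$ with $G(x)=F_i(x)$.

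The main work is then to exhibit one input on which the covering property needs more than $N$ functions. I would pick a rapidly increasing sequence $M_1<\dots<M_{N+1}$, set $z=\set{M_1,\dots,M_{N+1}}$, and use the nested truncations $z_t=\set{M_1,\dots,M_t}$. For each $t$ choose a covering index $i_t$ with $F_{i_t}(z_t)=G(z_t)$. The decisive device is that $z$ and $z_t$ share a long common initial segment (their difference consists only of the huge elements $M_{t+1},\dots,M_{N+1}$), so that once the spacing of the $M_j$ keeps $H(\ell_t)$ below $M_{t+1}$, the $H$-continuity of $F_{i_t}$ transfers a chosen initial segment verbatim: $F_{i_t}(z)\cap[0,\ell_t]=G(z_t)\cap[0,\ell_t]$ for a suitable level $\ell_t$. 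Choosing $\ell_t$ so that these segments are pairwise incompatible forces $i_1,\dots,i_{N+1}$ to be pairwise distinct, contradicting $N<N+1$.

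The three functions differ only in which level to read and how fast the $M_j$ must grow. For $\Card$, the value $G(z_t)=\set{t}$ sits at the small level $\ell_t=t$, so it suffices that all $M_j$ exceed $H(N+1)$; then $F_{i_t}(z)\cap[0,t]=\set{t}$ places $t\in F_{i_t}(z)$ while excluding $[0,t-1]$, and for $s<t$ one has $s\in F_{i_s}(z)$ but $s\notin F_{i_t}(z)$, separating the indices. For $\Max$, the value $G(z_t)=\set{M_t}$ lives at the large level $\ell_t=M_t$, so I impose the spacing $M_{t+1}>H(M_t)$; then $F_{i_t}(z)\cap[0,M_t]=\set{M_t}$ shows $M_s\notin F_{i_t}(z)$ for $s<t$, whereas $M_s\in F_{i_s}(z)$. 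For $\Downarrow$, the value $G(z_t)=[0,M_t]$ is monotone, so reading level $M_t$ gives only a lower bound; instead I read $\ell_t=M_t+1$ (requiring $M_{t+1}>H(M_t+1)$), which captures both $[0,M_t]\subseteq F_{i_t}(z)$ and $M_t+1\notin F_{i_t}(z)$, so that $M_s+1\in F_{i_t}(z)$ for $s<t$ while $M_s+1\notin F_{i_s}(z)$, again separating the indices.

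The conceptual obstacle, and the reason the naive argument (approach a discontinuity of $G$ and invoke continuity of the relevant $F_i$) fails, is that the predicate gates can switch regions exactly at the limiting input, so one cannot simply trap a discontinuity inside a single region. Worse, for $\Max$ and $\Downarrow$ the set of subsequential limit values of $G$ is \emph{finite} at every point, so the easy counting that disposes of $\Card$ (infinitely many distinct limit values clustering at $\emptyset$) has no analogue. The hard part is therefore the realization that one must force $N+1$ distinct covering indices \emph{at a single finite input} $z$, by transferring mutually incompatible initial segments through the common modulus $H$; engineering the staircase $z$ together with the correct reading levels $\ell_t$ is the crux of the proof.
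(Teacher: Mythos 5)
Your proposal is correct and follows essentially the same route as the paper's proof: the same reduction via maximal predicate sub-circuits and Lemma~\ref{lma:uc} to finitely many functions sharing a common inflationary modulus, the same staircase of finite sets whose spacing is dictated by that modulus, and a pigeonhole over the finitely many substitution patterns. The only difference is cosmetic --- the paper extracts an infinite subfamily of staircase sets on which the predicate values are constant and contradicts $h$-continuity on that subdomain, whereas you force $N+1$ pairwise distinct covering indices at the single completed set $z$, a contrapositive form of the same counting.
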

\begin{proof}
Let $\tau(x)$ be a $(\cU,\cP)$-circuit: we show that it does not
define any of the functions $\Downarrow(x)$, $\Max(x)$ and $\Card(x)$.
Consider all possible substitutions of constants $\set{0}$ or
$\emptyset$ for the maximal predicate sub-circuits $\pi_1(x), \ldots,
\pi_\ell(x)$ of $\tau(x)$: in each case the resulting circuit will be
$h'$-continuous for some (inflationary) $h': \N \rightarrow \N$, by
Lemma~\ref{lma:uc}. Let $h$ be the pointwise maximum of all these
$h'$; define the sequence of numbers $\set{m_i}_{i \geq 0}$ by setting
$m_0 = 1$ and $m_{i+1}= h(m_i +1) +1$, for all $i \geq 0$; and define
the sequence of sets $\set{s_i}_{i \geq 0}$ by setting $s_i = \set{m_j
  \mid 0 \leq j \leq i}$. Since the maximal predicate sub-circuits
of $\tau(x)$ can take at most $2^\ell$ possible values, let $I$ be an
infinite set of numbers such that, for all $i, j \in I$ and all $k$
($1 \leq k \leq \ell$), $\pi_k(s_i) = \pi_k(s_j)$.  It follows that
$\tau(x)$ is $h$-continuous on the domain $D = \set{s_i \mid i \in
  I}$. Pick any $i, j \in I$ with $i < j$. By construction,
$(s_i)_{|m_{i+1}-1} = (s_j)_{|m_{i+1}-1}$, i.e.~$(s_i)_{|h(m_i+1)} =
(s_j)_{|h(m_i+1)}$.  On the other hand, $(\Downarrow(s_i))_{|m_i+1}
\neq (\Downarrow(s_j))_{|m_i+1}$, since $m_i+1$ is in the latter, but
not the former.  But this is just the statement that $\Downarrow(x)$
is not $h$-continuous on $D$. Therefore, $\tau(x)$ does not compute
$\Downarrow(x)$.

\bigskip

\noindent
To show that the functions $\Max$ and $\Card$ are also not
$h$-continuous on $D$, we again pick any $i, j \in I$ with $i < j$, so
that ~$(s_i)_{|h(m_i+1)} = (s_j)_{|h(m_i+1)}$.  The result is secured
by noting that $\Card(s_i)_{|m_i+1} \neq \Card(s_j)_{|m_i+1}$, since
the former contains $|s_i| = i+1 \leq m_i$ (since $h$ is inflationary),
but the latter does not; likewise, $\Max(s_i)_{|m_i+1} \neq
\Max(s_j)_{|m_i+1}$ since the former contains $\max(s_i) =m_i$, but
the latter does not.
\begin{flushright}
$\qed$
\end{flushright}
\end{proof}
\begin{corollary}
The functions $\Downarrow$, $\Max$ and $\Card$ are not
definable by arithmetic circuits.
\label{cor:dmc}
\end{corollary}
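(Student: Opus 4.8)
The plan is to deduce Corollary~\ref{cor:dmc} directly from Theorem~\ref{theo:dmc} by observing that every arithmetic circuit is, in particular, a $(\cU,\cP)$-circuit. More precisely, I would argue that the class of gates available in an arithmetic circuit is a subset of $\cU \cup \cP$, so that any function definable by an arithmetic circuit is a fortiori $(\cU,\cP)$-definable; the corollary is then immediate by contraposition.

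The only point requiring a moment's care is the status of the multiplication gate $\cTimes$, which was explicitly noted earlier (in Section~\ref{sec:preliminaries}) to be \emph{neither} in $\cU$ (it is discontinuous at points $(s,\emptyset)$ and $(\emptyset,s)$ with $0 \in s$) nor in $\cP$ (its values are not confined to $\set{0}$ and $\emptyset$). So I cannot simply assert that $\cTimes \in \cU \cup \cP$. The fix is to invoke the reduction established in the discussion preceding Theorem~\ref{theo:BH}: by Equation~\eqref{eq:cTimescModTimes}, $(\cPlus,\cTimes)$-circuits and $(\cPlus,\cModTimes,\varepsilon)$-circuits define the same functions, and each of the gates $\cPlus$, $\cModTimes$, $\varepsilon$, together with the Boolean operators and singleton constants, lies in $\cU \cup \cP$. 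Indeed, $\cPlus$ and all Boolean operators are identically continuous hence in $\cU$; $\cModTimes$ was shown to be identically continuous, hence in $\cU$; and $\varepsilon$ is a predicate, hence in $\cP$. Thus every arithmetic circuit can be rewritten as a $(\cU,\cP)$-circuit computing the same function.

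I would therefore structure the proof as follows. First, recall the equivalence from \eqref{eq:cTimescModTimes}, noting that it holds at the level of the functions computed, not merely the sets defined, since the identities are universally quantified over $s$ and $t$. Second, observe that the resulting $(\cPlus,\cModTimes,\varepsilon)$-circuit uses only gates drawn from $\cU \cup \cP$. Third, conclude that any arithmetic-circuit-definable function is $(\cU,\cP)$-definable, so that Theorem~\ref{theo:dmc} applies verbatim to rule out $\Downarrow$, $\Max$ and $\Card$.

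I do not anticipate a genuine obstacle here, since this is a corollary in the strict sense; the substantive work was already carried out in Lemma~\ref{lma:uc} and Theorem~\ref{theo:dmc}. The one trap to avoid is the tempting but false move of treating $\cTimes$ itself as a continuous or predicate gate, which would make the deduction look trivial while actually being unjustified. Routing the argument through $\cModTimes$ and $\varepsilon$ via \eqref{eq:cTimescModTimes} is precisely what licenses the appeal to Theorem~\ref{theo:dmc}, and that is the step I would be careful to make explicit.
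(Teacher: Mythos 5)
Your proposal is correct and takes essentially the same route as the paper: the paper's own proof also handles the problematic $\cTimes$ gate by rewriting it via Equation~\eqref{eq:cTimescModTimes} in terms of the identically continuous gate $\cModTimes$ and the predicate gate $\varepsilon$, so that Theorem~\ref{theo:dmc} applies. Your version merely spells out in more detail the point the paper states in one line, including the correct observation that $\cTimes$ itself lies in neither $\cU$ nor $\cP$ and that the identities in \eqref{eq:cTimescModTimes} hold at the level of functions, not just variable-free circuits.
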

\begin{proof}
The gates $\cPlus$ and $\cModTimes$ are continuous; and the gate
$\cTimes$ is definable by means of $\cModTimes$ and the predicate gate
$\varepsilon$.
\begin{flushright}
$\qed$
\end{flushright}
\end{proof}

Further classes of functions may be shown not to be $(\cU,
\cP)$-definable using the same technique, for example, functions with,
as we might put it, moderately fast growth.
\begin{theorem}
Let $F: \bbP \rightarrow \bbP$ be a function such that, for $s
\in \bbP$ finite, non-empty, $F(s)$ is  non-empty with
$\max(s) \leq \min({F(s)})$. Then $F$ is not $(\cU,\cP)$-circuit
definable.
\label{theo:growth}
\end{theorem}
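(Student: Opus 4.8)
The plan is to adapt the mechanism behind Theorem~\ref{theo:dmc}: show that an arbitrary $(\cU,\cP)$-circuit $\tau(x)$, once its predicate sub-circuits are frozen and its arguments are confined to a well-chosen increasing chain of finite sets, is $h$-continuous for a fixed $h$, while the growth hypothesis on $F$ rules out any such $h$-continuity. First I would fix $\tau(x)$ with maximal predicate sub-circuits $\pi_1,\ldots,\pi_\ell$, substitute each of the $2^\ell$ combinations of $\set{0}$ and $\emptyset$ for them, and apply Lemma~\ref{lma:uc} to obtain an inflationary witness to $h'$-continuity for each substitution. Taking $h$ to be the pointwise maximum of these finitely many functions — which I may assume monotone and inflationary, since enlarging $h$ preserves $h$-continuity — every substituted circuit becomes $h$-continuous on the whole space.

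The key departure from Theorem~\ref{theo:dmc} is that the growth hypothesis supplies only the lower bound $\min(F(s))\geq\max(s)$, not the exact location of $F(s)$; I therefore cannot use the rigid recursion $m_{i+1}=h(m_i+1)+1$ and instead define the chain \emph{adaptively in terms of $F$}. Concretely, set $m_0=1$, let $s_i=\set{m_0,\ldots,m_i}$, and, writing $p_i=\min(F(s_i))$ (well-defined and $\geq m_i$ by hypothesis, since $s_i$ is finite and non-empty), set $m_{i+1}=h(p_i)+1$. This recursion is legitimate because $s_i$, and hence $p_i$, depends only on $m_0,\ldots,m_i$; moreover $m_{i+1}=h(p_i)+1\geq h(m_i)+1>m_i$, so the sequence $\set{m_i}_{i\geq0}$ is strictly increasing. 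As in Theorem~\ref{theo:dmc}, a pigeonhole argument yields an infinite index set $I$ on which each $\pi_k(s_i)$ is constant, so that $\tau(x)$ agrees on $D=\set{s_i\mid i\in I}$ with one of the substituted circuits and is therefore $h$-continuous on $D$.

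To finish, I would suppose for contradiction that $\tau$ defines $F$ and pick $i<j$ in $I$. Since the elements of $s_j$ not already in $s_i$ are $m_{i+1},\ldots,m_j$, all exceeding $h(p_i)=m_{i+1}-1$, we get $(s_i)_{|h(p_i)}=(s_j)_{|h(p_i)}$, so $h$-continuity at level $p_i$ forces $F(s_i)_{|p_i}=F(s_j)_{|p_i}$. But $p_i=\min(F(s_i))$ lies in $F(s_i)_{|p_i}$, so the left side is non-empty, whereas $\min(F(s_j))\geq m_j\geq m_{i+1}=h(p_i)+1>p_i$ forces $F(s_j)_{|p_i}=\emptyset$ — a contradiction. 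Hence no $(\cU,\cP)$-circuit defines $F$.

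The main obstacle is exactly the one the adaptive recursion is built to defeat: without control over where $\min(F(s_i))$ actually sits, a fixed gap schedule could place the sole distinguishing element of $F(s_i)$ beyond the range on which $s_i$ and $s_j$ are guaranteed to agree, collapsing the contradiction. Feeding $p_i$ back into the definition of $m_{i+1}$ synchronises the gap in the chain with the first element of $F(s_i)$, and the one point deserving care is verifying that this recursion stays well-founded and compatible with the monotone, inflationary $h$ extracted from Lemma~\ref{lma:uc}.
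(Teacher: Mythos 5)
Your proposal is correct and takes essentially the same route as the paper: the paper's proof likewise reuses the construction of Theorem~\ref{theo:dmc} (freezing the maximal predicate sub-circuits, taking $h$ as a pointwise maximum, pigeonholing to get the domain $D$) with the adaptive recursion $m_{i+1}=h(\max(m_i+1,\min(F(s_i))))+1$, which matches your $m_{i+1}=h(p_i)+1$ up to the paper's use of $\max(m_i+1,\cdot)$ where you instead pass to a monotone majorant of $h$ (and in fact inflationarity at $p_i\geq m_i$ already gives strict increase, so neither device is strictly needed). Your contradiction---agreement of $s_i$ and $s_j$ up to $h(p_i)$ versus $p_i\in F(s_i)_{|p_i}$ but $F(s_j)_{|p_i}=\emptyset$---is exactly the paper's disagreement argument at level $\max(m_i+1,\min(F(s_i)))$.
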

\begin{proof}
We use the same construction as in the proof of
Theorem~\ref{theo:dmc}, except that we set $m_{i+1} = 
h(\max(m_i+1, \min(F(s_i))))+1$ for all $i \geq 0$.  
Otherwise, the proof proceeds in
exactly the same way, noting that, for $i < j$,
$(s_i)_{| m_{i+1}-1} = (s_j)_{| m_{i+1}-1}$, but
\begin{equation*}
F(s_i)_{| \max(m_i+ 1, \min(F(s_i)))} \neq
F(s_j)_{| \max(m_i+ 1, \min(F(s_i)))},
\end{equation*}
since the set on the left-hand side contains the number
$\min(F(s_i))$, whereas the set on the right-hand side certainly
contains no number less than $\min(F(s_j)) > \min(F(s_i))$.  This
contradicts the $h$-continuity of $F$ on $D$.
\begin{flushright}
$\qed$
\end{flushright}
\end{proof}
We define the functions $\Sum, \Product: \bbP \rightarrow \bbP$ as
follows:
\begin{equation*}
\Sum(x) =
\begin{cases}
\set{\Sigma(x)} & \text{if $x$ is finite}\\
\N & \text{otherwise}
\end{cases}
\hspace{1cm}
\Product(x) = 
\begin{cases}
\set{\Pi x} & \text{if $x$ finite}\\
\N        & \text{otherwise}
\end{cases}
\end{equation*}
\begin{corollary}
The functions $\Sum$ and $\Product$ are not $(\cU, \cP)$-definable.
\end{corollary}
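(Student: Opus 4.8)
The plan is to obtain both results as consequences of Theorem~\ref{theo:growth}, which guarantees that any $F\colon \bbP \rightarrow \bbP$ sending each finite non-empty $s$ to a non-empty set with $\max(s) \leq \min(F(s))$ fails to be $(\cU,\cP)$-definable. So the entire task reduces to checking that $\Sum$ and $\Product$ exhibit this moderate growth on the relevant inputs.

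For $\Sum$ this is immediate. If $s$ is finite and non-empty, then $\Sum(s) = \set{\Sigma(s)}$ is a non-empty singleton, and since every element of $s$ is a non-negative integer and $\max(s)$ is itself one of the summands, $\Sigma(s) = \sum_{n \in s} n \geq \max(s)$. Thus $\max(s) \leq \min(\Sum(s))$ for all finite non-empty $s$, and Theorem~\ref{theo:growth} applies verbatim.

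For $\Product$ there is one wrinkle, which I expect to be the only real point requiring care: the growth hypothesis fails on sets containing $0$, since then $\Pi s = 0 < \max(s)$, so I cannot invoke the theorem as a black box. The fix is to observe that the proof of Theorem~\ref{theo:growth} never uses the growth condition on \emph{all} finite sets, but only on the specific sequence $s_i = \set{m_j \mid 0 \leq j \leq i}$ constructed there, with $m_0 = 1$ and each $m_{i+1} > m_i$. Every such $s_i$ consists of numbers $\geq 1$, so $\Pi s_i = m_i \cdot \prod_{j < i} m_j \geq m_i = \max(s_i)$, each factor being at least $1$. Hence $\min(\Product(s_i)) \geq \max(s_i)$ holds exactly on the sets the argument needs, and re-running the construction with $F = \Product$ produces, for $i < j$ in the infinite index set $I$, sets with $(s_i)_{|m_{i+1}-1} = (s_j)_{|m_{i+1}-1}$ whose $\Product$-values already differ below $\min(\Product(s_i))$, contradicting the $h$-continuity of $\Product$ on $D$.

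In short, I would state the $\Sum$ case as a one-line verification of the hypothesis of Theorem~\ref{theo:growth}, and dispatch $\Product$ by noting that its failure to grow on sets containing $0$ is harmless, precisely because the witnessing sets $s_i$ are deliberately anchored at $m_0 = 1$ and so never contain $0$. Everything beyond this observation is a direct appeal to the machinery already established in the proof of Theorem~\ref{theo:growth}.
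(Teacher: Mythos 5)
Your proof is correct, and for $\Sum$ it coincides exactly with the paper's: a one-line verification of the hypothesis of Theorem~\ref{theo:growth}. For $\Product$ you take a genuinely different route. The paper never re-opens the proof of Theorem~\ref{theo:growth}; instead it observes that if $\Product$ were $(\cU,\cP)$-definable, then so would be the auxiliary function $x \mapsto \Product((x \cup \set{1}) \setminus \set{0})$, since $(x \cup \set{1}) \setminus \set{0}$ is itself a circuit over the Boolean operators and constants, and this auxiliary function satisfies the stated hypothesis of the theorem verbatim (its effective argument never contains $0$ and always contains $1$, so the product is at least the maximum). You instead mine the proof of the theorem, noting that the growth condition is invoked only on the witness sets $s_i = \set{m_0, \ldots, m_i}$ anchored at $m_0 = 1$, which never contain $0$, so that $\Pi s_i \geq \max(s_i)$ holds precisely where the construction needs it; your check of the two facts actually used---$F(s_i)$ non-empty and $\max(s_i) \leq \min(F(s_i))$, given $m_0 = 1$ and the strict growth $m_{i+1} > m_i$ forced by $h$ being inflationary---is accurate. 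Both arguments are sound. The paper's reduction buys modularity, keeping Theorem~\ref{theo:growth} as a black box; yours effectively establishes a mildly stronger form of the theorem (the hypothesis need only hold along the constructed sequence of sets), at the cost of tying the corollary to the internals of that proof rather than to its statement.
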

\begin{proof}
The function $\Sum$
satisfies the conditions of Theorem~\ref{theo:growth}. Further,
if the function $\Product$ is $(\cU, \cP)$-definable, then so is
the function $x \mapsto \Product((x \cup \set{1}) \setminus \set{0})$. But this
latter function satisfies the conditions of Theorem~\ref{theo:growth}.
\begin{flushright}
$\qed$
\end{flushright}
\end{proof}

So far, we have presented non-definability results for the functions
$\Downarrow$, $\Max$, $\Card$, $\Sum$ and $\Product$, all of which are
highly discontinuous. But what about functions which have few points
of discontinuity? One such function is
\begin{equation*}
\Shove(x) = 
\begin{cases}
\set{n - \min(x) \mid n \in x} & \text{if $x$ non-empty}\\
\emptyset & \text{otherwise},
\end{cases}
\end{equation*}
which moves all the elements of its (non-empty) argument downwards `in
parallel' so that the smallest element is 0. A routine check shows
that $\Shove(x)$ is continuous everywhere in $\bbP \setminus
\set{\emptyset}$, though not uniformly continuous on $\bbP \setminus
\set{\emptyset}$.  We use by-now familiar techniques to show that
$\Shove(x)$ is not $(\cU,\cP)$-definable; however, the construction
this time is slightly more involved.
\begin{theorem}
The function $\Shove(x)$ is not $(\cU,\cP)$-definable. 
\label{theo:shove}
\end{theorem}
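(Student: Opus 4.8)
The plan is to run the template of Theorem~\ref{theo:dmc}. Suppose, for contradiction, that some $(\cU,\cP)$-circuit $\tau(x)$ computes $\Shove$. Let $\pi_1(x),\ldots,\pi_\ell(x)$ be the maximal predicate sub-circuits of $\tau(x)$. Substituting each of the $2^\ell$ combinations of the constants $\set{0}$ and $\emptyset$ for these sub-circuits yields circuits built solely from $\cU$-gates, each of which is $h'$-continuous for some inflationary $h'$ by Lemma~\ref{lma:uc}; let $h$ be the pointwise maximum of these finitely many moduli (an inflationary function). The eventual contradiction will be obtained by restricting to a domain $D$ on which $\tau$ is $h$-continuous, and then exhibiting two points of $D$ on which $\Shove$ violates $h$-continuity.

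The decisive difference from Theorem~\ref{theo:dmc} is that $\Shove$ is everywhere continuous on $\bbP\setminus\set{\emptyset}$, so no \emph{single} point of discontinuity is available; instead I must force the failure of \emph{uniform} continuity, and the source of that failure is that $\Shove$ subtracts a minimum which may be arbitrarily large. Accordingly I would take the probe sets to have \emph{growing} minima that outrun $h$. Concretely, put
\[
s_i = \set{\mu_i,\ \mu_i + i + 1},
\]
where the $\mu_i$ are chosen strictly increasing with $\mu_i \ge h(i+1)+1$ (clearly possible, e.g.\ recursively by $\mu_{i+1}=\max(\mu_i+1,\,h(i+2)+1)$). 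Each $s_i$ is finite and non-empty, and crucially $\Shove(s_i) = \set{0,\ i+1}$: the ``marker'' $i+1$ is pushed down to a \emph{small, fixed} output position depending on $i$ alone, irrespective of the large minimum $\mu_i$.

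Now pigeonhole on the at most $2^\ell$ possible value-combinations of $\pi_1,\ldots,\pi_\ell$ to obtain an infinite $I$ on which every predicate sub-circuit is constant. Exactly as in Theorem~\ref{theo:dmc}, on $D = \set{s_i \mid i \in I}$ the circuit $\tau$ agrees with one of the substituted $\cU$-circuits, and is therefore $h$-continuous on $D$; so if $\tau$ computed $\Shove$, then $\Shove$ would be $h$-continuous on $D$. To refute this, pick any $i<j$ in $I$. Since $h(i+1) \le \mu_i - 1 < \mu_i = \min(s_i) < \mu_j = \min(s_j)$, both $s_i$ and $s_j$ are empty throughout $[0,h(i+1)]$, whence $(s_i)_{|h(i+1)} = \emptyset = (s_j)_{|h(i+1)}$. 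On the other hand $i+1 \in \Shove(s_i)$, while $i+1 \notin \Shove(s_j) = \set{0,\ j+1}$ (as $i+1\ne 0$ and $i+1\ne j+1$), so $\Shove(s_i)_{|i+1} \neq \Shove(s_j)_{|i+1}$. This is precisely the failure of $h$-continuity of $\Shove$ on $D$ at output level $m=i+1$, contradicting the choice of $\tau$; since $\tau$ was arbitrary, $\Shove$ is not $(\cU,\cP)$-definable.

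I expect the main obstacle to be the calibration of the probe family, not any hard estimate. One must simultaneously arrange that (a) the marker $i+1$ lands at a small output position, visible already at level $m=i+1$; (b) the minima grow fast enough, $\mu_i\ge h(i+1)+1$, that $s_i$ and $s_j$ agree \emph{emptily} on the \emph{whole} input segment $[0,h(i+1)]$, even though $h$ need not be monotone; and (c) the family is uniform enough that \emph{every} pair $i<j$ is a witness, so that a witness survives into whatever infinite $I$ the pigeonhole produces. The interval variant $s_i = \set{\mu_i,\ \mu_i+1,\ \ldots,\ \mu_i+i}$, for which $\Shove(s_i) = [0,i]$, works in the same way and may read a little more cleanly.
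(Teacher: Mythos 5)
Your proof is correct, but it takes a genuinely different route from the paper's. Both arguments share the frame of Theorem~\ref{theo:dmc}: substitute $\set{0}$/$\emptyset$ for the maximal predicate sub-circuits, extract a common inflationary modulus $h$ via Lemma~\ref{lma:uc}, and pigeonhole to an infinite family on which all predicate values are constant, so that $\tau$ is $h$-continuous on the resulting domain $D$. Where you diverge is in the probe family and in what is refuted. The paper works with blocks $D_k$ consisting of all $2^{\ell+1}$ subsets of an interval of length $\ell+2$ anchored at its left endpoint, and needs a double pigeonhole: first an infinite $K$ on which the whole $2^{\ell+1}\times\ell$ array of predicate values is constant, then two rows $a<b$ of that constant array that coincide. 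This yields pairs $s_{k,a},s_{k,b}$ agreeing (emptily) to depth $k(\ell+2)-1$ whose $\Shove$-images are two \emph{fixed} distinct subsets of $[0,\ell+1]$, so the paper refutes uniform continuity of $\Shove$ on $D$ at the fixed output level $m=\ell+1$, independently of $h$. You instead take a one-parameter family of two-element sets $\set{\mu_i,\ \mu_i+i+1}$ with minima chosen \emph{after} $h$ so as to outrun it, use a single pigeonhole, and let the violated output level $m=i+1$ depend on the chosen pair; what you refute is only $h$-continuity for the specific circuit-derived $h$, which is all the contradiction requires. This intermediate statement is genuinely weaker --- indeed on your domain $D$ the function $\Shove$ \emph{is} uniformly continuous (agreement of two distinct probes to depth $\mu_m$ forces both indices past $m$, whence both images restrict to $\set{0}$ on $[0,m]$), so the paper's ``not uniformly continuous on $D$'' conclusion is unavailable to you --- but it suffices, and it buys a noticeably leaner argument, closer in spirit to the growing-minimum device of Theorem~\ref{theo:growth} than to the paper's block-and-row construction. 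The delicate points you flagged (non-monotone $h$, every pair $i<j$ surviving the pigeonhole) are handled correctly by taking $\mu_i\geq h(i+1)+1$ with $\mu_i$ strictly increasing.
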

\begin{proof}
Let $\tau(x)$ be a $(\cU,\cP)$-circuit: we show that it does not
define $\Shove(x)$. Let the maximal predicate sub-circuits of
$\tau(x)$ be $\pi_1(x), \ldots, \pi_\ell(x)$. For $k \geq 0$, 
define $D_k$ to be the set of subsets of the
interval $[k(\ell + 2), (k+1)(\ell + 2) -1]$ that contain the smallest
element, $k(\ell + 2)$:
\begin{equation*}
D_k = 
 \set{\set{k(\ell +2)} \cup s \mid s \subseteq [k(\ell + 2)+1, (k+1)(\ell + 2) -1]}.
\end{equation*}
Let the $2^{\ell+1}$ elements of $D_k$ be listed lexicographically as
$s_{k,1}, \ldots, s_{k,2^{\ell+1}}$. Observe that, for all $k$ and $i$
($1 \leq i \leq 2^{\ell + 1}$), $\Shove(s_{k,i}) = s_{0,i}$.

\bigskip

\noindent
For $k \geq 1$, let $B_k$ be the $2^{\ell+1} \times \ell$ array of
values:
\begin{equation*}
\left(
\begin{array}{lll}
\pi_1(s_{k,1}) & \cdots & \pi_\ell(s_{k,1})\\
\vdots        &        & \vdots\\
\pi_1(s_{k,2^{\ell +1}}) & \cdots & \pi_\ell(s_{k,2^{\ell +1}})
\end{array}
\right).
\end{equation*}
Since $B_k$ can take only finitely many values, let $K$ be an infinite
set of numbers such that $B_k$ is constant as $k$ varies over
$K$. Further, since the $2^{\ell+1}$ rows of $B_i$ can take only
$2^\ell$ possible values, there certainly exist $a, b$ ($1 \leq a < b
\leq 2^{\ell+1}$) such that the rows of $B_k$ (for $k \in K$) indexed
by $a$ and $b$ are identical. Let $D = \set{s_{k,i} \mid k \in K, i
  \in \set{a,b}}$. Thus, for all $i$ ($1 \leq i \leq \ell$), the
predicate sub-circuit value $\pi_i(s)$ is constant as $s$ ranges over
the domain $D$. By Lemma~\ref{lma:uc}, $\tau$ is uniformly continuous
on $D$.

\bigskip

\noindent
We now proceed to show that $\Shove(x)$ is not uniformly continuous
on $D$, completing the proof. For all $k \geq 1$, we have, on the one hand,
\begin{equation*}
(s_{k,a})_{|k(\ell+2) -1}  = \emptyset = (s_{k,b})_{|k(\ell+2) -1},
\end{equation*}
and, on the other, 
\begin{equation*}
\Shove(s_{k,a})_{|\ell+1} = (s_{0,a})_{|\ell+1} = s_{0,a}
\neq s_{0,b} = (s_{0,b})_{|\ell+1} = \Shove(s_{k,b})_{|\ell+1}.
\end{equation*}
Thus, there exists $m$ (namely, $m = \ell+1$) such that, for all $n$,
there exist $s, t \in D$ (namely, $s = s_{k,a}$ and $t = s_{k,b}$
for some $k \in K$ with $k \geq (n+1)/(\ell +2)$) such that $s_{|n} =
t_{|n}$ and $\Shove(s)_{|m} \neq \Shove(t)_{|m}$. But this is exactly the
  statement that $\Shove(x)$ is not uniformly continuous on $D$.
\begin{flushright}
$\qed$
\end{flushright}
\end{proof}

\subsection{Undefinability results for predicates}
\label{sec:defPredicates}
The results of Section~\ref{sec:defuc} apply to circuits featuring any
predicate gates whatsoever, and thus cannot be used to show the
undefinability of one predicate in terms of others.  In this section
we turn our attention to this problem.

\begin{lemma}
Let $\vec{s}_0$ be a tuple of finite sets and $m$ a number greater
than or equal to any element of any of these sets. Let
$\sigma_1(\vec{x}), \ldots, \sigma_p(\vec{x})$ be a collection of
$\cU$-circuits. Then there exists a tuple of finite sets $\vec{s}^*$
with $\vec{s}_0 = \vec{s}^*_{|m}$, and a number $m^*$ greater than or
equal to any element of any of the sets in $\vec{s}^*$, such that, for
all $\vec{t}$ with $\vec{t}_{|m^*} = \vec{s}^*$ and all $k$
\textup{(}$1 \leq k \leq p$\textup{)}, $\sigma_k(\vec{t}) = \emptyset$
if and only if $\sigma_k(\vec{s}^*) = \emptyset$, and furthermore, if
$\sigma_k(\vec{t}) \neq \emptyset$, then $\min(\sigma_k(\vec{t})) \leq
m^*$.
\label{lma:saturate}
\end{lemma}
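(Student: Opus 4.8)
The plan is to exploit global uniform continuity of the $\sigma_k$ together with a finite combinatorial saturation. First I would record that, since each $\sigma_k(\vec{x})$ is a $\cU$-circuit and every gate in $\cU$ is everywhere continuous on the compact space $\bbP^n$, hence $h$-continuous for some inflationary $h$, Lemma~\ref{lma:uc} yields a single inflationary $g: \N \rightarrow \N$ for which every $\sigma_k$ is $g$-continuous; that is, inputs agreeing on $[0,g(n)]$ give $\sigma_k$-values agreeing on $[0,n]$. I would then build $\vec{s}^*$ and $m^*$ by a process that only ever adds elements above the current bound (so $[0,m]$ is never disturbed and $\vec{s}^*_{|m} = \vec{s}_0$ holds automatically), leaving every $\sigma_k$ in a ``saturated'' state.

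Two observations drive the construction. \emph{Locking.} If $\vec{s}$ has all elements $\leq b$ and $\sigma_k(\vec{s}) \neq \emptyset$ with $\mu = \min(\sigma_k(\vec{s}))$ satisfying $g(\mu) \leq b$, then for every $\vec{t}$ with $\vec{t}_{|b} = \vec{s}$ the inputs agree on $[0, g(\mu)]$, so $\sigma_k(\vec{t})_{|\mu} = \sigma_k(\vec{s})_{|\mu}$; hence $\mu \in \sigma_k(\vec{t})$ and $\min(\sigma_k(\vec{t})) = \mu \leq b$. This already secures the bound on the minimum and the ``non-empty $\Rightarrow$ non-empty'' direction of the biconditional, provided I keep $g(\min \sigma_k) \leq m^*$ for every non-empty $\sigma_k$, which I arrange by simply enlarging the bound whenever necessary. \emph{The gap.} What locking does not give is the reverse direction: that an \emph{empty} $\sigma_k$ remains empty under all extensions.

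Closing that gap is the heart of the argument, and I would do it by iteration. I maintain a pair $(\vec{s}, b)$ with $\vec{s}_{|m} = \vec{s}_0$, all elements $\leq b$, and the invariant $g(\min \sigma_k(\vec{s})) \leq b$ for every non-empty $\sigma_k$. While some empty $\sigma_k$ admits an extension $\vec{t}$ (with $\vec{t}_{|b} = \vec{s}$) making $\sigma_k(\vec{t}) \neq \emptyset$, I let $\nu$ be the least value of $\min(\sigma_k(\vec{t}))$ over all such $\vec{t}$, fix a witness $\vec{t}^*$, and replace $\vec{s}$ by $\vec{t}^*_{|g(\nu)}$. A short continuity check shows $g(\nu) \geq b$ (otherwise the agreement of $\vec{t}^*$ and $\vec{s}$ on $[0, g(\nu)]$ would already force $\nu \in \sigma_k(\vec{s})$), so the update alters the input only above $b$, hence above $m$; consequently $\vec{s}_{|m} = \vec{s}_0$ persists, and since every previously non-empty circuit had its minimum locked at $\leq b$, those circuits stay non-empty with unchanged minima. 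I then raise $b$ to restore the invariant. Each iteration turns at least one empty circuit non-empty and turns none back, so the set $\set{k \mid \sigma_k(\vec{s}) \neq \emptyset}$ strictly grows; with only $p$ circuits, the loop halts after at most $p$ rounds.

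At termination I set $\vec{s}^* = \vec{s}$ and $m^* = b$. The stopping condition says that no empty $\sigma_k$ can be made non-empty by any extension agreeing on $[0,m^*]$, which is exactly robust emptiness; combined with the locking observation for the non-empty circuits, this delivers both directions of $\sigma_k(\vec{t}) = \emptyset \Leftrightarrow \sigma_k(\vec{s}^*) = \emptyset$ together with the bound $\min(\sigma_k(\vec{t})) \leq m^*$. The step I expect to be the genuine obstacle is precisely the robust-emptiness requirement: a single continuity estimate cannot rule out a far-off change in the argument injecting a far-off element into the value, so emptiness-stability is not a local consequence of continuity. It is the finiteness of the family, together with the monotone and irreversible growth of the non-empty set, that converts the purely local continuity control into the global stability the lemma demands.
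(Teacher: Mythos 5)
Your proposal is correct and takes essentially the same route as the paper's own proof: both invoke Lemma~\ref{lma:uc} to obtain a single inflationary modulus of continuity, both rest on the same ``locking'' observation (in the paper, $\vec{t}_{|M(\vec{s})} = \vec{s}_{|M(\vec{s})}$ implies $I(\vec{s}) \subseteq I(\vec{t})$, where $I(\vec{s})$ is the set of indices $k$ with $\sigma_k(\vec{s}) \neq \emptyset$ and $M(\vec{s})$ bounds $h(\min \sigma_k(\vec{s}))$ over these), and both close the emptiness-stability gap by iterating a saturation step that strictly enlarges the set of non-empty circuits and hence terminates after at most $p$ rounds. Your bookkeeping with the bound $b$, the witness $\vec{t}^*$, and the truncation $\vec{t}^*_{|g(\nu)}$ corresponds in all essentials to the paper's sequences $m_i$ and $\vec{s}_{i+1} = \vec{t}_{|M(\vec{t})}$.
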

\begin{proof}
By Lemma~\ref{lma:uc}, let $h: \N \rightarrow \N$ be an (inflationary)
function such that the functions computed by the $\sigma_k(\vec{x})$
are all $h$-continuous. Define, for any tuple $\vec{s}$,
\begin{eqnarray*}
I(\vec{s}) & = & 
   \set{k \mid \sigma_k(\vec{s}) \neq \emptyset}\\
M(\vec{s}) & = & 
\begin{cases}
   \max(\set{h(\min(\sigma_k(\vec{\vec{s}}))) \mid k \in I(\vec{s})}) & 
\text{if $I(\vec{s})$ non-empty}\\
0 & \text{otherwise}.
\end{cases}
\end{eqnarray*}
Thus, the set $I(\vec{s})$ tells us which of the $\sigma_k(\vec{s})$
are non-empty; and each of these non-empty sets contains an
element---say, $\ell_k$---such that $h(\ell_k) \leq M(\vec{s})$. If
$\vec{t}$ satisfies $\vec{t}_{|M(\vec{s})} = \vec{s}_{|M(\vec{s})}$, then, by
$h$-continuity, for any $k \in I(\vec{s})$, $\sigma_k(\vec{s})_{|\ell_k}
= \sigma_k(\vec{t})_{|\ell_k}$, whence $\ell_k \in
\sigma_k(\vec{t})_{|\ell_k}$, and hence $k \in I(\vec{t})$. That is:
$\vec{t}_{|M(\vec{s})} = \vec{s}_{|M(\vec{s})}$ implies $I(\vec{s}) \subseteq
I(\vec{t})$. Indeed, by the same argument, $I(\vec{t}_{|M(\vec{t})}) =
I(\vec{t})$; and $\vec{t}_{|M(\vec{t})}$ is of course a tuple of
finite sets.

\bigskip

\noindent
We construct sequences $\vec{s}_0, \ldots, \vec{s}_q$ and $m_0,
\ldots, m_q$,  starting with the given $\vec{s}_0$ and $m_0 = \max(m,
M(\vec{s}_0))$. Suppose $\vec{s}_i$ and $m_i$ have been defined. If,
for all $\vec{t}$, $\vec{t}_{|m_i} = \vec{s}_i$ implies $I(\vec{s}_i)
= I(\vec{t})$, set $q = i$ and stop. Otherwise, select some $\vec{t}$
such that $\vec{t}_{|m_i} = \vec{s}_i$ and $I(\vec{s}_i) \subsetneq
I(\vec{t})$, and let $\vec{s}_{i+1} = \vec{t}_{|M(\vec{t})}$ and
$m_{i+1} = \max(m_i, M(\vec{t}))$. Since $I(\vec{s}_i)$ cannot grow
for ever, this process terminates.  It is easy to see that $\vec{s}^*
= \vec{s}_q$ and $m^* = m_q$ have the required properties.
\begin{flushright}
$\qed$
\end{flushright}
\end{proof}

\begin{theorem}
Let $F: \bbP^n \rightarrow \bbP$ be defined by a $(\cU,
\varepsilon)$-circuit.  Then there exists $\vec{s} \in \bbP^n$ and $m
\in \N$ such that $F$ is \textup{(}uniformly\textup{)} continuous on
$\set{\vec{t} \in \bbP^n \mid \vec{s}_{|m} = \vec{t}_{|m}}$.
\label{theo:eNonDefinability}
\end{theorem}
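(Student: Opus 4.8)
The plan is to exploit the fact that the only discontinuous gate available is $\varepsilon$, and that $\varepsilon$ is a \emph{predicate}: on every input its value is one of the two constants $\set{0}$, $\emptyset$. The strategy is therefore to locate a basic neighbourhood $N = \set{\vec{t} \mid \vec{s}_{|m} = \vec{t}_{|m}}$ on which every $\varepsilon$-gate of the defining circuit $\tau(\vec{x})$ evaluates to a fixed constant. Once such an $N$ is in hand, replacing each $\varepsilon$-gate of $\tau$ by the constant it assumes on $N$ produces a $\cU$-circuit $\tau'$ that agrees with $F$ throughout $N$. By Lemma~\ref{lma:uc}, $\tau'$ computes a uniformly continuous function on all of $\bbP^n$; since $F$ coincides with $\tau'$ on $N$, the restriction $F_{|N}$ is uniformly continuous, which is exactly what is required (compactness of $N$ would in any case upgrade pointwise to uniform continuity).

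Everything thus reduces to finding $N$, and here Lemma~\ref{lma:saturate} does the real work, since it is precisely a device for freezing the emptiness-status of continuous subcircuits on a neighbourhood. The obstacle to applying it in one shot is that the argument $\rho$ of an $\varepsilon$-gate may itself contain further $\varepsilon$-gates, so $\rho$ need not be a $\cU$-circuit. I would therefore handle the $\varepsilon$-gates $g_1, \ldots, g_r$ of $\tau$, with argument subcircuits $\rho_1, \ldots, \rho_r$, in an order such that whenever $g_j$ occurs within $\rho_i$ we have $j < i$ (innermost gates first), building a descending chain of basic neighbourhoods $N_0 \supseteq N_1 \supseteq \cdots \supseteq N_r$, where $N_0$ is the trivial neighbourhood given by $\vec{s}_0 = \vec{\emptyset}$ and $m_0 = 0$. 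At stage $i$, every $\varepsilon$-gate occurring inside $\rho_i$ is some $g_j$ with $j < i$, already frozen on $N_{i-1}$; substituting the corresponding constants yields a $\cU$-circuit $\rho_i'$ with $\rho_i = \rho_i'$ throughout $N_{i-1}$. Applying Lemma~\ref{lma:saturate} to the single $\cU$-circuit $\rho_i'$, taking for its hypotheses the finite tuple and bound describing $N_{i-1}$, returns a finite tuple $\vec{s}_i$ and bound $m_i$ defining $N_i \subseteq N_{i-1}$ on which the emptiness-status of $\rho_i'$---and hence, since $\rho_i = \rho_i'$ on $N_i$, the value of $\varepsilon(\rho_i)$---is constant. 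Because $N_i \subseteq N_{i-1}$, the gates $g_1, \ldots, g_{i-1}$ stay frozen, so after stage $r$ the neighbourhood $N_r$ freezes all of $g_1, \ldots, g_r$ simultaneously, and we may set $\vec{s} = \vec{s}_r$, $m = m_r$.

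The remaining points are routine but deserve to be spelled out. First, substituting the constants $\set{0}$, $\emptyset$ for $\varepsilon$-gates genuinely leaves a $\cU$-circuit, since the surviving gates are $\cU$-gates, Boolean operators and singleton constants, all of which lie in $\cU$. Second, the neighbourhoods genuinely nest: Lemma~\ref{lma:saturate} guarantees $(\vec{s}_i)_{|m_{i-1}} = \vec{s}_{i-1}$ together with $m_i \geq m_{i-1}$, whence $N_i \subseteq N_{i-1}$. Third, to invoke Lemma~\ref{lma:uc} on $\tau'$ one needs a single inflationary $h$ under which all gates of $\tau'$ are $h$-continuous; this is obtained as the pointwise maximum of the finitely many moduli supplied by the uniform continuity of the individual $\cU$-gates appearing in $\tau'$. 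The sole conceptual obstacle, namely the mutual dependence of nested $\varepsilon$-gates, is exactly what the bottom-up iteration of Lemma~\ref{lma:saturate} is designed to overcome; note that the $\min$-bound clause of that lemma is not actually needed for the present argument, only its emptiness-status conclusion.
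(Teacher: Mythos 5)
Your proof is correct and follows essentially the same route as the paper's: both arguments iteratively apply Lemma~\ref{lma:saturate} to freeze the emptiness-status of $\varepsilon$-gate arguments bottom-up on a shrinking chain of basic neighbourhoods, then substitute the constants $\set{0}$, $\emptyset$ to obtain a $\cU$-circuit agreeing with $F$ on the final neighbourhood, and invoke Lemma~\ref{lma:uc} for uniform continuity. The only difference is bookkeeping---you freeze the $\varepsilon$-gates one at a time in an innermost-first order, whereas the paper batches all most-deeply-nested $\varepsilon$-sub-circuits at each stage and recurses on nesting depth---which changes nothing essential.
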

\begin{proof}
We construct a sequence $\vec{s}^{(0)}, \ldots, \vec{s}^{(d)}$ of
tuples of sets, a sequence \linebreak $m^{(0)}, \ldots, m^{(d)}$ of
numbers and a sequence $\tau^{(0)}, \ldots, \tau^{(d)}$ of
circuits. We will show that putting $\vec{s} = \vec{s}^{(d)}$ and $m =
m^{(d)}$ secures the statement of the theorem. We begin with
$\vec{s}^{(0)} = \vec{\emptyset}$, $m^{(0)} = 0$, and $\tau^{(0)} =
\tau$.

\bigskip

\noindent
Suppose $\vec{s}^{(i)}$, $m^{(i)}$ and $\tau^{(i)}$ have already been
defined. If $\tau^{(i)}$ is a $\cU$-circuit, set $d = i$, and stop the
process. Otherwise, let $\varepsilon(\sigma_1), \ldots,
\varepsilon(\sigma_p)$ be a list of the most deeply-nested
$\varepsilon$-sub-circuits of $\tau^{(i)}$. Thus, the $\sigma_k$ ($1
\leq k \leq p$) are all $\cU$-circuits.  By Lemma~\ref{lma:saturate},
we have a tuple of finite sets $\vec{s}^*$ and a number $m^*$ greater
than any element of these sets, satisfying the following properties:
(i) $\vec{s}^{(i)} = \vec{s}^*_{|m^{(i)}}$; (ii) for all $k$ ($1 \leq
k \leq p$) and all $\vec{t}$ such that $\vec{t}_{|m^*} = \vec{s}^{*}$,
$\sigma_k(\vec{t}) = \emptyset$ if and only if $\sigma_k(\vec{s}^*) =
\emptyset$. Set $\vec{s}^{(i+1)} = \vec{s}^*$, and $m^{(i+1)} =
m^*$. Further, let $\tau^{(i+1)}$ be the circuit obtained from
$\tau^{(i)}$ by substituting the constant $\set{0}$ for any
sub-circuit $\varepsilon(\sigma_k)$ such that
$\sigma_k(\vec{s}^{(i+1)}) = \emptyset$, and the constant $\emptyset$
for any sub-circuit $\varepsilon(\sigma_k)$ such that
$\sigma_k(\vec{s}^{(i+1)}) \neq \emptyset$.  We see that, for all
$\vec{t}$ such that $\vec{t}_{|m^{(i+1)}} = \vec{s}^{(i+1)}$,
$\tau^{(i+1)}(\vec{t}) = \tau^{(i)}(\vec{t})$, since the sub-circuits
$\varepsilon(\sigma_k(\vec{x}))$ of $\tau^{(i)}$ take the substituted
values ($\set{0}$ or $\emptyset$) uniformly for all such $\vec{t}$.
Since the depth of nesting of $\varepsilon$-gates in $\tau^{(i+1)}$ is
strictly less than that in $\tau^{(i)}$, this process terminates.

\bigskip

\noindent
It is simple to verify that $\vec{t}_{|m^{(d)}} = \vec{s}^{(d)}$
implies $\vec{t}_{|m^{(i)}} = \vec{s}^{(i)}$ for all $i$ ($0 \leq i
\leq d$). Hence, $\vec{t}_{|m^{(d)}} = \vec{s}^{(d)}$ implies
$\tau^{(d)}(\vec{t}) = \tau^{(d-1)}(\vec{t}) = \cdots =
\tau^{(0)}(\vec{t}) = \tau(\vec{t})$.  Since $\tau^{(d)}(\vec{x})$
computes a uniformly continuous function, $F$ is uniformly continuous
on $\set{\vec{t} \in \bbP^n \mid \vec{s}^{(d)} = \vec{t}_{|m^{(d)}}} =
\set{\vec{t} \in \bbP^n \mid (\vec{s}^{(d)})_{|m^{(d)}} =
  \vec{t}_{|m^{(d)}}}$.
\begin{flushright}
$\qed$
\end{flushright}
\end{proof}
\begin{corollary}\label{cor:fin}
The function $\Fin$ is not $(\cU,\varepsilon)$-definable.  Further, no
$(\cU,\varepsilon)$-definable function $F: \bbP \rightarrow \bbP$
satisfies any of the following conditions for all finite, non-empty
$t$: 
{\small
\begin{equation*}
F(t)  = \begin{cases} 
\set{0} & \text{\rm if $|t|$ is even}\\
\emptyset & \text{\rm otherwise};
\end{cases} \hspace{0.25cm}
F(t)  = \begin{cases} 
\set{0} & \text{\rm if $\max(t)$ even}\\
\emptyset & \text{\rm otherwise};
\end{cases}  \hspace{0.25cm}
F(t)  = \begin{cases} 
\set{0} & \text{\rm if $\sum t$ even}\\ 
\emptyset & \text{\rm otherwise}.
\end{cases} 
\end{equation*}
}
\label{cor:finiteness}
\end{corollary}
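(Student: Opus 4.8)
The plan is to apply Theorem~\ref{theo:eNonDefinability} in contrapositive form. That theorem tells us that every $(\cU,\varepsilon)$-definable function $F:\bbP\rightarrow\bbP$ is uniformly continuous on some basic neighbourhood $C=\set{t\in\bbP\mid s_{|m}=t_{|m}}$. Hence, to establish each of the four non-definability claims, it suffices to show that the candidate function fails to be uniformly continuous on \emph{every} such $C$; equivalently, that for every set $s$ and number $m$ there is no $n$ witnessing the $k=0$ instance of uniform continuity on $C$. Since all the functions in question take only the values $\set{0}$ and $\emptyset$ on the inputs that matter, and $G(t)_{|0}=G(t)\cap\set{0}=G(t)$ whenever $G(t)\in\set{\set{0},\emptyset}$, the $k=0$ instance already forces the entire value to be determined by a sufficiently long initial segment of the argument.

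First I would dispose of $\Fin$. Given any $s$ and $m$, and any putative modulus $n$, set $N=\max(n,m)$ and compare the finite set $t=s_{|m}$ with the infinite set $t'=s_{|m}\cup[N+1,\infty)$. Both lie in $C$ and agree on $[0,N]\supseteq[0,n]$, so $t_{|n}=t'_{|n}$; yet $\Fin(t)=\set{0}$ while $\Fin(t')=\emptyset$. This refutes uniform continuity on $C$, so $\Fin$ is not $(\cU,\varepsilon)$-definable.

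Next I would treat the three parity functions uniformly. Suppose, toward a contradiction, that some $(\cU,\varepsilon)$-definable $F$ agrees with one of them on all finite, non-empty $t$. By Theorem~\ref{theo:eNonDefinability}, $F$ is uniformly continuous on some $C$ as above, so there is an $n$ such that $t_{|n}=t'_{|n}$ implies $F(t)=F(t')$ for all $t,t'\in C$ with $F(t),F(t')\in\set{\set{0},\emptyset}$. Put $N=\max(n,m)$. In each case I would exhibit two finite, non-empty sets in $C$ that agree on $[0,N]$ but whose relevant statistic has opposite parity: for cardinality, take $t=s_{|m}\cup\set{N+1}$ and $t'=s_{|m}\cup\set{N+1,N+2}$, whose sizes differ by one; for the maximum, take $t=s_{|m}\cup\set{N+1}$ and $t'=s_{|m}\cup\set{N+2}$, whose maxima $N+1$ and $N+2$ differ in parity (here $N\geq m$ guarantees that the added element is indeed the maximum, since $s_{|m}\subseteq[0,m]$); and for the sum, again take $t=s_{|m}\cup\set{N+1}$ and $t'=s_{|m}\cup\set{N+2}$, whose sums differ by $1$. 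In every case $t_{|n}=t'_{|n}$ yet $F(t)\neq F(t')$, contradicting uniform continuity of $F$ on $C$.

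The argument is essentially routine once the contrapositive reading of Theorem~\ref{theo:eNonDefinability} is in place, so I do not anticipate a genuine obstacle so much as two bookkeeping points to watch. First, the witnessing sets must be chosen inside the given neighbourhood $C$, which forces all perturbations to occur strictly above both $m$ and the continuity modulus $n$; this is exactly why I replace $n$ by $N=\max(n,m)$ before adding the elements $N+1,N+2$. Second, the two sets must be finite and non-empty, so that the parity specification actually pins down $F(t)$ and $F(t')$ and licenses the comparison. The conceptual content is the uniform reduction of all four statements to the single fact that no function separating finiteness, or a parity of cardinality, maximum, or sum, can be uniformly continuous on any basic clopen set.
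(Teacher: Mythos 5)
Your proposal is correct and follows essentially the same route as the paper: the contrapositive of Theorem~\ref{theo:eNonDefinability}, refuting uniform continuity on every basic neighbourhood $\set{t \mid s_{|m} = t_{|m}}$, with the $\Fin$ case settled by appending an infinite tail above $\max(n,m)$ --- exactly the paper's witness $t = s \cup [\max(m,n)+1,\infty)$. The only difference is that you explicitly work out the three parity cases (via perturbations at $N+1$, $N+2$), which the paper dismisses with ``the other functions are treated similarly,'' and your choices there are all valid.
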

\begin{proof}
We need only verify that none of the functions in question is
uniformly continuous on any domain $D$ of the form $\set{t \in \bbP
  \mid s = t_{|m}}$, where $s$ is a finite set of numbers and $m$ a
number greater than or equal to every element of $s$.  Consider, for
example, the function $\Fin$.  Since $s$ is finite, $\Fin(s)_{|0} =
\set{0}$.  For all $n > 0$, there exists $t \in D$ (namely, $t = s
\cup [\max(m, n)+1,\infty)$) such that $s_{|n} = t_{|n}$, but
  $\Fin(t)_{|0} = \emptyset$.  This is the statement that $\Fin$ is
  not uniformly continuous on $D$.  The other functions are treated
  similarly.
\begin{flushright}
$\qed$
\end{flushright}
\end{proof}

Theorem~\ref{theo:eNonDefinability} has a different character from
Theorems~\ref{theo:dmc}--\ref{theo:shove}, since it concerns the
non-definability of one predicate in terms of another. It helps to
picture what is going on in the following terms. We remarked above
that the $\varepsilon(x)$-gate is discontinuous only at the point $x =
\emptyset$. Thus, in constructing the sequence $\vec{s}_0, \ldots,
\vec{s}_q$ in the proof of Lemma~\ref{lma:saturate}, we are
restricting attention to domains in which fewer discontinuities
remain---a process which will eventually result in a domain containing
a non-empty open set, on which the defined function is continuous. On
the other hand, the functions mentioned in
Corollary~\ref{cor:finiteness} are discontinuous at all finite,
non-empty sets, and so cannot be definable by $(\cU,
\varepsilon)$-circuits.

Arno Pauly~\cite[p.~15]{mcf:pauly09} has kindly pointed out that
Theorem~\ref{theo:eNonDefinability} is in fact a special case of a
more general theorem on discontinuous functions proved by
Hertling~\cite{mcf:hertling96}. Let $X$ and $Y$ be topological spaces
and $F:X \rightarrow Y$ a function.  If $A \subseteq X$, denote by
$F_{|A}$ the restriction of $F$ to $A$.  For any ordinal $\beta$,
define
\begin{equation*}
A_\beta =   
\begin{cases}
X & \text{if $\beta = 0$}\\ 
\set{x \in A_\alpha \mid F_{|A_\alpha} \text{not continuous at $x$}} & 
    \text{if $\beta = \alpha+1$}\\
\cap_{\alpha < \beta} A_\alpha  & \text{if $\beta$ a limit ordinal}
\end{cases}
\end{equation*}
We then say that the {\em level} of $F$, denoted $\lev^1(F)$, is the
smallest ordinal $\beta$ such that $A_\beta$ is empty, and undefined
if no such $\beta$ exists. (The superscript $1$ is used to distinguish
$\lev^1(F)$ from a related notion which we do not need here.) Thus,
for instance, if $F$ is everywhere continuous, and $X$ is non-empty,
then $\lev^1(F) = 1$.  Hertling shows (p.~19) that, for $G: X
\rightarrow Y$ and $F: Y \rightarrow Z$ functions with $Y$ a regular
space, if $G$ has finite level, then the composition, $F \circ G: X
\rightarrow Z$ satisfies $\lev^1(F \circ G) \leq \lev^1(F) \cdot
\lev^1(G)$.

Applying the apparatus of levels to the present case, we note that,
since $\varepsilon(x)$ is continuous everywhere except at $x =
\emptyset$, we have $A_1 = \set{\emptyset}$ and $A_2 = \emptyset$,
whence $\lev^1(\varepsilon(x)) = 2$. It follows that every
$(\cU,\varepsilon(x))$-definable circuit has a finite level. On the
other hand, since the function $\Fin(x)$ is everywhere discontinuous,
it has no level. Hence $\Fin(x)$ is not
$(\cU,\varepsilon(x))$-definable. By contrast, in
Theorems~\ref{theo:dmc}--\ref{theo:shove}, there is no requirement
that the predicate gates in $\cP$ have a finite level; and in
Theorem~\ref{theo:shove}, $\Shove(x)$ actually has level 2.

\section{Numerical functions}
\label{sec:defNumerical}
In Section~\ref{sec:defFunctions}, we saw various examples of
numerical functions definable by additive and arithmetic
circuits. Here, we present some corresponding non-definability results.
\subsection{Regressive functions}
\label{sec:regressive}
Call a function $f: \N^n \rightarrow \N$ {\em regressive} if the set
$\set{f(\vec{n}) \mid \vec{n} \in \N^n,\ f(\vec{n}) < \min(\vec{n})}$
is infinite. Alternatively, $f$ is regressive if, for all $k \geq 0$
there exists $\vec{n} \in \N^n$ such that $k \leq f(\vec{n}) \leq
\min(\vec{n})-1$.  Our first theorem says, in so many words, that
regressive functions are not definable by arithmetic circuits, even
when gates computing arbitrary predicates are available.
\begin{theorem}
Let $h: \N \rightarrow \N$ be an inflationary function and $\cO$ a
collection of gates computing $h$-continuous functions. Let $f: \N^k
\rightarrow \N$ be a function such that, for every $q \geq 0$, the set
$\set{f(\vec{n}) \mid \vec{n} \in \N^k,\ h^{(q)}(f(\vec{n})) <
  \min(\vec{n})}$ is infinite. Then $f$ is not $(\cO, \cP)$-definable.
\label{theo:small}
\end{theorem}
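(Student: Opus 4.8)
The plan is to adapt the substitution-and-pigeonhole technique used in the proof of Theorem~\ref{theo:dmc}, exploiting the fact that, for a numerical function, the defining circuit is fed only tuples of singletons $\vec{s} = (\set{n_1},\ldots,\set{n_k})$, and that such a singleton $\set{n_i}$ is indistinguishable from $\emptyset$ on any window $[0,N]$ with $N < n_i$. Suppose, for contradiction, that some $(\cO,\cP)$-circuit $\tau(\vec{x})$ defines $f$, and let $\pi_1(\vec{x}),\ldots,\pi_\ell(\vec{x})$ be its maximal predicate sub-circuits. First I would substitute each of the constants $\set{0}$ and $\emptyset$ for the $\pi_i$ in all $2^\ell$ possible ways; each resulting circuit is a pure $\cO$-circuit (every predicate gate having been absorbed), so by Lemma~\ref{lma:uc} it computes an $h^{(k)}$-continuous function for some $k$. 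Taking $q$ to be the maximum of these finitely many exponents, and using that $h$ is inflationary (so $h^{(k)}(m)\le h^{(q)}(m)$, and $g$-continuity is inherited when $g$ increases pointwise), every one of the substituted circuits is $h^{(q)}$-continuous. This fixes once and for all the value of $q$ for which the hypothesis on $f$ will be invoked.

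Next I would run a pigeonhole argument to isolate a domain of inputs on which $\tau$ itself is $h^{(q)}$-continuous. Let $W = \set{\vec{n}\in\N^k \mid h^{(q)}(f(\vec{n})) < \min(\vec{n})}$; by hypothesis the image $\set{f(\vec{n}) \mid \vec{n}\in W}$ is infinite. Evaluating the predicate sub-circuits at the singleton tuple associated with $\vec{n}$ assigns to each $\vec{n}\in W$ one of at most $2^\ell$ truth-value vectors, so some class $W'$ on which this vector is constant must still have infinite $f$-image. On the domain $D = \set{(\set{n_1},\ldots,\set{n_k}) \mid \vec{n}\in W'}$ all the $\pi_i$ take fixed values, hence $\tau$ agrees on $D$ with one of the substituted circuits and is therefore $h^{(q)}$-continuous on $D$.

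Finally I would derive the contradiction. Pick $\vec{n}\in W'$ and set $m = f(\vec{n})$; since $\vec{n}\in W$ we have $\min(\vec{n}) > h^{(q)}(m)$, so every coordinate singleton already restricts to $\emptyset$ on $[0,h^{(q)}(m)]$. Because $h$ is inflationary, $\min(\vec{n'}) > h^{(q)}(f(\vec{n'})) \ge f(\vec{n'})$ for every $\vec{n'}\in W'$, and since $f$ is unbounded on $W'$ the minima are unbounded as well; I can therefore choose $\vec{n'}\in W'$ with $f(\vec{n'}) > \max(m, h^{(q)}(m))$, forcing $\min(\vec{n'}) > h^{(q)}(m)$ too. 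For the corresponding tuples $\vec{s},\vec{t}\in D$ this gives $\vec{s}_{|h^{(q)}(m)} = \vec{\emptyset} = \vec{t}_{|h^{(q)}(m)}$, whereas $\tau(\vec{s})_{|m} = \set{m}$ but $\tau(\vec{t})_{|m} = \set{f(\vec{n'})}\cap[0,m] = \emptyset$ because $f(\vec{n'}) > m$. This violates the $h^{(q)}$-continuity of $\tau$ on $D$, so no such $\tau$ exists.

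The step I expect to be the main obstacle is this last one, precisely because $h$ is only assumed inflationary and need not be monotone: one cannot argue that a larger value of $f$ pushes the continuity threshold $h^{(q)}(\cdot)$ upward. The device that sidesteps this is the chain $\min(\vec{n'}) > h^{(q)}(f(\vec{n'})) \ge f(\vec{n'})$, which both guarantees that the minima grow without bound along $W'$ and lets me select a second tuple whose coordinates all lie above the single threshold $h^{(q)}(m)$ determined by the first tuple---so that the two singleton tuples become literally indistinguishable from $\vec{\emptyset}$ on the window $[0,h^{(q)}(m)]$ while their $f$-values separate past $m$.
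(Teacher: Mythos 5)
Your proposal is correct and follows essentially the same route as the paper's proof of Theorem~\ref{theo:small}: substitute the constants $\set{0}$ and $\emptyset$ for the maximal predicate sub-circuits and take $q$ as the maximum continuity exponent, pigeonhole an infinite family of singleton tuples with constant predicate vector and unbounded $f$-values, and contradict $h^{(q)}$-continuity using two tuples all of whose coordinates exceed $h^{(q)}(m)$ for $m = f(\vec{n})$. The only divergence is your final selection step: where the paper passes to a further infinite subset $T''$ on which $h^{(q)}$ is increasing over the chosen $f$-values, you instead pick the second tuple with $f(\vec{n}') > \max(m, h^{(q)}(m))$ and exploit inflationarity via the chain $\min(\vec{n}') > h^{(q)}(f(\vec{n}')) \geq f(\vec{n}') > h^{(q)}(m)$ --- a valid and mildly cleaner way of handling the possible non-monotonicity of $h$.
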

\begin{proof}
If $\vec{n} = (n_1, \ldots, n_k)$  is a tuple of numbers, denote by 
$[\vec{n}]$ the corresponding tuple of singletons $(\set{n_1}, \ldots, \set{n_k})$.
Suppose $\tau(\vec{x})$ is an $(\cO, \cP)$-circuit. Let the maximal
predicate sub-circuits of $\tau$ be $\pi_1(\vec{x}), \ldots,
\pi_\ell(\vec{x})$. If $\sigma(\vec{x})$ is obtained by substituting the constants
$\set{0}$ and $\emptyset$ for these circuits in any way, then $\sigma$ is 
$h^{(q')}$-continuous for some $q'$; let $q$ be the maximum of these $q'$.

\bigskip

\noindent
By hypothesis, there exists an infinite set $T$ of tuples $\vec{n}$
such that $f(\vec{n}) < h^{(q)}(\min(\vec{n}))$ for all $\vec{n} \in
T$, with the values $f(\vec{n})$ all distinct. Since the tuple
$\vec{v}([\vec{n}]) = (\pi_1([\vec{n}]), \ldots, \pi_\ell([\vec{n})])$
can only take a finite number of values as $\vec{n}$ ranges over $T$,
select an infinite subset $T' \subseteq T$ for which 
$\vec{v}([\vec{n}]) = (\kappa_1, \ldots, \kappa_\ell)$ is
constant. Since we may regard the sets $\kappa_i$ ($1 \leq i \leq
\ell$) as the {\em circuits} $\set{0}$ or $\emptyset$, let
$\tau'(\vec{x})$ be the result of substituting each constant
$\kappa_i$ for $\pi_i(\vec{x})$; thus, for $\vec{n} \in T'$,
$\tau'([\vec{n}]) = \tau([\vec{n}])$. By construction,
$\tau'(\vec{x})$ is $h^{(q)}$-continuous. Further, since $h^{(q)}$ is
inflationary, we may easily select an infinite subset $T'' \subseteq
T'$ such that $h^{(q)}$ is also increasing on the set $\set{f(\vec{n})
  \mid \vec{n} \in T''}$.

\bigskip

\noindent
Now pick $\vec{n}$ and $\vec{n}'$ from $T''$ with
$f(\vec{n}) < f(\vec{n}')$. By construction, we have
$h^{(q)}(f(\vec{n})) < \min(\vec{n})$, and indeed
$h^{(q)}(f(\vec{n})) \leq h^{(q)}(f(\vec{n}')) < \min(\vec{n}')$. 
Putting $m = f(\vec{n})$ and 
applying the $h^{(q)}$-continuity of $\tau'(x)$, we have
\begin{eqnarray*}
[\vec{n}]_{|h^{(q)}(m)} = \vec{\emptyset} = [\vec{n}']_{|h^{(q)}(m)} & \Rightarrow &
\tau'([\vec{n}])_{|m} = \tau'([\vec{n}'])_{|m}\\
& \Rightarrow & \tau([\vec{n}])_{|m} = \tau([\vec{n}'])_{|m}.
\end{eqnarray*}
But, also by construction, $\set{f(\vec{n})}_{|m} = \set{f(\vec{n})} \neq \emptyset
=  \set{f(\vec{n}')}_{|m}$. Therefore, $\tau(\vec{x})$ does not define $f$.
\begin{flushright}
$\qed$
\end{flushright}
\end{proof}

In the context of arithmetic circuits, if $f:\N^k \rightarrow \N$ is a
numerical function, we can treat it, by courtesy, as a
set-function---i.e., a type of {\em gate}---understanding it to mean
\begin{eqnarray*}
F(s, \ldots s_k) = 
\begin{cases} \set{f(n_1, \ldots, n_k)} & \text{if $s_i = \set{n_i}$ for all $i$
($1 \leq i \leq k$)}\\
\emptyset & \text{otherwise}.
\end{cases}
\end{eqnarray*}
Define $n \dot{-} 1$ to be $n-1$ if $n >0$, and $0$ otherwise. For $b
>1$, define $\log^*_b n$ to be $\log_b n$ if $n >0$, and $0$,
otherwise. If $r$ is a non-negative real number, denote by $\lceil r
\rceil$ the smallest natural number greater than or equal to $r$.
\begin{corollary}
Let $0 < a,b < 1$ and $c > 1$. Then:
\begin{enumerate}[label=\textup{(}\emph{\roman*}\textup{)}]
\item The function $n \mapsto n \dot{-} 1$ is not $(\cPlus, \cTimes, \cP)$-definable;
\item the function $n \mapsto \lceil a n \rceil$ is 
   not $(\cPlus, \cTimes, n \mapsto n \dot{-} 1, \cP)$-definable;
\item the function $n \mapsto \lceil n^b \rceil$ is not $(\cPlus,
  \cTimes, n \mapsto n \dot{-} 1, n \mapsto \lceil an \rceil,
  \cP)$-definable;
\item the function $n \mapsto \lceil \log^*_c n \rceil$ is not $(\cPlus,
  \cTimes, n \mapsto n \dot{-}1, n \mapsto \lceil an \rceil, n \mapsto \lceil n^b \rceil
  \cP)$-definable.
\end{enumerate}
\label{cor:hCont}
\end{corollary}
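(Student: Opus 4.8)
The plan is to obtain all four parts as instances of Theorem~\ref{theo:small}. For a fixed part the work splits in two: (a) produce an inflationary $h:\N\to\N$ for which the whole gate-set collapses to a family of $h$-continuous set-functions together with predicates from $\cP$; and (b) verify that the target $f$ satisfies the hypothesis of Theorem~\ref{theo:small}, namely that $\set{f(n)\mid h^{(q)}(f(n))<n}$ is infinite for every $q$. The four targets grow ever more slowly --- like $n$, like $an$, like $n^b$, and like $\log^*_c n$ --- while the available gates force $h$ to grow ever faster, so (a) and (b) must be balanced separately in each part.

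For the reduction in (a) I would proceed as follows. First, $\cTimes$ is eliminated exactly as in Corollary~\ref{cor:dmc}: by the identity \eqref{eq:cTimescModTimes} each $\cTimes$-gate is rewritten using the identically continuous gate $\cModTimes$ together with $\varepsilon\in\cP$, so $\cTimes$ never obstructs $h$-continuity. Next, each numerical gate $g\in\set{n\dot{-}1,\ \lceil an\rceil,\ \lceil n^b\rceil}$ is decomposed. Writing $F_g$ for its set-function form and $G_g(s)=\set{m\mid \exists n\in s,\ g(n)=m}$ for the setwise image of $s$ under $g$, one has the pointwise identity $F_g(s)=G_g(s)\cPlus\chi_{=1}(s)$, where $\chi_{=1}\in\cP$ is the predicate returning $\set{0}$ iff $|s|=1$; this is valid on all inputs because $s\cPlus\emptyset=\emptyset$ and $s\cPlus\set{0}=s$, so both sides vanish off the singletons and agree on them. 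Thus $F_g$ is captured by the single continuous core $G_g$ plus a predicate. The decisive quantity is the continuity modulus of $G_g$: membership of $m$ in $G_g(s)$ depends only on $s$ below the largest $g$-preimage of $m$, so $G_g$ is $h_g$-continuous with $h_g(m)=m+1$ for $n\dot{-}1$, with $h_g(m)=\lceil m/a\rceil$ for $\lceil an\rceil$, and with $h_g(m)=\lceil m^{1/b}\rceil$ for $\lceil n^b\rceil$. In each part I then take $h$ to be the pointwise maximum of $\mathrm{id}$ (covering $\cPlus,\cModTimes$) and the moduli $h_g$ of the gates actually present; this $h$ is inflationary and renders every non-predicate gate $h$-continuous.

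It then remains to discharge (b). For (i) only $\cPlus,\cModTimes$ occur, so $h=\mathrm{id}$, $h^{(q)}=\mathrm{id}$, and $\set{n\dot{-}1\mid n\dot{-}1<n}=\N$ is infinite. For (ii), $h(m)=m+1$, hence $h^{(q)}(m)=m+q$, and $\lceil an\rceil+q<n$ holds for all large $n$ since $a<1$, giving infinitely many distinct values $\lceil an\rceil$. For (iii), $h$ is bounded by a linear map of slope $1/a>1$, so $h^{(q)}$ is again linear and $h^{(q)}(\lceil n^b\rceil)=O(n^b)=o(n)$, whence $h^{(q)}(\lceil n^b\rceil)<n$ for large $n$ because $b<1$. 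For (iv), $h$ is bounded by the degree-$1/b$ polynomial $m\mapsto\lceil m^{1/b}\rceil$, whose $q$-fold iterate satisfies $h^{(q)}(m)\le C_q\,m^{1/b^q}$, a fixed power of $m$; applied to $f(n)=\lceil\log^*_c n\rceil$ this gives a fixed power of $\log_c n$, which is $o(n)$, so once more $h^{(q)}(f(n))<n$ for all large $n$ while $f$ takes infinitely many values. In each case Theorem~\ref{theo:small} delivers the stated non-definability.

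The main obstacle is the modulus bookkeeping, not the reduction. The essential phenomenon is that a \emph{shrinking} numerical function becomes, as a set-function gate, a \emph{growing} continuity modulus (its largest preimage), so the gate-sets of (iii) and (iv) push $h$ from linear up to a genuine polynomial of degree $1/b>1$. The delicate case is (iv): because $h$ is truly superlinear, one must verify that its iterate $h^{(q)}$ is still only a fixed power of its argument, so that $h^{(q)}(\lceil\log^*_c n\rceil)$ stays polylogarithmic in $n$ and is dominated by $n$. Matching each target's growth rate against the iterated modulus induced by the available gates is the crux; once the moduli are pinned down, the reduction of every gate to a continuous core plus a predicate and the final appeal to Theorem~\ref{theo:small} are routine.
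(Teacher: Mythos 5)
Your proposal is correct and follows essentially the same route as the paper: the paper's own proof likewise eliminates $\cTimes$ via $\cModTimes$ together with $\varepsilon\in\cP$ and then instantiates Theorem~\ref{theo:small} with $h(m)=m$ for part (i), $h(m)=m+1$ for part (ii), ``and so on'' for the linear and power-law moduli you compute in parts (iii) and (iv). Your explicit decomposition $F_g(s)=G_g(s)\cPlus\chi_{=1}(s)$ of each numerical gate into a continuous image-map plus a singleton-test predicate is a detail the paper leaves implicit---and one that is genuinely needed, since the singleton-restricted set-function $F_g$ is not itself $h$-continuous for any $h$---so your write-up is, if anything, more careful than the original on exactly the point where care is required.
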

\begin{proof}
Recall that $\cTimes$ is $(\cModTimes,\varepsilon)$-definable.  The
function $f(n) = n \dot{-} 1$ and the collection $\cO = (\cPlus, \cModTimes)$
satisfy the conditions of Theorem~\ref{theo:small} with $h(m) = m$;
the function $f(n) = \lceil a n \rceil$ and the collection $\cO =
(\cPlus, \cModTimes, n \mapsto n \dot{-} 1)$ satisfy the conditions of
Theorem~\ref{theo:small} with $h(m) = m + 1$; and so on.
\begin{flushright}
$\qed$
\end{flushright}
\end{proof}
Note that Theorem~\ref{theo:small} fails if the condition that
$\set{f(\vec{n}) \mid \vec{n} \in \N^k,\ h^{(q)}(f(\vec{n})) <
  \min(\vec{n})}$ is infinite is replaced by the condition that
$\set{\vec{n} \mid \vec{n} \in \N^k,\ h^{(q)}(f(\vec{n})) <
  \min(\vec{n})}$ is infinite. For example, we have already seen that the
function $n \mapsto (n \mod \ell)$ is $(\cPlus, \cTimes)$-definable, for
all $\ell \geq 1$. Arithmetic circuits can compute remainders (for fixed, non-zero
divisors), but not quotients.

\subsection{Semi-regressive functions}
\label{sec:semiRegressive}
We have seen that regressive numerical functions cannot be defined by
arithmetic circuits. On the other hand, the functions $n \mapsto n$
and $n \mapsto 2n$ are trivially definable by additive circuits.
Indeed, the additive circuit in~\eqref{eq:2nMinus1} defines the
function $n \mapsto 2n-1$ for $n >0$. It is therefore natural to ask
whether any numerical functions definable by additive or arithmetic
circuits can have growth in between that of $n \mapsto n$ and $n
\mapsto 2n-1$.

For simplicity, we consider only the case of 1-place
functions. (Nothing really hinges on this restriction.)  Say that $f:
\N \rightarrow \N$ is {\em semi-regressive} if, for all $\ell \geq 0$
there exists $n \geq 0$ such that $n + \ell \leq f(n) \leq 2n-2$. We show
that semi-regressive functions are not definable by additive circuits,
even when gates computing $\Downarrow$ and arbitrary predicates are
available.
\begin{lemma}
Let $\sigma(x)$ be a $(\cPlus, \Downarrow, \cP)$-circuit.  There exists a
number $k(\sigma)$ such that, for all $m \in \N$, $\sigma(\set{m})$ is
uniform on the interval $[k(\sigma),m-1]$: that is to say, either
$\sigma(\set{m}) \supseteq [k(\sigma),m-1]$ or $\sigma(\set{m}) \cap
[k(\sigma),m-1] = \emptyset$.
\label{lma:plusConstant}
\end{lemma}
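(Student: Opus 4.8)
The plan is to argue by structural induction on $\sigma$, maintaining the single invariant that there is a constant $k(\sigma)$, depending only on the circuit, such that for every $m$ the set $\sigma(\set{m})$ meets the interval $[k(\sigma),m-1]$ either in all of $[k(\sigma),m-1]$ or not at all. For $m \le k(\sigma)$ this interval is empty and the claim is vacuous, so only large $m$ need attention. For the base cases I would take $k(x)=1$ and $k(\set{n})=n+1$, noting that $\set{m}$ and $\set{n}$ miss the interval once $k$ exceeds the relevant constant, while $\emptyset$ and $\N$ are trivially uniform. For a top-level predicate gate the output lies in $\set{0}$ regardless of its arguments, so the choice $k=1$ forces the intersection with $[1,m-1]$ to be empty; thus predicate sub-circuits propagate no complexity. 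For the Boolean gates I would set $k$ to be the maximum of the children's constants and observe that ``all-in/all-out'' is preserved: a union is all-in exactly when some argument is, an intersection exactly when all are, and complement simply interchanges the two alternatives.

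The two substantive cases are $\Downarrow$ and $\cPlus$. For $\sigma=\Downarrow(\sigma_1)$ I would use that $\Downarrow(\sigma_1(\set{m}))$ is always an initial segment of $\N$ --- empty, all of $\N$, or $[0,\max\sigma_1(\set{m})]$. The only delicate point is the finite case, where I need $\max\sigma_1(\set{m})$ to avoid the ``middle'' band $[k_1,m-2]$; but this follows from the inductive uniformity of $\sigma_1$ itself. If $\sigma_1(\set{m})$ is all-in on $[k_1,m-1]$ it contains $m-1$, forcing its maximum to be at least $m-1$; if it is all-out, then every element is below $k_1$ or at least $m$, so its maximum is below $k_1$ or at least $m-1$. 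In either case $[0,\max\sigma_1(\set{m})]$ is all-in or all-out on $[k_1,m-1]$, and I take $k(\sigma)=k_1$.

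The main obstacle is the additive gate $\sigma=\sigma_1\cPlus\sigma_2$, and the key elementary observation is that any sum $b+c\le m-1$ forces both $b\le m-1$ and $c\le m-1$; hence $\sigma(\set{m})\cap[0,m-1]$ depends only on $\hat B:=\sigma_1(\set{m})\cap[0,m-1]$ and $\hat C:=\sigma_2(\set{m})\cap[0,m-1]$. Writing $k_0=\max(k_1,k_2)$ and noting that uniformity on $[k_1,m-1]$ trivially implies uniformity on the larger-indexed interval $[k_0,m-1]$, the inductive hypothesis splits each of $\hat B,\hat C$ into a ``low'' part contained in $[0,k_0-1]$ and a ``mid'' part equal to either $\emptyset$ or the full interval $[k_0,m-1]$. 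I would then set $k(\sigma)=2k_0$ and run through four cases according to which mids are present. When both mids are absent, every admissible sum is at most $2k_0-2$, so $\sigma(\set{m})$ is all-out on $[k(\sigma),m-1]$. When one mid is the full interval $[k_0,m-1]$ but the other summand set (say $\hat C$) is empty --- equivalently $\sigma_2(\set{m})\subseteq[m,\infty)$ --- every sum is at least $m$, again all-out. In the remaining cases one summand contributes the full interval $[k_0,m-1]$ while the other contributes some element $c_0\le k_0$, so $\sigma(\set{m})$ contains $[k_0+c_0,m-1]\supseteq[2k_0,m-1]=[k(\sigma),m-1]$, i.e.\ it is all-in. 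The step I expect to require the most care is verifying in this last case that the shifted copy of the mid interval genuinely covers the whole tail $[k(\sigma),m-1]$; it does, because the mid interval has length comparable to $m$ whereas the shift $c_0$ is bounded by the constant $k_0$. This completes the induction.
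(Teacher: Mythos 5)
Your proposal is correct and takes essentially the same approach as the paper: a structural induction assigning each sub-circuit a uniformity threshold $k(\sigma)$, with predicate gates neutralized by their $\subseteq \set{0}$ outputs, $\Downarrow$ handled via the initial-segment observation, and the additive gate as the only substantive case. The sole difference is organizational: where the paper sets $k(\sigma_1 \cPlus \sigma_2) = k(\sigma_1) + k(\sigma_2)$ and checks sixteen cases by brute force, you set $k(\sigma) = 2\max(k(\sigma_1),k(\sigma_2))$ and collapse the analysis to a few cases via the truncation-at-$m$ and low/mid decomposition---both constants work, and your case check is complete.
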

\begin{proof}
We define $k(\sigma)$ inductively.  If $\sigma$ is $x$, $\emptyset$ or
$\N$, it suffices to take $k(\sigma) = 0$.  If $\sigma$ is a predicate
circuit, it suffices to take $k(\sigma) = 1$.  If $\sigma$ is
$\set{p}$, it suffices to take $k(\sigma) = p+1$.  If $\sigma$ is
$\sigma_1 \cup \sigma_2$ or $\sigma_1 \cap \sigma_2$, it suffices to
take $k(\sigma) = \max(k(\sigma_1), k(\sigma_2))$; and if $\sigma$ is
$\cmp{\sigma_1}$ or $\Downarrow(\sigma_1)$, it suffices to take
$k(\sigma) = k(\sigma_1)$. Finally, suppose $\sigma$ is $\sigma_1 +
\sigma_2$. We examine the sixteen cases generated by the following
four binary choices.  (We rely on the inductive hypothesis to ensure
exhaustiveness for the second two cases.)
\begin{align*}
\sigma_i(\set{m}) \cap [0, k(\sigma_i) -1] \neq \emptyset
& \mbox{ or } 
\sigma_i(\set{m}) \cap [0, k(\sigma_i) -1] = \emptyset & & (i = 1,2);\\
\sigma_i(\set{m}) \supseteq [k(\sigma_i),m-1]
& \mbox{ or }
\sigma_i(\set{m}) \cap [k(\sigma_i),m-1] = \emptyset & & (i = 1,2).
\end{align*}
Routine checking shows that, in all cases, either
$\sigma(\set{m}) \supseteq [k(\sigma_1)+k(\sigma_2),m-1]$ or
$\sigma(\set{m}) \cap [k(\sigma_1) + k(\sigma_2),m-1] = \emptyset$.
Taking $k(\sigma) = k(\sigma_1)+k(\sigma_2)$ completes the induction.
\begin{flushright}
$\qed$
\end{flushright}
\end{proof}
\begin{theorem}
No $(\cPlus, \Downarrow, \cP)$-circuit defines any regressive or
semi-regressive function $\N \rightarrow \N$.
\label{theo:additiveSemiregressive}
\end{theorem}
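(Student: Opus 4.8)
The plan is to exploit the fact that if a $(\cPlus, \Downarrow, \cP)$-circuit $\sigma(x)$ defines a numerical function $f$, then $\sigma(\set{m}) = \set{f(m)}$ for every $m$, so that $\sigma(\set{m})$ is always a \emph{singleton}, and to play this against the uniformity supplied by Lemma~\ref{lma:plusConstant}. Throughout, write $k = k(\sigma)$.

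For the regressive case I would argue directly. For each $m \geq k+2$ the interval $[k,m-1]$ contains at least two numbers, and Lemma~\ref{lma:plusConstant} guarantees that $\sigma(\set{m})$ either contains all of $[k,m-1]$ or is disjoint from it. As $\sigma(\set{m}) = \set{f(m)}$ cannot contain two distinct numbers, it must be disjoint from $[k,m-1]$; that is, $f(m) < k$ or $f(m) \geq m$. Hence for every $m \geq k+2$ with $f(m) < m$ we get $f(m) < k$, so $\set{f(m) \mid f(m) < m}$ is contained in the finite set $[0,k-1] \cup \set{f(0), \ldots, f(k+1)}$, contradicting regressivity.

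The semi-regressive case is the real work, since here $f(n)$ lies in $[n+\ell, 2n-2]$, whence $f(n) \geq n$, and Lemma~\ref{lma:plusConstant}, which constrains only the lower band $[k,m-1]$, yields nothing. My plan is to prove an upper-band analogue of that lemma by the same structural induction, tracking, for large $m$, the positions at which membership in $\sigma(\set{m})$ switches. The guiding picture is that every such transition above $k$ sits at a position of the form $am+b$, for one of finitely many pairs $(a,b)$ with $a \geq 0$ and $b$ bounded by a circuit constant: a $\cPlus$-gate adds coefficients and offsets, a $\Downarrow$-gate replaces its argument by an interval $[0,M]$ whose right endpoint $M$ is itself a transition position (hence near $m$ or near $2m$, never strictly interior), and the Boolean gates merely recombine existing transitions. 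Since $1 < a < 2$ is impossible for integer $a$, no transition lies strictly inside $(m+k, 2m-k)$, which gives uniformity of $\sigma(\set{m})$ on $[m+k, 2m-k]$, just as in the lower band.

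The delicate point, and the step I expect to be the main obstacle, is the narrow zone just below $2m$, where the sharp boundary between $2n-1$ (which is definable by the additive circuit of~\eqref{eq:2nMinus1}) and $2n-2$ (which is semi-regressive) must be pinned down. Here I would refine the induction to bound the achievable offsets at coefficient $a=2$: the extremal coefficient-$1$ right endpoint is that of $[0,m-1] = \cmp{x \cPlus \N}$, at offset $-1$, since there is no operation that shifts a set downwards; summing two such sets yields coefficient-$2$ right endpoints at offset no smaller than $-2$, i.e.\ no lower than $2m-2$. Isolating a point at $2m-j$ would require excising the number immediately below it, hence a transition at offset $-(j+1)$, which is available for $j=1$ (giving the genuine isolated point $\set{2m-1}$) but not for $j \geq 2$. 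The target conclusion is therefore that, for a suitable constant $k$, the set $\sigma(\set{m})$ has no \emph{isolated} point in $[m+k,\, 2m-2]$. Granting this, the theorem follows at once: taking $\ell = k$ in the definition of semi-regressivity produces some $n$ with $n+k \leq f(n) \leq 2n-2$, so that $\sigma(\set{n}) = \set{f(n)}$ would be an isolated point lying in the forbidden zone $[n+k, 2n-2]$ --- a contradiction.
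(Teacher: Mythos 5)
Your proposal is correct and shares the paper's two-stage skeleton---Lemma~\ref{lma:plusConstant} disposes of the regressive case exactly as in the paper (where this step is declared ``instant''), and the semi-regressive case is settled by a structural induction constraining $\sigma(\set{m})$ on the upper band---but your inductive engine for the hard half is genuinely different. The paper recursively defines a second constant $\ell(\sigma)$ and proves the full dichotomy that $\sigma(\set{m})$ either contains or avoids all of $[m+\ell(\sigma), 2m-2]$, by an exhaustive check of $2^8$ cases recording, for each summand of a $\cPlus$-gate, its behaviour on the four bands $[0,k-1]$, $[k,m-1]$, $[m,m+\ell-1]$, $[m+\ell,2m-2]$. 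You instead track the possible transition positions of every sub-circuit value as linear expressions $am+b$ with circuit-bounded offsets, use the absence of downward shifts (made rigorous by Lemma~\ref{lma:plusConstant} itself, which forces every coefficient-$1$ right endpoint to be at least $m-1$) to bound coefficient-$2$ right endpoints below by $2m-2$, and conclude only the weaker invariant that $\sigma(\set{m})$ has no isolated point in $[m+k,2m-2]$---which indeed suffices, since a circuit defining $f$ must output the singleton $\set{f(n)}$, and both your conclusion and the paper's correctly leave room for the genuine boundary phenomena at $2m-1$ and $2m-2$ (your $\set{2m-1}$ example; the right endpoint of $[0,m-1]\cPlus[0,m-1]$). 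One bookkeeping point your sketch leaves implicit, which the completed induction must carry explicitly: complementation converts right endpoints into left endpoints shifted up by one, so you must separately verify that coefficient-$2$ left endpoints (upward transitions) are bounded below by $2m-1$---otherwise a set could \emph{begin}, rather than merely end, inside the band, isolating a point such as $2m-2$ without any ``excision'' of the number below it. The verification does go through (coefficient-$1$ left endpoints are at least $m$ by the same uniformity argument, sums and complements preserve the bounds, and $\Downarrow$ and predicate gates create endpoints only at bounded positions or at positions inherited from their arguments), and completing it is comparable in labour to the paper's own ``routine (but laborious) checking''; what your organization buys is a sharper structural picture of which endpoint positions are achievable, at the price of a subtler invariant to maintain, whereas the paper's brute-force band dichotomy is cruder but mechanically self-contained.
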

\begin{proof}
Let $\sigma(x)$ be a $(\cPlus, \Downarrow, \cP)$-circuit.  It is
instant from Lemma~\ref{lma:plusConstant} that $\sigma(x)$ does not
define a regressive function. To complete the proof, we show that
there exists a number $\ell(\sigma)$ such that, for all $m \in \N$,
$\sigma(\set{m})$ is uniform on the interval $[m +
  \ell(\sigma),2m-2]$: that is to say, either $\sigma(\set{m})
\supseteq [m + \ell(\sigma),2m-2]$ or $\sigma(\set{m}) \cap [m +
  \ell(\sigma),2m-2] = \emptyset$.

\bigskip

\noindent
We define $\ell(\sigma)$ inductively, making use of the numbers
$k(\sigma)$ guaranteed by Lemma~\ref{lma:plusConstant}.  If $\sigma$
is $x$ or a predicate circuit, it suffices to take $\ell(\sigma) =
1$. If $\sigma$ is $\emptyset$ or $\N$, it suffices to take
$\ell(\sigma) = 0$. If $\sigma$ is $\set{p}$, it suffices to take
$\ell(\sigma) = p+1$.  If $\sigma$ is $\sigma_1 \cup \sigma_2$ or
$\sigma_1 \cap \sigma_2$, it suffices to take $\ell(\sigma) =
\max(\ell(\sigma_1), \ell(\sigma_2))$; and if $\sigma$ is
$\cmp{\sigma_1}$ or $\Downarrow(\sigma_1)$, it suffices to take
$\ell(\sigma) = \ell(\sigma_1)$. Finally, suppose $\sigma$ is
$\sigma_1 + \sigma_2$. We examine the two hundred and fifty-six cases
generated by the following eight binary choices.  (We rely on the
inductive hypothesis and the properties of $k(\sigma)$ guaranteed by
Lemma~\ref{lma:plusConstant} to ensure exhaustiveness.)  {\small
\begin{align*}
\sigma_i(\set{m}) \cap [0, k(\sigma_i) -1] \neq \emptyset
& \mbox{ or }  
\sigma_i(\set{m}) \cap [0, k(\sigma_i) -1] = \emptyset & & (i = 1,2);\\
\sigma_i(\set{m}) \supseteq [k(\sigma_i),m-1]
& \mbox{ or } 
\sigma_i(\set{m}) \cap [k(\sigma_i),m-1] = \emptyset & & (i = 1,2);\\
\sigma_i(\set{m}) \cap [m, m+\ell(\sigma_i) -1] \neq \emptyset
& \mbox{ or }  
\sigma_i(\set{m}) \cap [m, m+\ell(\sigma_i) -1] = \emptyset  & & (i = 1,2);\\
\sigma_i(\set{m}) \supseteq [m+\ell(\sigma_i),2m-2]
& \mbox{ or } 
\sigma_i(\set{m}) \cap [m+ \ell(\sigma_i),2m-2] = \emptyset  & & (i = 1,2).
\end{align*}
}
Consider, for example, any cases in which both $\sigma_1(\set{m})
\supseteq [k(\sigma_1),m-1]$ and $\sigma_2(\set{m}) \supseteq
[k(\sigma_2),m-1]$. Then we see that $\sigma(\set{m})
\supseteq [k(\sigma_1) + k(\sigma_2),2m-2]$, whence, certainly,
$\sigma(\set{m})
\supseteq [m+ k(\sigma_1) + k(\sigma_2),2m-2]$.
Or again, consider any cases in which $\sigma_1(\set{m}) \cap [0,
  k(\sigma_1) -1] = \sigma_1(\set{m}) \cap [k(\sigma_1),m-1] =
\sigma_1(\set{m}) \cap [m + \ell(\sigma_1),2m-2] = \emptyset$ and
$\sigma_2(\set{m}) \cap [k(\sigma_2),m-1] = \emptyset$. Then we see
that $\sigma(\set{m}) \cap [m + k(\sigma_2) + \ell(\sigma_1)
  -1,2m-2] = \emptyset$.  Routine (but laborious) checking shows that,
in all cases, we can find a constant $\ell(\sigma)$---expressed as some
function of $k(\sigma_1)$, $k(\sigma_2)$
$\ell(\sigma_1)$ and $\ell(\sigma_2)$, depending on the case we are
dealing with---such that either $\sigma(\set{m}) \supseteq [m +
  \ell(\sigma), 2m-2]$ or $\sigma(\set{m}) \cap [m+ \ell(\sigma),
  2m-2] = \emptyset$.  This completes the induction.
\begin{flushright}
$\qed$
\end{flushright}
\end{proof}

For arithmetic circuits, by contrast, this restriction does not apply.
Consider the function $f: \N \rightarrow \N$ which maps any number $n$
to the smallest prime greater than $n$. This function is
defined by the arithmetic circuit $\Min((x \cPlus
\N \cPlus \set{1}) \cap \Primes)$. On the
one hand, the decreasing density of primes means that there is no $k$
such that $f(n) < n+ k$ for all $n$; on the other hand, the
Bertrand-Chebyshev theorem states that $f(n) < 2n - 2$ for $n \geq 4$.
(Tighter bounds are known for larger values of $n$; see,
e.g.~Nagura~\cite{mcf:nagura52}.) We therefore have:
\begin{theorem}
Some arithmetic circuits define semi-regressive functions.
\label{theo:arithmeticSemiregressive}
\end{theorem}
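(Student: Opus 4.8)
The plan is to verify that the circuit already displayed in the remark preceding the theorem, namely $\Min((x \cPlus \N \cPlus \set{1}) \cap \Primes)$, is an arithmetic circuit defining a semi-regressive numerical function; the theorem then follows at once. First I would confirm that the expression is a genuine $(\cPlus,\cTimes)$-circuit: $\Primes$ is $(\cPlus,\cTimes)$-definable by~\eqref{eq:evensPrimes}, and $\Min$ is $(\cPlus,\cTimes)$-definable by the circuit given in Section~\ref{sec:defFunctions}, so substituting these defining circuits (in the sense of Section~\ref{sec:preliminaries}) for $\Primes$ and $\Min$ produces an arithmetic circuit in the strict sense. Next I would check that this circuit computes, on singleton inputs, the function $f$ sending $n$ to the least prime exceeding $n$: for each $n$ we have $\set{n} \cPlus \N \cPlus \set{1} = [n+1,\infty)$, intersecting with $\Primes$ leaves exactly the (non-empty) set of primes greater than $n$, and $\Min$ returns the singleton of its smallest element. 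This step is routine from the definitions of $\cPlus$, $\cap$ and $\Min$.

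The substance of the proof is to verify that $f$ is semi-regressive, i.e.~that for every $\ell \geq 0$ there is an $n$ with $n + \ell \leq f(n) \leq 2n - 2$. The upper bound I would obtain uniformly from the Bertrand--Chebyshev theorem, which supplies a prime strictly between $n$ and $2n-2$ whenever $n \geq 4$, giving $f(n) \leq 2n-2$ for all such $n$. The lower bound I would secure from the existence of arbitrarily long gaps between consecutive primes: for any $\ell$, the numbers $(\ell+1)!+2, \ldots, (\ell+1)!+(\ell+1)$ are all composite, so if $n$ is the largest prime not exceeding $(\ell+1)!+1$, then the next prime $f(n)$ lies beyond $(\ell+1)!+(\ell+1)$, whence $f(n) \geq n + \ell$. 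Since $(\ell+1)!$ grows without bound, such $n$ can be taken as large as we please, in particular $n \geq 4$.

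The one point needing care is that both inequalities must hold for a single $n$ at each fixed $\ell$. This causes no real difficulty, because the upper bound holds for every $n \geq 4$ while the lower bound is witnessed by arbitrarily large $n$; choosing the witness $n \geq 4$ (replacing $\ell$ by $\max(\ell,3)$ if necessary to dispose of the smallest cases) makes both bounds apply simultaneously. I do not expect a genuinely hard step here: the two number-theoretic ingredients are entirely classical, and the work consists only in assembling them around the next-prime circuit. Tighter prime-gap bounds (e.g.~Nagura~\cite{mcf:nagura52}) would let one sharpen the upper constant, but are not needed for semi-regressivity.
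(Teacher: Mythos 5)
Your proposal is correct and matches the paper's own argument: the paper proves this theorem via exactly the same circuit $\Min((x \cPlus \N \cPlus \set{1}) \cap \Primes)$ for the next-prime function, with the upper bound from Bertrand--Chebyshev and the lower bound from unbounded prime gaps (which the paper attributes to the decreasing density of primes, and which you merely make explicit via the factorial construction). Your care about choosing a single witness $n \geq 4$ for both bounds is sound and fills in a detail the paper leaves implicit.
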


\subsection{Rapidly growing functions}
We round off this section with a result about the definability of
rapidly growing functions.  Again, we observe a difference between
additive and arithmetic circuits.
\begin{theorem}
Every numerical function $f: \N \rightarrow \N$ defined by an additive
circuit is linearly bounded.
\label{theo:linear}
\end{theorem}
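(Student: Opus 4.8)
The plan is to argue by structural induction on the defining circuit, proving a uniform \emph{tail-uniformity} invariant for every subcircuit. Concretely, I would show that for each subcircuit $\sigma(x)$ there are constants $a_\sigma, b_\sigma \geq 0$ such that, writing $L_\sigma(m) = a_\sigma m + b_\sigma$, the set $\sigma(\set{m})$ is uniform on the tail $[L_\sigma(m)+1,\infty)$ for every $m$: that is, either $\sigma(\set{m}) \cap [L_\sigma(m)+1,\infty) = \emptyset$, or $[L_\sigma(m)+1,\infty) \subseteq \sigma(\set{m})$. This refines the finite/cofinite observation already made for variable-free additive circuits: in particular $\sigma(\set{m})$ is always finite or cofinite, and when finite it is contained in $[0, L_\sigma(m)]$.

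The base cases are immediate. For $\sigma = x$ take $a_\sigma=1,\ b_\sigma=0$, since $\set{m}$ is disjoint from $[m+1,\infty)$; for $\sigma = \set{n}$ take $a_\sigma=0,\ b_\sigma=n$; and $\emptyset$, $\N$ are trivially uniform on $[1,\infty)$. For the Boolean steps I would set $L_\sigma = \max(L_{\sigma_1}, L_{\sigma_2})$: above this common threshold each argument is entirely present or entirely absent, so $\cmp{\sigma_1}$, $\sigma_1 \cup \sigma_2$ and $\sigma_1 \cap \sigma_2$ are likewise uniform there, with the same linear threshold (complementation preserving it exactly and merely swapping the two cases).

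The crux is the $\cPlus$ step. Here I would take $L_\sigma = L_{\sigma_1} + L_{\sigma_2} + 1$ and run through the combinations according to whether each $\sigma_i(\set{m})$ is finite (disjoint from its tail) or cofinite (containing its tail). If both are finite, then $\sigma_1(\set{m}) \cPlus \sigma_2(\set{m}) \subseteq [0, L_{\sigma_1}(m)+L_{\sigma_2}(m)]$, so the sum is all-absent above the threshold. If one factor is finite and nonempty while the other is cofinite, adding the least element of the finite factor to the cofinite tail exhibits a tail $[L_\sigma(m)+1,\infty)$ inside the sum; an empty factor gives the empty sum. If both are cofinite, the sum is cofinite with a linearly-bounded complement, using that the least element of a cofinite $\sigma_i(\set{m})$ is at most $L_{\sigma_i}(m)+1$, since $L_{\sigma_i}(m)+1$ itself lies in the tail. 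The one point demanding care is exactly this control of the least element of the cofinite factors, which is what forces the slack ``$+1$'' in the threshold; everything else is bookkeeping over the four finite/cofinite combinations (with the degenerate empty-factor subcase).

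Finally, I would apply the invariant to $\tau$ itself. Since $\tau$ defines the numerical function $f$, we have $\tau(\set{m}) = \set{f(m)}$, a singleton, which cannot contain the infinite tail $[L_\tau(m)+1,\infty)$; hence by the dichotomy it is disjoint from that tail, forcing $f(m) \leq L_\tau(m) = a_\tau m + b_\tau$. This is precisely the asserted linear bound. I expect no genuine obstacle beyond the case analysis in the $\cPlus$ step: the essential phenomenon is that set-addition, with the single variable instantiated to a singleton, can only shift tails by a $\sigma$-dependent linear amount and can never accelerate growth, so no circuit in this fragment escapes a linear ceiling.
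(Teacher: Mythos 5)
Your proposal is correct and is essentially the paper's own argument: the paper proves the theorem by exactly this structural induction, showing that for each additive circuit $\sigma(x)$ there is a constant $k$ such that $\sigma(\set{m})$ is uniform on the tail $[km,\infty)$ for all $m>0$, and then concluding as you do from the fact that $\tau(\set{m})=\set{f(m)}$ is a singleton. Your affine threshold $a_\sigma m + b_\sigma$ (with the worked-out $\cPlus$ case analysis, including the control of least elements of cofinite factors) is just a fleshed-out, equivalent form of the paper's one-line ``simple induction.''
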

\begin{proof}
A simple induction shows that, if $\sigma(x)$ is an additive circuit,
then there exists a number $k$ such that, for all $m>0$,
$\sigma(\set{m})$ is uniform on the interval $[km, \infty)$.
\begin{flushright}
$\qed$
\end{flushright}
\end{proof}

For arithmetic circuits, by contrast, this restriction does not apply.
Let $p$ be a (fixed) prime. Again, we need to recall some number
theory---this time not so elementary.  Consider the congruence $m^x
\equiv 1 \mod p$, where $p$ is a prime, and $p$ does not divide $m$.
By Fermat's `little' theorem, $x = p-1$ is always a solution of this
congruence; and we say that $m$ is a {\em primitive root} mod $p$ if
$p-1$ is the smallest non-zero solution---that is, in the terminology
introduced above, if the {\em order} of $m$ mod $p$ is $p-1$. It is
known~\cite[Corollary~2]{mcf:h-b86} that, for all but at most two
exceptional primes $p$, there exist infinitely many primes $q$ such
that $p$ is a primitive root mod $q$.  Fix any non-exceptional prime
$p$. We know that there is a circuit $\Pow_p$ defining the set of
powers of $p$. Now, the circuit $\sigma(x) = x \cTimes(\N \setminus
\set{0}) + \set{1}$ satisfies the condition that, for any $n  > 1$,
$\sigma(\set{n})$ is the set of numbers congruent to $1$ mod $n$,
excepting $1$ itself.  Thus, if $q$ is a prime such that $p$ is a
primitive root mod $q$, the circuit $\Pow_p \cap ((\N \setminus
\set{0}) \cTimes \set{q} \cPlus \set{1})$ defines a non-empty set
whose smallest element is $p^{q-1}$. Consider the circuit
\begin{equation*}
\tau(x) = 
  [(\N \cTimes \set {p}) \cap x] \cup 
  [\Min(\Pow_p \cap ((\N \setminus \set{0}) \cTimes x \cPlus \set{1})) \cTimes \set{p}].
\end{equation*}
On input $x = \set{n}$, the two terms in square brackets each return a
singleton or the empty set, depending on whether $n$ is relatively
prime to $p$. If $p$ divides $n$, the first term in square brackets
returns the singleton $\set{n}$; otherwise, the second term in square
brackets returns the singleton $\set{p^{e+1}}$, where $e$ is the order
of $p$ mod $n$.  Hence, the circuit defines a numerical function $f:
\N \rightarrow \N$.  Furthermore, if $n$ is one of the infinitely many
primes such that $p$ is a primitive root mod $n$, then $f(n) =
p^n$. We have shown:
\begin{theorem}
There exists a function $f: \N \rightarrow \N$ such that $f$ is
definable by an arithmetic circuit, and not bounded by any polynomial.
\label{theo:fast}
\end{theorem}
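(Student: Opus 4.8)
The plan is to verify that the circuit $\tau(x)$ exhibited just before the theorem statement already does everything required; essentially all the difficulty has been front-loaded into the design of $\tau$, so the proof reduces to three checks: (a) that $\tau$ is genuinely an arithmetic circuit, (b) that on singleton inputs it returns a singleton, so that it defines a numerical function $f$, and (c) that $f$ outgrows every polynomial. First I would confirm that $\tau$ is an $(\cPlus,\cTimes)$-circuit. The only apparently non-basic gates in it are $\Min$ and $\Pow_p$; but $\Min$ was shown in Section~\ref{sec:defFunctions} to be $(\cPlus)$-definable, by $\cmp{x \cPlus \N \cPlus \set{1}} \cap x$, and $\Pow_p$ is the arithmetic circuit $\cmp{(\Primes \setminus \set{p}) \cTimes \N}$. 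Substituting these definitions into $\tau$ yields a circuit over $\set{\cPlus,\cTimes}$ together with Boolean operations and constants, as required.

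Second I would run the case analysis showing $\tau(\set{n}) = \set{f(n)}$ for a well-defined $f$. The key identity is that $(\N \setminus \set{0}) \cTimes \set{n} \cPlus \set{1}$ evaluates to $\set{kn+1 \mid k \geq 1}$, the numbers congruent to $1$ modulo $n$ other than $1$ itself; intersecting with $\Pow_p$ leaves exactly the powers $p^j$ with $p^j \equiv 1 \pmod{n}$, and since $j \mapsto p^j$ is increasing, $\Min$ selects $p^e$, where $e$ is the order of $p$ modulo $n$. Multiplying by $\set{p}$ gives $\set{p^{e+1}}$. When $p \nmid n$ the first disjunct $(\N \cTimes \set{p}) \cap \set{n}$ is empty and this second disjunct supplies the value; when $p \mid n$ the roles reverse, the second disjunct being empty because $p^j \equiv 1 \pmod{n}$ would force $p^j \equiv 1 \pmod{p}$, impossible for $j \geq 1$. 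I expect the only friction here to be the finitely many small inputs (notably $n = 0$), where the two disjuncts can overlap and produce a non-singleton; these can be patched away by intersecting with a suitable definable guard, or simply by redefining $f$ on finitely many points, which affects neither definability nor the growth claim.

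Finally I would read off the growth. For $n$ a prime $q$ such that $p$ is a primitive root modulo $q$, the order $e$ is by definition $q-1$, so $f(q) = p^{(q-1)+1} = p^q$. The cited corollary of Heath-Brown guarantees that, for the fixed non-exceptional prime $p$, there are infinitely many such $q$. Hence $f(q) = p^q \geq 2^q$ for infinitely many arguments $q$, which dominates every polynomial in $q$, so $f$ is not polynomially bounded. The genuinely deep ingredient---and the only real obstacle---is this number-theoretic input: the circuit computes $p^{e+1}$ at every $n$, where $e$ is the order of $p$ modulo $n$, but it is only the existence of infinitely many primes witnessing the maximal order $q-1$ that forces superpolynomial growth, and that existence rests on the highly non-elementary primitive-root result rather than on anything about circuits.
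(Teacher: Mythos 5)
Your proposal is correct and takes essentially the same route as the paper: the paper's proof of this theorem is precisely the paragraph constructing $\tau(x)$, and you verify the same three points (that $\Min$ and $\Pow_p$ are expressible over $\set{\cPlus,\cTimes}$, the case analysis giving $\tau(\set{n})=\set{n}$ when $p$ divides $n$ and $\set{p^{e+1}}$ otherwise, and the Heath-Brown primitive-root corollary forcing $f(q)=p^q$ for infinitely many primes $q$). You even catch a slip the paper glosses over---at $n=0$ both disjuncts are non-empty, so $\tau(\set{0})=\set{0,p}$---and your guard-based repair (e.g.\ replacing the first disjunct by $((\N\setminus\set{0})\cTimes\set{p})\cap x$) is the right one, though note that your alternative of merely redefining $f$ at finitely many points would not by itself suffice, since the paper's definition of numerical definability requires the circuit to output a singleton on every singleton input.
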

It is interesting to ask whether any numerical functions definable by
arithmetic circuits are bounded below by an exponential function.

\section{Additive and arithmetic circuits with $\Downarrow$, $\Max$ and $\Card$}
\label{sec:additional}
Having demonstrated the undefinability of the functions $\Downarrow$,
$\Max$ and $\Card$ by means of arithmetic circuits, we next consider
what happens when gates computing them are added as primitives. Again,
we need to treat additive and arithmetic circuits separately.
\subsection{Simple definability results}
We begin with some easy definability results concerning the functions
$\Downarrow$, $\Max$ and $\Card$.
\begin{lemma}
Let $F_{-1}$ be the set-function given by $s \mapsto \set{\min(s)
  \dot{-} 1}$ for $s \neq \emptyset$ and $\emptyset \mapsto
\emptyset$. Then:
\begin{enumerate}[label=\textup{(}\emph{\roman*}\textup{)}]
\item $\Fin$, $\varepsilon$ and $\Downarrow$ are $(\cPlus, \Max)$-definable;
\item $F_{-1}$ is $(\cPlus, \Max)$-definable;
\item $\varepsilon$ and $\Fin$ are $(\cPlus, \Downarrow)$-definable;
\label{enum:lemmaCase3}
\item $\Max$ is $(\cPlus, \Downarrow, F_{-1})$-definable;
\item $F_{-1}$ is $(\cPlus, \Downarrow, \Card)$-definable;
\item $\Max$ is $(\cPlus, \Downarrow, \Card)$-definable.
\end{enumerate}
\label{lma:simpleDefs1+2}
\end{lemma}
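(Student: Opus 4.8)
The plan is to give, for each item, an explicit defining circuit, working through the list in an order that lets later items reuse earlier ones. Throughout, $\cPlus$ and the Boolean operators are available, so the auxiliary circuits $\downarrow(x) = \cmp{x \cPlus \N \cPlus \set{1}}$ and $\Min(x) = \downarrow(x) \cap x$ are at my disposal, as is the discriminator $\triangledown$ (empty $\mapsto \emptyset$, non-empty $\mapsto \N$), which Lemma~\ref{lma:discriminator} supplies from $\cPlus$ together with any one of $\Max$, $\Downarrow$ or $\Card$; definition by cases (Lemma~\ref{lma:cases}) will be used freely once $\triangledown$ is in hand. The single recurring idea is that Boolean complement converts $\min$ into $\max$ and vice versa, at the cost of a shift by one: for non-empty $x$ one has $\cmp{\Downarrow(x)} = [\max(x)+1,\infty)$, so its least element is $\max(x)+1$, while $\cmp{x \cPlus \N} = [0,\min(x)-1]$, so its greatest element is $\min(x)-1$ (the degenerate case $\min(x)=0$ being absorbed by the truncated $\dot{-}$).

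For (i) and (iii) I first isolate the predicates. Taking $\triangledown(x) = \Max(x \cPlus \N)$ gives the discriminator from $\Max$; then $\varepsilon(x) = \cmp{\triangledown(x)} \cap \set{0}$, and, since $\cmp{\Max(x)}$ is empty exactly when $x$ is infinite, $\Fin(x) = \varepsilon(\varepsilon(\cmp{\Max(x)}))$ (a double $\varepsilon$ tests non-emptiness). In the $\Downarrow$-based setting of (iii) the predicates come even more directly: $\varepsilon(x) = \cmp{\Downarrow(x)} \cap \set{0}$, since $0 \in \Downarrow(x)$ iff $x \neq \emptyset$, and $\Fin(x)$ tests whether $\cmp{\Downarrow(x)}$ is non-empty. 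To define $\Downarrow$ itself from $\Max$ (the remaining part of (i)) I split into the three regimes empty, finite non-empty, infinite: on the finite non-empty regime $\Downarrow(x) = \downarrow(\Max(x))$, and the full definition is $(\triangledown(x) \cap \downarrow(\Max(x))) \cup \cmp{\Fin(x) \cPlus \N}$, where intersecting with $\triangledown(x)$ clears the empty case and the \emph{is-infinite} test $\cmp{\Fin(x) \cPlus \N}$ supplies $\N$ on infinite inputs.

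The arithmetic items (ii), (iv), (v) all follow the same guard-and-patch pattern built on the shift observation. For (ii), $\Max(\cmp{x \cPlus \N})$ equals $\set{\min(x)-1}$ on the $\min(x) \geq 1$ regime; intersecting with $\triangledown(x)$ kills the empty case, and a union with $x \cap \set{0}$ supplies $\set{0}$ when $\min(x)=0$, yielding $F_{-1}$. For (iv), $F_{-1}(\cmp{\Downarrow(x)})$ equals $\set{\max(x)}$ on the finite non-empty regime (the shift cancels the $\dot{-}1$), so $\Max(x) = (\triangledown(x) \cap F_{-1}(\cmp{\Downarrow(x)})) \cup \cmp{\Fin(x) \cPlus \N}$ recovers $\Max$, again clearing the empty case by $\triangledown$ and the infinite case by the is-infinite test. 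For (v), $\Card(\downarrow(x) \setminus \set{0,1})$ counts the elements of $[2,\min(x)]$, namely $\min(x) \dot{-} 1$, on non-empty inputs, and intersecting with $\triangledown(x)$ disposes of the empty case, giving $F_{-1}$. Finally, (vi) is immediate: substituting the $(\cPlus,\Downarrow,\Card)$-circuit for $F_{-1}$ from (v) into the $(\cPlus,\Downarrow,F_{-1})$-circuit for $\Max$ from (iv) exhibits $\Max$ as $(\cPlus,\Downarrow,\Card)$-definable.

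The calculations are routine; the only real care is needed at the boundaries, where $\Max$, $\Card$ and $\Downarrow$ jump discontinuously between $\emptyset$, a singleton and $\N$ as the argument passes from empty to finite non-empty to infinite. The main (if minor) obstacle is therefore bookkeeping: ensuring that a single circuit designed for the generic finite non-empty regime also returns the correct value on the two degenerate regimes. The guard-and-patch device---intersect with $\triangledown(x)$ to force $\emptyset$ on the empty input, and union with an is-infinite test to force $\N$ on infinite inputs---resolves this uniformly, so each verification reduces to checking three cases.
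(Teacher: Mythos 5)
Your proposal is correct and takes essentially the same approach as the paper's proof: explicit defining identities valid on the generic finite, non-empty regime, patched at the degenerate regimes via the discriminator and definition by cases (Lemmas~\ref{lma:discriminator} and~\ref{lma:cases}), with several of your identities coinciding with the paper's (your $\downarrow(\Max(x))$ is literally the paper's $\cmp{\Max(s) \cPlus \N \cPlus \set{1}}$ for (i), and $\Max(\cmp{x \cPlus \N})$ for (ii) and $\Card$ of $[2,\min(x)]$ for (v) match as well). The only divergences are cosmetic: you obtain (vi) by composing (iv) with (v) where the paper uses the direct identity $\Max(s) = \Card(\Downarrow(s) \setminus \set{0})$ for non-empty $s$, and your $\Fin$ tests in (i) and (iii) use a double-$\varepsilon$ non-emptiness check instead of the paper's one-line formulas $\set{0} \setminus \Max(s \cup \set{1})$ and $\set{0} \cap \Downarrow((s \cPlus \set{1}) \setminus \Downarrow(s))$.
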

\begin{proof}
The following equations are easy to verify:
\begin{enumerate}[label=(\roman*)]
\item 
$\Fin(s) = \set{0} \setminus \Max(s \cup \set{1})$,
$\varepsilon(s) = \Fin(s \cPlus \N)$,
and for $s$ finite, non-empty, 
$\Downarrow(s) = \cmp{\Max(s) \cPlus \N \cPlus \set{1}}$;
\item  for $s$ non-empty with $\min(s) > 0$,  
   $\set{\min(s) - 1} = \Max(\cmp{s \cPlus \N})$;
\item $\varepsilon(s) = \set{0} \setminus \Downarrow(s)$, and, for $s$ non-empty,
$\Fin(s) = \set{0} \cap \Downarrow((s \cPlus 
   \set{1}) \setminus \Downarrow(s))$;
\item for $s$ finite, $\Max(s) = F_{-1}((s \cPlus \set{1})
  \setminus \Downarrow(s))$;
\item for $s$ non-empty with $\min(s) > 0$, 
$F_{-1}(s) = \Card(\Downarrow(\Min(s)) \setminus \set{0,1})$;
\item for $s$ non-empty, $\Max(s) = \Card(\Downarrow(s) \setminus
  \set{0})$.
\end{enumerate}
To deal with the cases not covered by these equations, apply
Lemmas~\ref{lma:discriminator} and~\ref{lma:cases}.
\begin{flushright}
$\qed$
\end{flushright}
\end{proof}

\subsection{Circuits with $\Downarrow$}
Lemma~\ref{lma:simpleDefs1+2} does not tell us how to define $\Max$
in terms of $\Downarrow$ alone. With the help of the set of even
numbers, however, this is possible.  Recall the circuit $\Evens
= \set{2} \cTimes \N$ from~\eqref{eq:evensPrimes}.
\begin{theorem}
The gate $\Max$ is $(\cPlus, \cTimes, \Downarrow)$-definable.
\label{theo:simpleDef3}
\end{theorem}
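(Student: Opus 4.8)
The plan is to isolate, by a short case analysis, a single hard sub-task---chopping the top element off a downward-closed interval---and then to attack that sub-task using multiplication by $\set{2}$ together with $\Downarrow$. First I would dispose of the trivial cases. Since $\cPlus$ and $\Downarrow$ are available, $\Fin$ and $\varepsilon$ are $(\cPlus,\Downarrow)$-definable by Lemma~\ref{lma:simpleDefs1+2}(\emph{iii}), and the discriminator is definable by Lemma~\ref{lma:discriminator}; hence, by Lemma~\ref{lma:cases}, I may define $\Max$ separately on the infinite inputs and on the rest. On infinite $x$ we simply output $\N$ (detected via $\Fin$). So the whole problem reduces to defining $\Max(x)$ for \emph{finite} $x$, where $\Downarrow(x) = [0,M]$ with $M = \max(x)$ (and $\Downarrow(x)=\emptyset$ when $x=\emptyset$). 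The key observation is the uniform identity
\begin{equation*}
\Max(x) = x \setminus D(\Downarrow(x)),
\end{equation*}
valid for every finite $x$, where $D$ is any gate satisfying $D([0,M]) = [0,M-1]$ for $M \geq 1$ and $D(\set{0}) = D(\emptyset) = \emptyset$: indeed $x \setminus [0,M-1] = \set{M}$ since $M \in x$ and $x$ has no larger element. Thus everything turns on defining the ``chop-the-top'' operator $D$, i.e.\ the downward shift of a downward-closed interval by one.

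This core is equivalent to defining truncated predecessor. Since $\Min$ is already $(\cPlus)$-definable, $D$ is interdefinable with the function $F_{-1}$ of Lemma~\ref{lma:simpleDefs1+2}, and hence with the map $\set{n} \mapsto \set{n \dot{-} 1}$ on singletons; conversely, from such a predecessor one recovers $\Max$ directly through Lemma~\ref{lma:simpleDefs1+2}(\emph{iv}). I would stress why this is not already ruled out: the non-definability results available to us---Corollary~\ref{cor:hCont}(\emph{i}) for $(\cPlus,\cTimes,\cP)$ and Theorem~\ref{theo:additiveSemiregressive} for $(\cPlus,\Downarrow,\cP)$---each omit one of the two gates we now have, and the regressiveness of predecessor is exactly what blocks those weaker systems. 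The point of the proof is that $\cTimes$ and $\Downarrow$ \emph{together} escape these obstructions, so the construction must genuinely entangle the two.

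The key step, and the main obstacle, is the construction of $D$ using the even numbers $\Evens = \set{2} \cTimes \N$ of~\eqref{eq:evensPrimes}. My intended mechanism is a parity ``stretch'': applying $\cTimes \set{2}$ to $[0,M] = \Downarrow(x)$ yields the even points $\set{0,2,\ldots,2M}$, whose complement, intersected back with the interval $\Downarrow(\set{2M})$, exposes the odd points $\set{1,3,\ldots,2M-1}$---and these are in exact order-isomorphic correspondence with $[0,M-1]$. The remaining difficulty is to collapse this stretched copy back down, i.e.\ to realise a floor-halving that turns the odd scaffold into $[0,M-1]$; this is precisely where I expect the subtlety to lie, since halving is itself regressive, and it must be manufactured from a further interplay of $\cTimes$ (to read off multiples) and $\Downarrow$ (to fill the resulting interval) rather than from any single gate. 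I would carry this out by expressing membership of $j$ in $[0,M-1]$ in terms of whether the corresponding even multiple $2j$ lies strictly below $2M$, using $\Downarrow$ and complementation to cut at $2M$ and $\cTimes$ to relate $j$ to $2j$; verifying that this combination yields exactly $[0,M-1]$ (including the boundary cases $M \in \set{0,1}$) is the laborious but routine endgame, after which the displayed identity and the earlier case analysis complete the proof.
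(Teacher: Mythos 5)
Your framing is sound up to the decisive step: the reduction to finite inputs via $\Fin$, the discriminator and Lemma~\ref{lma:cases}, and the identity $\Max(x) = x \setminus D(\Downarrow(x))$ with $D([0,M]) = [0,M-1]$, all match what the paper's proof does implicitly. But the construction of $D$---which you correctly identify as the entire content of the theorem---is never actually supplied, and the mechanism you name cannot supply it. Your plan stretches $[0,M]$ to the odd scaffold $\set{1,3,\ldots,2M-1}$ and then proposes to collapse it by ``expressing membership of $j$ in $[0,M-1]$ in terms of whether $2j$ lies strictly below $2M$, using $\cTimes$ to relate $j$ to $2j$''. That is pointwise inverse-image reasoning, and no gate in the signature performs it: $\cTimes$ computes forward images only ($\set{2} \cTimes s = \set{2j \mid j \in s}$), and there is no operation taking a set $t$ to $\set{j \mid 2j \in t}$. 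The collapse you need is exactly floor-halving, a regressive operation of the very kind Section~\ref{sec:defNumerical} shows these circuits resist (cf.\ the remark after Corollary~\ref{cor:hCont}: such circuits compute remainders for fixed divisors, but not quotients). So the stretch is a detour that manufactures a need for division rather than eliminating one, and the argument stalls at precisely the point you flag as the ``main obstacle''.

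The missing idea, which the paper uses, is that parity interleaving chops the top element \emph{without any rescaling}: if $\max(s) = M$ is odd, then the even numbers weakly below it, $\Downarrow(s) \cap \Evens$, have downward closure exactly $[0,M-1]$, so $\set{M} = s \setminus \Downarrow(\Downarrow(s) \cap \Evens)$; symmetrically, if $M$ is even, use the odd numbers $\Downarrow(s) \setminus \Evens$ instead. The parity of $M$ is testable inside the circuit, since $(s \cPlus \set{1}) \setminus \Downarrow(s) = \set{M+1}$ for finite non-empty $s$, so that $((s \cPlus \set{1}) \setminus \Downarrow(s)) \cap \Evens$ is empty if and only if $M$ is odd; the two cases are then combined using Lemmas~\ref{lma:discriminator} and~\ref{lma:cases}, exactly as in your outer case analysis for emptiness and infinitude. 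Note that this fix forces a case split on the parity of $M$---the feature your stretched scaffold (whose maximum $2M-1$ is always odd) was implicitly designed to avoid, but at the cost of the unrealizable collapse.
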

\begin{proof}
For $s$ finite, non-empty, we have $(s \cPlus \set{1}) 
\setminus \Downarrow(s) = \set{\max(s) + 1}$.  For such values of $s$,
therefore, $((s \cPlus \set{1}) \setminus \Downarrow(s)) 
\cap \mbox{Evens}$ is empty if and only if
$\max(s)$ is odd.  But for $s$ finite, non-empty, with $\max(s)$ odd,
\begin{equation*}
\set{\max(s)} = s  \setminus \Downarrow(s \cap \mbox{Evens});
\end{equation*}
and for $s$ finite, non-empty, with $\max(s)$ even,
\begin{equation*}
\set{\max(s)} = s  \setminus \Downarrow(s \setminus \mbox{Evens}).
\end{equation*}
The result now follows by Lemmas~\ref{lma:discriminator}, \ref{lma:cases}
and~\ref{lma:simpleDefs1+2} (iii).
\begin{flushright}
$\qed$
\end{flushright}
\end{proof}

Lemma~\ref{lma:simpleDefs1+2} showed that, for additive circuits, the
gate $\Max$ is at least as expressive as $\Downarrow$, and
Theorem~\ref{theo:simpleDef3} showed that, for arithmetic circuits,
$\Max$ and $\Downarrow$ are as expressive as each other. On other
other hand, it is an easy consequence of earlier results that, for
additive circuits, $\Max$ is strictly more expressive than
$\Downarrow$.
\begin{corollary}
The function $\Max$ is not $(\cPlus, \Downarrow,\cP)$-definable.
\label{cor:plusOnlyMax}
\end{corollary}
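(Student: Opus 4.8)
The plan is to argue by contradiction, reducing the claim to Theorem~\ref{theo:additiveSemiregressive}, which guarantees that no $(\cPlus, \Downarrow, \cP)$-circuit defines a regressive numerical function $\N \rightarrow \N$. So I would suppose that $\Max$ is $(\cPlus, \Downarrow, \cP)$-definable, say by a circuit $\mu(x)$, and then exhibit a regressive numerical function that becomes definable once $\Max$ is admitted as a gate. Replacing every $\Max$-gate by $\mu$ would then render this same function definable by a pure $(\cPlus, \Downarrow, \cP)$-circuit, contradicting the theorem.

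The witness I would use is the truncated predecessor $n \mapsto n \dot{-} 1$. Lemma~\ref{lma:simpleDefs1+2}~(ii) already supplies a $(\cPlus, \Max)$-circuit for the set-function $F_{-1}$, which on a singleton returns $\set{\min(s) \dot{-} 1} = \set{n \dot{-} 1}$; hence the numerical function $n \mapsto n \dot{-} 1$ is $(\cPlus, \Max)$-definable, a fortiori $(\cPlus, \Downarrow, \Max, \cP)$-definable. (The idea is that on input $\set{n}$ with $n \geq 1$ one forms $\cmp{\set{n} \cPlus \N} = [0, n-1]$ and applies $\Max$, with the case $n = 0$ handled via the discriminator of Lemma~\ref{lma:discriminator} and definition by cases, Lemma~\ref{lma:cases} --- all of which is already folded into the statement of Lemma~\ref{lma:simpleDefs1+2}.) I would then verify regressiveness directly: since $n \dot{-} 1 = n - 1 < n$ for every $n \geq 1$, the set $\set{n \dot{-} 1 \mid n \dot{-} 1 < n}$ equals $\N$ and so is infinite.

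The concluding step is the substitution. Using the hypothesised circuit $\mu(x)$ for $\Max$, I replace each $\Max$-gate in the defining circuit for $n \mapsto n \dot{-} 1$ by $\mu$; by the compositionality of circuit semantics this produces an equivalent $(\cPlus, \Downarrow, \cP)$-circuit computing the same numerical function. Thus $n \mapsto n \dot{-} 1$ would be $(\cPlus, \Downarrow, \cP)$-definable, contradicting Theorem~\ref{theo:additiveSemiregressive} and establishing the corollary.

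I do not anticipate a genuine obstacle here, the result being a short corollary of machinery already in place. The only point requiring care is that $\cO$-definability of a \emph{numerical} function demands a correct singleton output at every $n$, including $n = 0$; this is why the appeal should be to Lemma~\ref{lma:simpleDefs1+2}~(ii) rather than to the bare equation $\Max(\cmp{x \cPlus \N})$, which returns $\emptyset$ at $n = 0$. Once the totality of the defining circuit is secured, the contradiction is immediate.
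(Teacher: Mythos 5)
Your proof is correct and takes essentially the same route as the paper, whose own argument is exactly the reduction you describe: derive the corollary from Theorem~\ref{theo:additiveSemiregressive} together with Lemma~\ref{lma:simpleDefs1+2}~(ii), noting that $n \mapsto n \dot{-} 1$ is regressive. Your explicit handling of the case $n = 0$ and of the gate-substitution step merely spells out what the paper's terse proof leaves implicit.
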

\begin{proof}
From Theorem~\ref{theo:additiveSemiregressive} and
Lemma~\ref{lma:simpleDefs1+2} (ii), noting that $n \mapsto n
\dot{-} 1$ is regressive.
\begin{flushright}
$\qed$
\end{flushright}
\end{proof}
\begin{corollary}
The function $\Card$ is not $(\cPlus, \Downarrow,\cP)$-definable.
\label{cor:plusOnlyCard}
\end{corollary}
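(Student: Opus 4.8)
The plan is to reduce the statement to the already-established Corollary~\ref{cor:plusOnlyMax} by a routine substitution argument, exploiting the fact that $\Max$ can be expressed using $\Card$ in the additive-plus-$\Downarrow$ setting. The key ingredient is Lemma~\ref{lma:simpleDefs1+2}~(vi), which asserts that $\Max$ is $(\cPlus, \Downarrow, \Card)$-definable.

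First I would argue by contradiction. Suppose, for contradiction, that $\Card$ \emph{were} $(\cPlus, \Downarrow, \cP)$-definable, say by a circuit $\rho$. By Lemma~\ref{lma:simpleDefs1+2}~(vi) there is a $(\cPlus, \Downarrow, \Card)$-circuit $\tau$ defining $\Max$. Replacing every $\Card$-gate in $\tau$ by a copy of $\rho$ yields a circuit built only from $\cPlus$-gates, $\Downarrow$-gates, predicate gates, Boolean operators, and constants; since substitution preserves the function computed, this new circuit still defines $\Max$. Hence $\Max$ would be $(\cPlus, \Downarrow, \cP)$-definable, contradicting Corollary~\ref{cor:plusOnlyMax}. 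The only point requiring care is that the class of $(\cPlus, \Downarrow, \cP)$-definable set-functions is closed under substituting one definable function into another---that is, that plugging the predicate-circuit $\rho$ in for $\Card$ genuinely keeps us inside the $(\cPlus, \Downarrow, \cP)$ fragment. This is immediate from the inductive definition of circuits, since every gate of $\tau$ other than $\Card$ already lies in that fragment.

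I would note in passing an equally direct alternative that exactly parallels the proof of Corollary~\ref{cor:plusOnlyMax}: by Lemma~\ref{lma:simpleDefs1+2}~(v), the function $F_{-1}$ is $(\cPlus, \Downarrow, \Card)$-definable, so the same substitution would render $F_{-1}$---and therefore the regressive numerical function $n \mapsto n \dot{-} 1$---definable by a $(\cPlus, \Downarrow, \cP)$-circuit, contradicting Theorem~\ref{theo:additiveSemiregressive}. Either route closes the argument. There is no substantial obstacle remaining at this stage: the real work has already been carried out by the semi-regressivity analysis of Theorem~\ref{theo:additiveSemiregressive} (via the combinatorial Lemma~\ref{lma:plusConstant}), and the present corollary merely harvests that effort through the expressibility equations of Lemma~\ref{lma:simpleDefs1+2}.
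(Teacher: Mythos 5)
Your proposal is correct and, in substance, takes the paper's own approach: the paper's proof is exactly your ``alternative'' route, citing Lemma~\ref{lma:simpleDefs1+2}~(v) together with Theorem~\ref{theo:additiveSemiregressive} (the numerical function $n \mapsto n \dot{-} 1$ computed by $F_{-1}$ on singletons being regressive). Your primary route via Lemma~\ref{lma:simpleDefs1+2}~(vi) and Corollary~\ref{cor:plusOnlyMax} is merely an equivalent repackaging, since that corollary itself rests on the same theorem, and the substitution step you flag is indeed licensed by the closure of circuit-definability under composition noted in Section~\ref{sec:preliminaries}.
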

\begin{proof}
From Theorem~\ref{theo:additiveSemiregressive} and
and Lemma~\ref{lma:simpleDefs1+2} (v).
\begin{flushright}
$\qed$
\end{flushright}
\end{proof}
We shall strengthen Corollary~\ref{cor:plusOnlyCard} in 
Section~\ref{sec:additionals}.

\subsection{Arithmetic circuits with $\Downarrow$ and $\Card$}
\label{sec:additionals}
We next show that there are important gates which $\Downarrow$
(equivalently, $\Max$) still does not allow us to define, even when
added to {\em arithmetic} circuits. Again, we begin with a technical
lemma:
\begin{lemma}
Let $\sigma(\vec{x})$ be an $(\cI, \cP)$-circuit, where $\cI$ is the
set of identically continuous set-functions. Let $\pi_1, \ldots,
\pi_k$ be the maximal predicate sub-circuits of $\sigma$, and let
$\rho_1, \ldots, \rho_\ell$ be all the sub-circuits $\rho$ of $\sigma$
with the property that $\Downarrow(\rho)$ is also a sub-circuit of
$\sigma$. Let $m$ be a number, and let $\vec{s}$, $\vec{s}'$ be tuples
of sets of numbers, of the same arity as $\vec{x}$. If the conditions
\begin{enumerate}[label=\textup{(}\emph{\roman*}\textup{)}]
\item $\vec{s}_{|m} = \vec{s}'_{|m}$;
\item $\pi_i(\vec{s}) = \pi_i(\vec{s}')$ for all $i$ \textup{(}$1 \leq i \leq
  k$\textup{)};
\item $\rho_i(\vec{s}) \cap [m+1, \infty) = \emptyset$ if and only if
  $\rho_i(\vec{s}') \cap [m+1, \infty) = \emptyset$ for all $i$ \textup{(}$1 \leq i
  \leq \ell$\textup{)}.
\end{enumerate}
all hold, then $\sigma(\vec{s})_{|m} = \sigma(\vec{s}')_{|m}$.
\label{lma:downarrowAgree}
\end{lemma}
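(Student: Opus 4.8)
The plan is to argue by structural induction on $\sigma$, cutting the circuit at its maximal predicate sub-circuits and proving the slightly stronger claim that $\nu(\vec{s})_{|m} = \nu(\vec{s}')_{|m}$ for \emph{every} sub-circuit $\nu$ of $\sigma$ that is not a proper sub-circuit of any $\pi_i$. The leaves of this ``top part'' of $\sigma$ are the variables, the constant gates, and the maximal predicate sub-circuits $\pi_1, \ldots, \pi_k$; its internal gates are either identically continuous (this class, recorded in Section~\ref{sec:preliminaries}, includes all the Boolean gates together with $\cPlus$ and $\cModTimes$) or else are $\Downarrow$-gates. Once the claim is established for $\nu = \sigma$ we are done.

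The base cases are immediate: for a variable $x_j$ the desired identity is exactly condition (\emph{i}); for a constant gate the value is independent of the input; and for a maximal predicate sub-circuit $\pi_i$ it is condition (\emph{ii}), since two predicate values that are equal remain equal when restricted to $[0,m]$. For the inductive step, suppose first $\nu = o(\nu_1, \ldots, \nu_r)$ with $o$ identically continuous. The induction hypothesis gives $\nu_j(\vec{s})_{|m} = \nu_j(\vec{s}')_{|m}$ for every $j$, so the tuples of inputs presented to $o$ agree below $m$; identical continuity of $o$ then yields $\nu(\vec{s})_{|m} = \nu(\vec{s}')_{|m}$.

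The one substantive case is $\nu = \Downarrow(\rho_i)$. The observation that drives the whole lemma is that, for any set $t$ and any $j \leq m$, we have $j \in \Downarrow(t)$ iff $t$ contains an element $\geq j$, and this happens iff either $t_{|m}$ meets $[j,m]$ or $t$ meets $[m+1,\infty)$. Hence $\Downarrow(t)_{|m}$ is completely determined by the pair consisting of $t_{|m}$ and the single bit recording whether $t \cap [m+1,\infty) = \emptyset$: if $t$ overflows past $m$ then $\Downarrow(t)_{|m} = [0,m]$, and otherwise $t = t_{|m}$ and $\Downarrow(t)_{|m} = \Downarrow(t_{|m})$. Applying this with $t = \rho_i(\vec{s})$ and $t = \rho_i(\vec{s}')$, the induction hypothesis supplies $\rho_i(\vec{s})_{|m} = \rho_i(\vec{s}')_{|m}$ while condition (\emph{iii}) guarantees that the two overflow bits agree; therefore $\Downarrow(\rho_i)(\vec{s})_{|m} = \Downarrow(\rho_i)(\vec{s}')_{|m}$, completing the induction.

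I expect the only real difficulty to be the bookkeeping around the cut rather than any analytic content. The point to verify carefully is that every $\Downarrow$-gate lying in the top part has its argument among $\rho_1, \ldots, \rho_\ell$, so that condition (\emph{iii}) is available precisely where the induction needs it, whereas any $\Downarrow$-gate buried inside some $\pi_i$ is never inspected, since $\pi_i$ is treated as an opaque leaf via condition (\emph{ii}). Finally, the $\cModTimes$/$\varepsilon$ reformulation of $\cTimes$ in~\eqref{eq:cTimescModTimes} is what lets one present an arithmetic circuit augmented with $\Downarrow$ in the $(\cI, \Downarrow, \cP)$ form to which this induction applies, so that the lemma will be usable for the definability results of this section.
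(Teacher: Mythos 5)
Your proof is correct and takes essentially the same route as the paper's: you prove the same strengthened claim (agreement below $m$ for every sub-circuit not properly contained in a predicate sub-circuit) by the same structural induction, invoking conditions (\emph{i})--(\emph{iii}) at exactly the points the paper does. Your explicit analysis of the $\Downarrow$ case via the overflow bit, and of the identically continuous gates, merely spells out steps the paper compresses into ``by assumption (iii) and the definition of $\Downarrow$'' and ``the remaining cases are immediate.''
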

\begin{proof}
We prove the stronger statement that, for any sub-circuit
$\tau(\vec{x})$ of $\sigma$ which is not a proper sub-circuit of a
predicate sub-circuit of $\sigma$, we have $\tau(\vec{s})_{|m} =
\tau(\vec{s}')_{|m}$, proceeding by structural induction on $\tau$.
The case $\tau(\vec{x}) = x$ for some variable $x$ is immediate from
assumption (i). The case where $\tau(\vec{x})$ is
$\pi_i(\vec{x})$ for some $i$ ($1 \leq i \leq k$) is immediate from
assumption (ii).  Consider the case where $\tau(\vec{x})$ is
$\Downarrow(\rho_i(\vec{x}))$ for some $i$ ($1 \leq i \leq \ell$). By
inductive hypothesis, $\rho_i(\vec{s})_{|m} = \rho_i(\vec{s}')_{|m}$,
whence, by assumption (iii) and the definition of $\Downarrow$,
we see that
\mbox{$(\Downarrow(\rho_i(\vec{s})))_{|m} = 
(\Downarrow(\rho_i(\vec{s}')))_{|m}$}. The remaining cases are
immediate.
\begin{flushright}
$\qed$
\end{flushright}
\end{proof}
We now have the promised strengthening of Corollary~\ref{cor:plusOnlyCard}.
\begin{theorem}
The function $\Card$ is not $(\cPlus, \cTimes, \Downarrow,
\cP)$-definable.
\label{theo:card}
\end{theorem}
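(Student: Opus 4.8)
The plan is to reduce the claim to circuits built only from identically continuous gates together with $\Downarrow$, and then to combine Lemma~\ref{lma:downarrowAgree} with a pigeonhole argument. First I would use the identity~\eqref{eq:cTimescModTimes}, exactly as in Corollary~\ref{cor:dmc}, to rewrite every $\cTimes$-gate in terms of $\cModTimes$, the predicate $\varepsilon$, and Boolean operations. Since this substitution preserves the set-function computed, and since $\cPlus$, $\cModTimes$ and the Boolean operations are all identically continuous (hence lie in the class $\cI$) while $\varepsilon \in \cP$, any $(\cPlus, \cTimes, \Downarrow, \cP)$-circuit is transformed into an $(\cI, \Downarrow, \cP)$-circuit computing the same function. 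It therefore suffices to show that no $(\cI, \Downarrow, \cP)$-circuit defines $\Card$, so that Lemma~\ref{lma:downarrowAgree} becomes applicable.

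So suppose, for contradiction, that $\sigma(x)$ is such a circuit defining $\Card$. Let $\pi_1, \ldots, \pi_k$ be its maximal predicate sub-circuits, and let $\rho_1, \ldots, \rho_\ell$ be those sub-circuits $\rho$ for which $\Downarrow(\rho)$ is a sub-circuit of $\sigma$, as in Lemma~\ref{lma:downarrowAgree}. I would then set $N = 2^{k+\ell}+1$, fix any $m \geq N$, and for $1 \leq j \leq N$ define the test sets $s_j = [m+1, m+j]$. Each $s_j$ is finite with $(s_j)_{|m} = \emptyset$ and $|s_j| = j \leq N \leq m$, so that all these sets agree below the cutoff $m$ while having pairwise distinct cardinalities, all bounded by $m$.

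The heart of the argument is a counting step. To each index $j$ I associate the tuple recording the values $\pi_1(s_j), \ldots, \pi_k(s_j)$ together with the $\ell$ bits indicating whether $\rho_i(s_j) \cap [m+1, \infty)$ is empty. There are at most $2^{k+\ell}$ such tuples, so among the $N = 2^{k+\ell}+1$ sets $s_1, \ldots, s_N$ two of them, say $s_j$ and $s_{j'}$ with $j < j'$, must produce identical tuples. These two sets satisfy all three hypotheses of Lemma~\ref{lma:downarrowAgree} relative to $m$: condition~(i) holds because $(s_j)_{|m} = \emptyset = (s_{j'})_{|m}$, while conditions~(ii) and~(iii) hold by the choice of $j$ and $j'$. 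The lemma therefore yields $\sigma(s_j)_{|m} = \sigma(s_{j'})_{|m}$. But if $\sigma$ defined $\Card$ we would have $\sigma(s_j)_{|m} = \set{j}$ and $\sigma(s_{j'})_{|m} = \set{j'}$—both nonempty, since $j, j' \leq m$—and these differ because $j \neq j'$, a contradiction.

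The point requiring the most care, and the place where the cardinality bound does the real work, is keeping the cardinalities of the test sets below the cutoff $m$, so that $\Card$ genuinely reveals them in the initial segment $[0,m]$; this is what forces $m$ to exceed the number $2^{k+\ell}$ of predicate-and-downarrow configurations the circuit can exhibit, and hence dictates the choice $N = 2^{k+\ell}+1$. Once $\Downarrow$ has been tamed by Lemma~\ref{lma:downarrowAgree}—so that only the emptiness of the tails $\rho_i(\cdot) \cap [m+1,\infty)$, rather than the full discontinuous behaviour of the $\Downarrow$-gates, must be pinned down—the rest is a routine pigeonhole argument in the spirit of the proof of Theorem~\ref{theo:dmc}.
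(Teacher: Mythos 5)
Your proposal is correct and takes essentially the same route as the paper's own proof: the same elimination of $\cTimes$ in favour of $\cModTimes$ and $\varepsilon$, the same test sets $s_j = [m+1, m+j]$ with cardinalities kept below the cutoff, and the same pigeonhole over the $2^{k+\ell}$ predicate-and-tail configurations feeding into Lemma~\ref{lma:downarrowAgree}. The only differences are bookkeeping (the paper fixes $m = 2^{k+\ell}$ and uses the $m+1$ sets $s_0, \ldots, s_m$, whereas you decouple $m \geq N = 2^{k+\ell}+1$), plus the minor point that your tail condition $\rho_i(\cdot) \cap [m+1,\infty) = \emptyset$ matches the lemma's hypothesis exactly.
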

\begin{proof}
Let $\sigma(x)$ be any $(\cPlus, \cModTimes,
\Downarrow,\cP)$-circuit. Since $\cTimes$ is $(\cModTimes, \varepsilon)$-definable, 
it suffices to show that $\sigma(x)$ does not define $\Card$.

\bigskip

\noindent
Let $\pi_1, \ldots, \pi_k$ be the maximal predicate sub-circuits of
$\sigma$, and let $\rho_1, \ldots, \rho_\ell$ be all the sub-circuits
$\rho$ of $\sigma$ with the property that $\Downarrow(\rho)$ is also a
sub-circuit of $\sigma$. Let $m = 2^{k + \ell}$.  For all $s \in
\bbP$ and $1 \leq i \leq \ell$, let $\gamma_i(s)$ denote the truth-value
of the condition $\rho_i(s) \cap [m,\infty) = \emptyset$, remembering,
  of course, that truth-values are the sets $\set{0}$ (true) and
  $\emptyset$ (false).  Denote by $\vec{v}(s)$ the $k+ \ell$-tuple of
  truth-values $(\pi_1(s), \ldots, \pi_k(s), \gamma_1(s), \ldots,
  \gamma_\ell(s))$. Let $s_j = [m+1, m+j]$ for all $j$ ($0 \leq j \leq
  m$). Note that $s_0 = \emptyset$ and, for all $j$ ($0 \leq j \leq
  m$), $\Card(s_j) = j$. Clearly, we may pick $j$, $j'$ with $0 \leq j
  < j' \leq m$ such that $\vec{v}(s_j) = \vec{v}(s_j')$, since
  $\vec{v}(s_j)$ takes at most $2^{k+\ell}$ values.  All the
  conditions of Lemma~\ref{lma:downarrowAgree} are satisfied by $s=
  s_j$ and $s'= s_{j'}$. Hence $\sigma(s_j)_{|m} =
  \sigma(s_{j'})_{|m}$, so that $\sigma(x)$ does not define $\Card$,
  as required.
\begin{flushright}
$\qed$
\end{flushright}
\end{proof}
We remark that the only property of $\cPlus$ and $\cModTimes$ used in
the proof of Theorem~\ref{theo:card} is that they are identically
continuous. Thus, adding further identically continuous gates would
still not allow the definability of $\Card$.

We finish off with a partial undefinability result for $(\cPlus, \Card(x))$-circuits. 
Again, we begin with some technical lemmas.
\begin{lemma}
The following statements hold for all $s, t \in \bbP$:
\begin{enumerate}[label=\textup{(}\emph{\roman*}\textup{)}]
\item
If $s$ and $t$ are finite, $\Card(s \cPlus t) \leq \Card(s) \cdot \Card(t)$. 
\item
If either $s$ or $t$ is empty, $\Card(s \cPlus t) = 0$. 
\item
If $s$ is co-finite and $t$ non-empty, $\Card(\cmp{s \cPlus t}) \leq 
   \min(t) + \Card(\cmp{s})$. 
\end{enumerate}
\label{lma:cardMin}
\end{lemma}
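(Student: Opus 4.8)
The plan is to dispose of the first two items by direct computation and to reserve the real work for the third. For part (i), I would observe that $s \cPlus t$ is precisely the image of the finite set $s \times t$ under the addition map $(m,n) \mapsto m+n$. Since the image of a set under a function has cardinality no greater than that of its domain, and $|s \times t| = |s| \cdot |t|$, we obtain $|s \cPlus t| \leq |s| \cdot |t|$, which is the assertion (reading the singleton values of $\Card$ as the numbers they contain). Part (ii) is immediate: if either factor is empty there are no pairs $(m,n)$ with $m \in s$ and $n \in t$, so $s \cPlus t = \emptyset$ and hence $\Card(s \cPlus t) = \set{0}$.

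For part (iii), the key observation is that, writing $d = \min(t)$, we have $d \in t$ and hence $s \cPlus \set{d} \subseteq s \cPlus t$; consequently $\cmp{s \cPlus t} \subseteq \cmp{s \cPlus \set{d}}$, and it suffices to bound $|\cmp{s \cPlus \set{d}}|$ from above. Now $s \cPlus \set{d}$ is just $s$ shifted upwards by $d$, so every element of it is $\geq d$, and a number $k$ fails to lie in it in exactly one of two disjoint ways: either $k < d$, or $k \geq d$ with $k - d \in \cmp{s}$. I would count these contributions separately: there are exactly $d$ numbers below $d$, while the map $k \mapsto k-d$ is a bijection from $[d,\infty)$ onto $\N$ carrying the second class onto $\cmp{s}$, so it contributes exactly $|\cmp{s}|$. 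Hence $|\cmp{s \cPlus \set{d}}| = d + |\cmp{s}|$, and the desired inequality $\Card(\cmp{s \cPlus t}) \leq \min(t) + \Card(\cmp{s})$ follows. This computation also confirms that $\cmp{s \cPlus t}$ is finite, since $\cmp{s}$ is (as $s$ is co-finite, and therefore in particular non-empty), so that $\Card$ is indeed being applied throughout to finite sets.

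I do not anticipate a serious obstacle, as all three parts are elementary. The only point requiring a little care is the bookkeeping in part (iii): one must check that the two cases ``$k < d$'' and ``$k \geq d$ with $k-d \in \cmp{s}$'' are genuinely disjoint and together exhaust $\cmp{s \cPlus \set{d}}$, and that passing from $s \cPlus \set{d}$ to $s \cPlus t$ only enlarges the set (and so only shrinks the complement), which is precisely what converts the exact count into the claimed upper bound.
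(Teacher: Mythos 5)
Your proof is correct, and it supplies exactly the elementary verification that the paper compresses into the words ``Routine check'': the image-of-a-product bound for (i), the trivial emptiness observation for (ii), and for (iii) the reduction $\cmp{s \cPlus t} \subseteq \cmp{s \cPlus \set{\min(t)}}$ followed by the exact count $d + |\cmp{s}|$ for the shifted complement. Your bookkeeping in (iii), including the check that $\cmp{s \cPlus t}$ is finite so that $\Card$ behaves as intended, is sound and consistent with the paper's intended reading of the inequalities (singleton values of $\Card$ identified with the numbers they contain).
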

\begin{proof}
Routine check.
\begin{flushright}
$\qed$
\end{flushright}
\end{proof}
Define the following functions:
\begin{eqnarray*}
\Card^*(s) & = &
\begin{cases}
\Card(s)         & \text{if $s$ is finite}\\
\Card(\cmp{s})   & \text{if $s$ is co-finite}\\
\text{undefined} & \text{otherwise.}
\end{cases}\\
{\min}^*(s) & = &
\begin{cases}
\min(s) & \text{if $s$ is non-empty}\\
-1      & \text{otherwise.}
\end{cases}
\end{eqnarray*}
\begin{lemma}
Let $s, t \in \bbP$ be finite or co-finite. Then:
\begin{enumerate}[label=\textup{(}\emph{\roman*}\textup{)}]
\item
$\Card^*(\cmp{s}) = \Card^*(s)$;
\item
$\Card^*(s \cup t) \leq \Card^*(s) + \Card^*(t)$;
\item
$\Card^*(s \cap t) \leq \Card^*(s) + \Card^*(t)$;
\item
$\Card^*(s \cPlus t) \leq \max(\Card^*(s) \cdot \Card^*(t),
\Card^*(s) \cPlus \min^*(t), \Card^*(t) + \min^*(s))$.
\end{enumerate}
\label{lma:cardStar}
\end{lemma}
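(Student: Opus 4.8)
The plan is to prove the four inequalities in the order (\emph{i}), (\emph{iii}), (\emph{ii}), (\emph{iv}), leaning on the earlier parts and on Lemma~\ref{lma:cardMin}, with the genuine work concentrated in part~(\emph{iv}). Throughout I would use the fact that the finite-or-co-finite sets are closed under $\cmp{\cdot}$, $\cup$, $\cap$ and $\cPlus$, so that $\Card^*$ is defined at every set appearing below; the only non-obvious closure fact is that $s \cPlus t$ is co-finite whenever one factor is co-finite and the other non-empty, and this is already implicit in Lemma~\ref{lma:cardMin}~(\emph{iii}).

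Part~(\emph{i}) is immediate from $\cmp{\cmp s} = s$: if $s$ is finite then $\cmp s$ is co-finite and $\Card^*(\cmp s) = \Card(\cmp{\cmp s}) = \Card(s) = \Card^*(s)$, and the co-finite case is symmetric. Part~(\emph{iii}) I would then derive from~(\emph{i}) and~(\emph{ii}) by De Morgan duality: writing $s \cap t = \cmp{\cmp s \cup \cmp t}$ and applying~(\emph{i}), then~(\emph{ii}), then~(\emph{i}) again gives $\Card^*(s \cap t) = \Card^*(\cmp s \cup \cmp t) \leq \Card^*(\cmp s) + \Card^*(\cmp t) = \Card^*(s) + \Card^*(t)$. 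Part~(\emph{ii}) is a four-way split on the finite/co-finite status of $s$ and $t$: when both are finite one uses $|s \cup t| \leq |s| + |t|$; in each remaining case $s \cup t$ is co-finite, and since $\cmp{s \cup t} = \cmp s \cap \cmp t$ sits inside $\cmp s$ when $s$ is co-finite and inside $\cmp t$ when $t$ is co-finite, $\Card^*(s \cup t)$ is bounded by $\Card^*(s)$ or $\Card^*(t)$, hence by their sum.

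The substance is part~(\emph{iv}), which I would prove by a case analysis steered so that each case lands on one of the three arguments of the maximum. If $s$ or $t$ is empty then $s \cPlus t = \emptyset$, so $\Card^*(s \cPlus t) = 0$, which is dominated by the non-negative product term $\Card^*(s) \cdot \Card^*(t)$. If both are non-empty and finite, Lemma~\ref{lma:cardMin}~(\emph{i}) gives $\Card^*(s \cPlus t) = \Card(s \cPlus t) \leq \Card(s) \cdot \Card(t) = \Card^*(s) \cdot \Card^*(t)$, again the first term. If instead some factor is co-finite---say $s$, the other being non-empty, which also subsumes the both-co-finite case---then $s \cPlus t$ is co-finite and Lemma~\ref{lma:cardMin}~(\emph{iii}) yields $\Card^*(s \cPlus t) = \Card(\cmp{s \cPlus t}) \leq \min(t) + \Card(\cmp s) = \Card^*(s) + \min^*(t)$, the second term; the case where only $t$ is co-finite gives the third term by commutativity of $\cPlus$.

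The main thing to watch is the interaction of the boundary conventions with the bound rather than any deep estimate. Because $\min^*(\emptyset) = -1$, the terms $\Card^*(s) + \min^*(t)$ and $\Card^*(t) + \min^*(s)$ can be negative, so they cannot be relied upon when a factor is empty; the proof is arranged so that every empty-factor case is absorbed instead by the always-non-negative product term, which is exactly why that case is dispatched first. I would also check that in the co-finite cases the passages $\Card(\cmp s) = \Card^*(s)$ and $\min(t) = \min^*(t)$ are legitimate---they are, precisely because $s$ is co-finite and $t$ non-empty there---after which the induction-free case checking completes the lemma.
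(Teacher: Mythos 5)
Your proof is correct and follows exactly the route the paper intends: the paper's own proof consists of the single remark that (i)--(iii) are immediate and (iv) follows from Lemma~\ref{lma:cardMin}, and your case analysis for (iv) (empty factor absorbed by the product term, both finite via Lemma~\ref{lma:cardMin}(i), a co-finite factor via Lemma~\ref{lma:cardMin}(iii) and commutativity) is precisely the intended unpacking. Your attention to the closure of finite-or-co-finite sets under $\cPlus$ and to the sign convention $\min^*(\emptyset)=-1$ fills in details the paper leaves tacit, but the approach is the same.
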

\begin{proof}
The statements (i)--(iii) are immediate; (iv)
follows from Lemma~\ref{lma:cardMin}.
\begin{flushright}
$\qed$
\end{flushright}
\end{proof}
\begin{lemma}
Let $\tau(x)$ be a $(\cPlus, \cP)$-circuit, $s \in \bbP$ be finite or
co-finite, and $k >1$. Suppose that, for any sub-circuit $\sigma(x)$
of $\tau(x)$, $\min^*(\sigma(s)) \leq k$. Then \linebreak
 $\Card^*(\tau(s)) \leq
(k+ \Card^*(s))^{\lVert \tau \rVert}$, where ${\lVert \tau \rVert}$
denotes the total number of symbols in $\tau$.
\label{lma:minBound}
\end{lemma}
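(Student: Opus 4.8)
The plan is to prove the bound by a single structural induction on $\tau$, exploiting the fact that both hypotheses are \emph{global} (they are quantified over all sub-circuits, and any sub-circuit of a sub-circuit of $\tau$ is again a sub-circuit of $\tau$), so they descend automatically through the induction. First I would record a preliminary observation that makes the statement well-posed: since $s$ is finite or co-finite, and the available gates---$\cPlus$, the Boolean operations, and the predicate gates of $\cP$, whose outputs lie in $\set{\set{0},\emptyset}$---all send finite-or-co-finite sets to finite-or-co-finite sets (a routine case check, the only subtlety being that $s\cPlus t$ is co-finite whenever one argument is co-finite and the other non-empty), every sub-circuit $\sigma(s)$ is itself finite or co-finite. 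Hence $\Card^*(\sigma(s))$ is defined throughout. Writing $B = k + \Card^*(s)$, I note that $B \geq 2$ because $k > 1$ forces $k \geq 2$; the entire role of the assumption $k>1$ is to secure $B\geq 2$, without which the bound already fails (e.g.\ for $\set{m}\cup\set{n}$ when $B=1$).

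The argument then rests on three elementary numeric facts, valid for $B \geq 2$ and exponents $a,b \geq 1$: namely $B^a + B^b \leq B^{a+b+1}$, $B^a \cdot B^b = B^{a+b}$, and $B^a + B \leq B^{a+1}$; each follows from $B \leq B^a$ together with $2B^c \leq B^{c+1}$. Since every circuit contains at least one symbol, $\lVert\sigma\rVert \geq 1$ for every sub-circuit, so these are always applicable. The base cases are then immediate: for a constant gate the value has $\Card^* \leq 1 \leq B^{\lVert\tau\rVert}$ (as $B\geq 2$ and $\lVert\tau\rVert\geq 1$), and for the variable $x$ we have $\Card^*(s) \leq B = B^{1}$. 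A maximal predicate sub-circuit can be treated as \emph{atomic}: whatever it contains, its output cardinality is at most $1$, so $\Card^* \leq 1 \leq B^{\lVert\tau\rVert}$ and there is no need to recurse into it.

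For the inductive step I would invoke Lemma~\ref{lma:cardStar} case by case. Complement uses (i): $\Card^*(\cmp{\tau_1(s)}) = \Card^*(\tau_1(s)) \leq B^{\lVert\tau_1\rVert} \leq B^{\lVert\tau\rVert}$, the exponent only increasing. Union and intersection use (ii) and (iii) with the inductive hypothesis and $B^a + B^b \leq B^{a+b+1}$, taking $a=\lVert\tau_1\rVert$, $b=\lVert\tau_2\rVert$, and $\lVert\tau\rVert = a+b+1$. The one place where the $\min^*$ hypothesis genuinely enters---and the step I expect to be the main (though still routine) obstacle---is addition, $\tau = \tau_1 \cPlus \tau_2$. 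Here Lemma~\ref{lma:cardStar}~(iv) bounds $\Card^*(\tau_1(s)\cPlus\tau_2(s))$ by the maximum of three quantities, and I must bound each by $B^{\lVert\tau\rVert}$. Writing $c_i = \Card^*(\tau_i(s))$ and $\mu_i = \min^*(\tau_i(s))$, the product term satisfies $c_1 c_2 \leq B^{a}\cdot B^{b} = B^{a+b} \leq B^{\lVert\tau\rVert}$. For the two mixed terms I use the hypothesis applied to the sub-circuits $\tau_i$, namely $\mu_i \leq k \leq B$, giving $c_1 + \mu_2 \leq B^{a} + B \leq B^{a+1} \leq B^{\lVert\tau\rVert}$, and symmetrically for $c_2 + \mu_1$; when $\mu_i = -1$ the bound is only easier. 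Taking the maximum of the three settles the addition case and completes the induction.
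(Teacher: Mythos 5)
Your proof is correct and follows essentially the same route as the paper's: a structural induction establishing $\Card^*(\sigma(s)) \leq (k+\Card^*(s))^{\lVert\sigma\rVert}$ for every sub-circuit $\sigma$, with the base cases (constants, the variable, and predicate sub-circuits treated atomically) handled directly and the Boolean and $\cPlus$ cases dispatched via the corresponding cases of Lemma~\ref{lma:cardStar}, the $\min^*$ hypothesis entering only through clause (iv). The paper states these steps as ``immediate''; you have merely supplied the arithmetic it suppresses (well-posedness of $\Card^*$ on all sub-circuit values, $B \geq 2$, and the inequalities $B^a B^b = B^{a+b}$, $B^a + B^b \leq B^{a+b+1}$, $B^a + B \leq B^{a+1}$), all of which check out.
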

\begin{proof}
We show by structural induction that, for $\sigma(x)$ a sub-circuit of
$\tau$, \linebreak $\Card^*(\sigma(s)) \leq (k+ \Card^*(s))^{\lVert
  \sigma \rVert}$.  If $\sigma$ is a predicate sub-circuit or
any of $x$, $\emptyset$, $\N$ or
$\set{p}$ (necessarily: $p \leq k$), the statement is immediate. The
cases where $\sigma$ is any of $\sigma_1 \cup \sigma_2$,
$\sigma_1 \cap \sigma_2$, $\cmp{\sigma}_1$ or $\sigma_1 \cPlus \sigma_2$,
follow from the corresponding cases of Lemma~\ref{lma:cardStar}.
\begin{flushright}
$\qed$
\end{flushright}
\end{proof}
Define the function $\Max_{-1}(x)$ by
\begin{equation*}
\Max_{-1}(x) = 
\begin{cases}
\emptyset  & \text{if $x \subseteq \set{0}$}\\
\max(x) -1 & \text{if $x$ is finite, non-empty with $\max(x) > 0$}\\
\N         & \text{otherwise}.
\end{cases}
\end{equation*}
\begin{theorem}
Let $\tau(x)$ be a $(\cPlus, \Card)$-circuit in which no $\Card$-gate
appears within the scope of another. Then $\tau(x)$ does not define
the function $\Max_{-1}(x)$.
\label{theo:CardDownarrow}
\end{theorem}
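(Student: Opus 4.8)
The plan is to refute any candidate circuit on a family of inputs consisting of a dense low block together with a single far-away element. Since no $\Card$-gate lies within the scope of another, I would write
$\tau$ in the form in which $\rho$ is an additive ($\cPlus$) circuit applied to $x$ and to the gate values $\Card(\sigma_1(x)), \ldots, \Card(\sigma_p(x))$, where each $\sigma_i$ is itself an additive circuit. Let $L$ be strictly greater than every numerical constant occurring in $\tau$, and for $M > 2L$ put $s_M = [0,L] \cup \set{M}$. Each $s_M$ is finite, non-empty and not contained in $\set{0}$, so $\Max_{-1}(s_M) = \set{M-1}$; it therefore suffices to exhibit a single $M$ with $\tau(s_M) \neq \set{M-1}$.

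\textbf{Killing the $\Card$-gates by a pigeonhole.} First I would show that the gates $\Card(\sigma_i)$ carry no information about $M$. Because every constant of $\tau$ is $< L$, no additive subcircuit can delete the whole block $[0,L]$, and hence none can isolate $\set{M}$ or build the half-line $[M,\infty)$. A routine structural induction (a ``shield'' argument, in the spirit of Lemma~\ref{lma:plusConstant}) then shows that every additive subcircuit $\theta$ of $\tau$ retains an element below a bound $k$ depending only on $\lVert\theta\rVert$ and $L$; that is, $\min^*(\theta(s_M)) \le k$ for all large $M$. Since also $\Card^*(s_M) = L+2$, Lemma~\ref{lma:minBound} gives $\Card^*(\sigma_i(s_M)) \le (k + L + 2)^{\lVert\sigma_i\rVert}$, a bound independent of $M$. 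Consequently each gate $\Card(\sigma_i(s_M))$ is either $\N$ (when $\sigma_i(s_M)$ is co-finite) or a singleton of bounded size, so as $M$ ranges over the integers $> 2L$ the whole tuple of gate values takes only finitely many values. By the pigeonhole principle there is an infinite set $\mathcal{M}$ of values of $M$ on which $\Card(\sigma_1(s_M)), \ldots, \Card(\sigma_p(s_M))$ are simultaneously constant, say equal to $w_1, \ldots, w_p$.

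\textbf{Reduction to the additive case.} Let $\rho'(x)$ be the additive circuit obtained from $\rho$ by replacing each $\Card(\sigma_i(x))$ with the constant $w_i$. For every $M \in \mathcal{M}$ we have $\tau(s_M) = \rho'(s_M)$, and all constants of $\rho'$ are bounded independently of $M$. It remains to see that an additive circuit with constants $\ll M$ cannot output $\set{M-1}$ on $s_M$. The reason is that the value $M-1$ can enter $\rho'(s_M)$ only as part of a complement-filled interval reaching down to $O(1)$: every sum either stays in $[0,O(1)]$ or lands at $\ge M$, while the complement of any half-line (whose base is $\ll M$) tops out well below $M-1$. Making this precise is a second structural induction, the exact analogue of Lemma~\ref{lma:plusConstant} for the input $s_M$ in place of a singleton, establishing that $\rho'(s_M)$ is \emph{uniform} on an interval $[\alpha, M-1]$, with $\alpha$ independent of $M$. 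Thus if $M-1 \in \rho'(s_M)$ then $M-2 \in \rho'(s_M)$ as well, so $\rho'(s_M) \neq \set{M-1}$; and if $M-1 \notin \rho'(s_M)$ the inequality is immediate. Either way $\tau(s_M) \neq \Max_{-1}(s_M)$ for large $M \in \mathcal{M}$, which completes the proof.

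\textbf{The main obstacle} is the pair of structural inductions just sketched. Both encode the single intuition that additive circuits with constants bounded away from $M$ have finite ``reach'', and so can neither isolate nor decrement the top element $M$ of $s_M$: the first induction uses this to keep the cardinalities fed to the $\Card$-gates bounded (so that pigeonhole applies), and the second uses it to force uniformity of $\rho'(s_M)$ just below $M$. Care is needed in the $\cPlus$-case of each induction, where one must track, alongside the bounded low pattern, the finitely many shifted copies of the block appearing near $M, 2M, \ldots$, exactly as in the case analysis of Lemma~\ref{lma:plusConstant}.
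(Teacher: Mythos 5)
Your proof fails at the first structural induction, and the failure is not repairable within your chosen input family. The claim that, on $s_M = [0,L] \cup \set{M}$, every additive subcircuit with constants below $L$ keeps a bounded minimum---and hence ``cannot isolate $\set{M}$ or build the half-line $[M,\infty)$''---is false. Consider $\theta(x) = \bigl(x \cap \cmp{x \cPlus \set{1}}\bigr) \setminus \set{0}$, which uses only the constants $0$ and $1$: on $s_M$, the shift $x \cPlus \set{1} = [1,L+1] \cup \set{M+1}$ covers all of $[1,L]$, so $x \cap \cmp{x \cPlus \set{1}} = \set{0, M}$ (the left endpoints of the runs of $s_M$), and removing $\set{0}$ isolates $\set{M}$ exactly. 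No large constants are needed to delete the block, because the block is an interval and one shift-plus-complement wipes it out. So $\min^*(\theta(s_M)) = M$ is unbounded, the hypothesis of Lemma~\ref{lma:minBound} fails, and your pigeonhole collapses: e.g.\ $\sigma(x) = \cmp{\theta(x) \cPlus \N} \setminus \set{0}$ evaluates to $[1,M-1]$, so $\Card(\sigma(s_M)) = \set{M-1}$ carries full information about $M$. Worse, the single-$\Card$ circuit $\Card\bigl(\cmp{\theta(x) \cPlus \N} \setminus \set{0}\bigr)$---which has no nested $\Card$-gates and thus is exactly of the kind the theorem concerns---computes $\Max_{-1}(s_M) = \set{M-1}$ \emph{correctly on your entire test family}. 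Hence no argument confined to inputs of the form $[0,L] \cup \set{M}$ can possibly prove the theorem.

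The missing idea is that the low ``base'' set must be constructed \emph{adaptively}, relative to the circuit, rather than fixed in advance. This is what the paper's Lemma~\ref{lma:saturate} provides: a saturation process, driven by the uniform continuity of additive circuits (Lemma~\ref{lma:uc}), which repeatedly enlarges a finite base $s^*$ until the emptiness pattern and the minima of \emph{all} subcircuits feeding the $\Card$-gates have stabilized for every extension $t$ with $t_{|m^*} = s^*$---in particular for all $t = s^* \cup \set{m}$, $m > m^*$. Against the counterexample above, saturation simply inserts a positive left-endpoint into $s^*$, pinning $\min \theta(t)$ at a small constant so that the $\Card$-inputs really are stable; only then do Lemma~\ref{lma:minBound} and the pigeonhole legitimately yield constant gate values. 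Note also that your second induction (uniformity of $\rho'(s_M)$ just below $M$) is unnecessary even within your own plan: once the $\Card$-gates are replaced by constants, the residual circuit is a $\cPlus$-circuit, hence identically continuous, and since $s^* \cup \set{m}$ and $s^* \cup \set{m'}$ agree on $[0, m-1]$ while $\Max_{-1}$ returns $\set{m-1}$ versus $\set{m'-1}$, which differ there, the contradiction is immediate---this is exactly how the paper concludes.
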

\begin{proof}
Let $\sigma_1, \ldots, \sigma_p$ be the sub-circuits of $\tau$
appearing anywhere (not necessarily immediately) in the scope of a
$\Card$-gate.  Hence, each $\sigma_k$ is a $(\cPlus)$-circuit.
Applying Lemma~\ref{lma:saturate} with, say, $\vec{s}_0 = \set{0}$ and
$m = 1$, let $s^*$ be a finite set of numbers and $m^*$ a number,
greater than any element of $s^*$, such that, for all $k$ ($1 \leq k
\leq p$) and all $m \in [m^*+1, \infty)$, $\min^*(\sigma_k(s^* \cup
  \set{m})) \leq m^*$.  By Lemma~\ref{lma:minBound}, we have, for all
  such $k$ and $m$: $\Card^*(\sigma_k(s^* \cup \set{m})) \leq (|s^*|+1
  + k)^{\lVert \sigma_k \rVert}$. Note that the right-hand side of
  this inequality does not depend on $m$.  Thus, the tuple $\vec{v}_m
  = \langle \Card(\sigma_1(s^* \cup \set{m})), \ldots,
  \Card(\sigma_p(s^* \cup \set{m})) \rangle$ can take only finitely
  many values as $m$ ranges over $[m^*+1, \infty)$. Hence, we may pick
    an infinite subset $M \subseteq [m^*+1, \infty)$ such that
$\vec{v}_m$ is in fact constant as $m$ ranges over $M$. Let
      $\tau^*(x)$ be the result of replacing any sub-circuit of the
      form $\Card^*(\sigma(x))$ in $\tau$ by the singleton constant
circuit $\set{\Card^*(s^* \cup \set{m})}$ (which is independent of $m$ for $m \in M$),
      and let $D= \set{s^* \cup \set{m} \mid m \in M}$.  Thus,
      $\tau^*(x)$ is a $\cPlus$-circuit, and $\tau^*(t) = \tau(t)$
      for all $t \in D$.

\bigskip
\noindent
To complete the proof, we show that no $\cPlus$-circuit can define
$\Max_{-1}$ over $D$. For choose $m, m' \in M$, with $m < m'$. We
simply observe that $(s^* \cup \set{m})_{|m-1} = (s^* \cup
\set{m'})_{|m-1}$, but $\Max_{-1}(s^* \cup \set{m})_{|m-1} \neq
\Max_{-1}(s^* \cup \set{m})_{|m-1}$.
\begin{flushright}
$\qed$
\end{flushright}
\end{proof}
It is interesting to ask whether $\Downarrow$ or $\Max$ are in fact
definable by $(\cPlus, \cTimes, \Card, \cP)$-circuits, or even by
$(\cPlus, \Card)$-circuits.

\section{Conclusion}
\label{sec:conclusion}
In this paper, we have investigated the expressive power of numerical
set- \linebreak 
expressions over various families $\cO$ of set-functions (together
with the usual Boolean operators and singleton constants). We called
such expressions\linebreak
 $\cO$-circuits. Any variable-free $\cO$-circuit
defines a set of numbers, and any $\cO$-circuit with $n$ variables
defines a function $\bbP^n \rightarrow \bbP$, where $\bbP$ is the
power set of the numbers. Of particular interest are the operations
$\cPlus$ and $\cTimes$ which result from lifting ordinary addition and
multiplication to the level of sets. We called circuits featuring the
operator $\cPlus$ {\em additive circuits}, and those featuring both
$\cPlus$ and $\cTimes$, {\em arithmetic circuits}.

We considered the definability of functions by additive and arithmetic
circuits, with particular reference to the functions $\Downarrow$,
$\Max$ and $\Card$, as well as the predicates $\varepsilon$ (the test
for emptiness) and $\Fin$ (the test for finiteness).  We showed that
the functions of $\Downarrow$, $\Max$ and $\Card$ cannot be defined by
arithmetic circuits, even when arbitrary predicate gates are
available. We showed further that various predicates, including
$\Fin$, cannot be defined by any arithmetic circuits extended with
`less discontinuous' predicates, such as $\varepsilon$.

We also established related results on the definability of numerical
functions (functions $\N^n \rightarrow \N$) by means of additive and
arithmetic circuits.  We showed that no arithmetic circuit could
define any `regressive' function, even when arbitrary predicate gates
are available. We further showed that no additive circuit could
define any `semi-regressive' function, even when arbitrary predicate
gates and $\Downarrow$ are available; however we gave an example of a
semi-regressive function defined by an arithmetic circuit.  Finally,
we noted that all numerical functions defined by additive circuits are
linearly bounded, but gave an example of a numerical function defined
by an arithmetic circuit that is not polynomially bounded.

We considered the effect of adding gates computing the functions
$\Downarrow$, $\Max$ and $\Card$ to both additive and arithmetic
circuits.  We showed that, for both additive and arithmetic circuits,
$\Max$ is at least as expressive as $\Downarrow$. We further showed
that, for additive circuits, $\Max$ is in fact strictly more
expressive than $\Downarrow$, and that for arithmetic circuits, these gates
have the same expressive power. We showed that, even for arithmetic
circuits, these gates do not enable the function $\Card$ to be
defined. We finished with a partial result on the limited expressive
power of additive circuits extended with $\Card$-gates.
\section*{Acknowledgement}
\noindent
The authors gratefully acknowledge the support of the EPSRC (grant
ref.\linebreak EP/F069154).  The second author additionally
acknowledges the support of the NSERC, Canada. The authors would also like
to thank Mr.~Adam Trybus and Mr.~Yavor Nenov for help in preparing the
manuscript.
\bibliographystyle{plain}
\bibliography{mcf}
\end{document}